%

\RequirePackage{etoolbox}
\csdef{input@path}{{style/}{graphics/}}
\documentclass[aap,MSNbibl,seceqn,dvips]{arximspdf}
\makeatletter
   \@ifpackageloaded{graphicx}{}{\usepackage{graphicx}}
\makeatother
\usepackage{mathrsfs}
\usepackage{accents}
\usepackage{graphicx}

%

\doi{10.1214/14-AAP1026} 
\volume{25}
\issue{3}
\pubyear{2015}
\firstpage{1383}
\lastpage{1419}

\makeatletter
\newcommand{\unde}{\underaccent{\bar}}
\newcommand{\AEE}{\mathrm{AE}}
\newtheorem{lemma}{Lemma}[section]
\newtheorem{theorem}{Theorem}[section]
\newproclaim{definition}{Definition}[section]
\newtheorem{proposition}{Proposition}[section]
\newproclaim{assumption}{Assumption}[section]
\newproclaim{remark}{Remark}
\makeatother

\begin{document}
\begin{frontmatter}

\title{Utility maximization with addictive consumption habit formation in incomplete semimartingale~markets\thanksref{T1}}
\runtitle{Habit formation}

\begin{aug}
\author[A]{\fnms{Xiang}~\snm{Yu}\corref{}\ead[label=e1]{xymath@umich.edu}}
\runauthor{X. Yu}
\affiliation{University of Michigan}
\address[A]{Department of Mathematics\\
University of Michigan\\
530 Church Street\\
Ann Arbor, Michigan 48109\\
USA\\
\printead{e1}} 
\end{aug}
\thankstext{T1}{This work is part of the author's Ph.D. dissertation at
the University of Texas at Austin and is
partially supported by NSF Grant under the
award number DMS-09-08441.}

\received{\smonth{7} \syear{2012}}
\revised{\smonth{11} \syear{2013}}

%
\begin{abstract}
This paper studies the continuous time utility maximization problem on
consumption with addictive habit formation in incomplete semimartingale
markets. Introducing the set of auxiliary state processes and the
modified dual space, we embed our original problem into a
time-separable utility maximization problem with a shadow random
endowment on the product space $\mathbb{L}_{+}^{0}(\Omega\times
[0,T],\mathcal{O},\overline{\mathbb{P}})$. Existence and uniqueness of the
optimal solution are established using convex duality approach, where
the primal value function is defined on two variables, that is, the
initial wealth and the initial standard of living. We also provide
sufficient conditions on the stochastic discounting processes and on
the utility function for the well-posedness of the original
optimization problem. Under the same assumptions, classical proofs in
the approach of convex duality analysis can be modified when the
auxiliary dual process is not necessarily integrable.
\end{abstract}

%
\begin{keyword}[class=AMS]
\kwd[Primary ]{91G10}
\kwd{91B42}
\kwd[; secondary ]{91G80}
\end{keyword}
\begin{keyword}
\kwd{Time nonseparable utility maximization}
\kwd{consumption habit formation}
\kwd{auxiliary processes}
\kwd{convex duality}
\kwd{incomplete markets}
\end{keyword}
\end{frontmatter}

\section{Introduction}\label{sec1}

During the past decades, the time separable von Neu\-mann--Morgenstern
preferences on consumption have been observed to be inconsistent with
some empirical evidences. For instance, the well-known magnitude of the
equity premium ({Mehra and Prescott} \cite{Mehra}) cannot be
reconciled with the preference $\mathbb{E}[\int_{0}^{T}\mathit
{U}(t,c_t)\,dt]$ where the instantaneous utility function $\mathit{U}$ is
only \mbox{derived} from the consumption rate process. As an alternative
modeling tool, \textit{linear addictive} habit formation preference has
attracted a lot of attention and has been actively investigated in
recent years. This new preference is defined by $\mathbb{E}[\int_0^T\mathit{U}(t,c_t-Z(c)_t)\,dt]$, where $\mathit{U}\dvtx [0,T]\times
(0,\infty)\rightarrow\mathbb{R}$ and the additional accumulative
process $Z(c)_t$, called the \textit{habit formation} or \textit{the
standard of living} process, describes the consumption history impact.
To be more precise, $Z(c)_t$ is the solution of the following recursive
equation:
\begin{eqnarray*}
dZ(c)_t &=&\bigl(\delta_tc_t -
\alpha_tZ(c)_t\bigr)\,dt,
\\
Z(c)_0 &=&z,
\end{eqnarray*}
where the discounting factors $\alpha_t$ and $\delta_t$ are assumed to
be nonnegative optional processes, and the given real number $z\geq0$
is called the \textit{initial habit} or the \textit{initial standard of
living}. Moreover, the consumption habits are assumed to be \textit
{addictive} in the sense that $c_t\geq Z(c)_t$ for all time $t\in
[0,T]$. Compared to the time separable case, a small drop in
consumption may cause large fluctuation in consumption net of the
subsistence level due to the standard of living constraint. The habit
formation preference can possibly explain sizable excess returns on
risky assets in equilibrium models even for moderate values of the
degree of risk aversion. Based on this, a vast literature recommends
this time nonseparable preference as the new economic paradigm. We
refer readers to, for instance, {Constantinides} \cite
{constantinides1988habit}, {Samuelson} \cite{RePEccc} and
{Campbell and Cochrane} \cite{Campbell}.

The study of habit formation in modern economics dates back to
{Hicks} \cite{Hicks} in $1965$ and {Ryder and Heal} \cite{Heal}
in $1973$. More recently, there have been some important contributions
in complete It\^{o} processes markets; see {Detemple and
Zapatero} \cite{detemple91,detemple92}, {Schroder and
Skiadas} \cite{Schroder01072002}, {Munk} \cite{Munk},
{Detemple and Karatzas} \cite{Detemple2003265} and {Englezos and
Karatzas} \cite{Eng09}. Several pioneering work have derived the
explicit feedback form of the optimal policies under different
assumptions and market models. However, in the words by
{Englezos and Karatzas} \cite{Eng09}, ``The existence
of an optimal portfolio/consumption pair in an incomplete market is an
open question\,\ldots, and new methodologies are needed to handle the
problem.'' Therefore, in this paper, we are interested
in the general incomplete semimartingale framework and aim to prove the
existence and uniqueness of the optimal solution to this path dependent
optimization problem.

The convex duality approach plays an important role in the treatment of
utility maximization problems in incomplete markets. To list a very
small subset of the existing literature, we refer to Karatzas et al.
\cite{KKK}, Kramkov and Schachermayer \cite{kram99,kram03},
Cvitanic, Schachermayer and Wang~\cite{Cvitanic}, Karatzas and {\v
{Z}}itkovi{\'c} \cite{KarZit03}, Hugonnier and Kramkov \cite{kram04}
and {\v{Z}}itkovi{\'c} \cite{Zit02,Zit05}.

Typically, the critical step to build conjugate duality is to define
the dual space as a proper extension of space $\mathcal{M}$, which is
the set of equivalent local martingale measure density processes. The
first natural choice is the bipolar set of the space $\mathcal{M}$,
which is the smallest convex, closed and solid set containing the set
$\mathcal{M}$. Kramkov and Schachermayer \cite{kram99,kram03}
and {\v{Z}}itkovi{\'c} \cite{Zit02}, proved that this bipolar set can
be characterized as the solid hull of the set $\mathcal{Y}(y)$, which
is defined as the set of supermartingale deflators
\begin{eqnarray*}
\mathcal{Y}(y)&=& \bigl\{Y |  Y_{0}=y, Y_t>0, t\in[0,T]
\mbox{ and } XY=(X_{t}Y_{t})_{0\leq t\leq T}
\\
&&\hspace*{59pt} \mbox{is a supermartingale for each } X\in\mathcal{X}(x) \bigr\}.
\end{eqnarray*}
Here $\mathcal{X}(x)$ denotes the set of accumulated gains/losses
processes under some admissible portfolios with initial endowment less
than or equal to $x$. However, according to the definition of habit
formation process $Z(c)_t$, if we derive the naive dual problem using
the Legendre--Fenchel transform and the first order condition, we
arrive at
\begin{eqnarray*}
&& \mathop{{\inf}}_{y>0, Y\in\mathcal{Y}(y)}\mathbb{E} \biggl[  \int_{0}^{T}
\mathit{V} \biggl(Y_{t}+\delta_{t}\mathbb{E} \biggl[\int
_{t}^{T}e^{\int_{t}^{s}(\delta_{v}-\alpha_{v})\,dv}Y_{s}\,ds \Big|\mathcal
{F}_{t} \biggr] \biggr)\,dt \biggr]
\\
&&\qquad {} -z\mathbb{E} \biggl[\int_{0}^{T}e^{\int_{0}^{t}(\delta_{v}-\alpha
_{v})\,dv}Y_{t}\,dt
\biggr].
\end{eqnarray*}
The first mathematical difficulty is the extra integral $\mathbb
{E}
[\int_{0}^{T}e^{\int_{0}^{t}(\delta_{v}-\alpha_{v})\,dv}Y_{t}\,dt ]$,
from which we can see that $\mathcal{Y}(y)$ is not the appropriate
space to show the existence of the optimal dual solution. However, it
still reminds us to invoke the general treatment of random endowment
developed by Cvitanic, Schachermayer and Wang~\cite{Cvitanic}, Karatzas
and {\v{Z}}itkovi{\'c} \cite{KarZit03} and {\v{Z}}itkovi{\'c} \cite
{Zit05}. They proposed another extension of the set $\mathcal{M}$,
which is now considered as the set of equivalent local martingale
measures, to the set $\mathcal{D}$ of bounded finitely additive
measures. Nevertheless, their approach is inadequate to deal with the
first term of the dual problem, when the conditional integral part
$\mathbb{E} [\int_{t}^{T}e^{\int_{t}^{s}(\delta_{v}-\alpha
_{v})\,dv}Y_{s}\,ds |\mathcal{F}_{t} ]$ in the conjugate function
$\mathit{V}$ is taken into account.

In order to avoid the complexity of the path-dependence, we propose the
transform from the consumption rate process $c_{t}$ to the auxiliary
process $\tilde{c}_t=c_{t}-Z(c)_t$, so that the primal utility
maximization problem becomes time separable with respect to the process
$\tilde{c}_{t}$. This substitution idea from $c_{t}$ to $\tilde{c}_{t}$
appeared first in the market isomorphism result for complete markets by
Schroder and Skiadas \cite{Schroder01072002}. And meanwhile, for each
equivalent local martingale measure density process $Y\in\mathcal{M}$,
we define the auxiliary dual process $\Gamma_{t}$ exactly by
\[
\Gamma_{t}\triangleq Y_{t}+\delta_{t}\mathbb{E}
\biggl[\int_{t}^{T}e^{\int
_{t}^{s}(\delta_{v}-\alpha_{v})\,dv}Y_{s}\,ds
\Big|\mathcal{F}_{t} \biggr]\qquad \forall t\in[0,T].
\]
The dual problem can therefore be formulated in terms of auxiliary
process $\Gamma_{t}$ instead of $Y_{t}$ so that the path dependence of
$Y_{t}$ can also be hidden in the definition of process $\Gamma_{t}$.

By introducing the process $\tilde{w}_{t}=e^{\int_{0}^{t}(-\alpha
_{v})\,dv}$, one can shift the integral\break $\mathbb{E} [\int_{0}^{T}e^{\int_{0}^{t}(\delta_{v}-\alpha_{v})\,dv}Y_{t}\,dt ]$ to the
integral $\mathbb{E} [\int_{0}^{T}\tilde{w}_t\Gamma_{t}\,dt
]$ with
respect to its auxiliary process $\Gamma_{t}$. With the aid of this
equality, we can treat the extra exogenous random term $\tilde{w}_{t}$
as the shadow random endowment density process and define the dual
functional on the properly modified space of $\Gamma_{t}$ instead of
$Y_t$. By enlarging the effective domain of values for $x$ and $z$, the
original utility maximization problem with habit formation can be
embedded into the framework of Hugonnier and Kramkov \cite{kram04} as
an abstract time-separable utility maximization problem on the product
space.

On the other hand, we are facing some troubles in applying the
classical duality results since the auxiliary process $\Gamma_t$ is not
integrable. For instance, to show the existence of the dual optimizer,
the trick of applying the de la Vall\'{e}e--Poussin theorem in the
proof of Lemma~3.2 in Kramkov and Schachermayer \cite{kram99} does
not work. And the argument of contradiction in the proof of Lemma $1$
in Kramkov and Schachermayer \cite{kram03} using the subsequence
splitting lemma will also fail by observing that constants may not be
contained in the corresponding space. Therefore, we impose the
additional sufficient conditions on habit formation discounting factors
$\alpha_t$ and $\delta_t$; see Assumption \ref{asssss} to guarantee
the well-posedness of the primal optimization problem. We also ask for
reasonable asymptotic elasticity conditions on utility functions
$\mathit{U}$ both at $x\rightarrow0$ and $x\rightarrow\infty$ for the
validity of several key assertions of our main results to hold true.

The rest of this paper is organized as follows: Section \ref{section2} introduces the financial market and consumption\vspace*{2pt} habit
formation process. In Section \ref{section3}, we define the auxiliary
process\vspace*{2pt} space $\overline{\mathcal{A}}(x,z)$, the enlarged space
$\widetilde{\mathcal{A}}(x,z)$ and the auxiliary dual space $\widetilde{\mathcal
{M}}$. The original path-dependent utility maximization problem is
embedded into an abstract time separable optimization problem with the
shadow random endowment. Section \ref{section4} is devoted to the
formulation of the two-dimensional dual problem over the properly
enlarged dual space $\widetilde{\mathcal{Y}}(y,r)$ such that the shadow
random endowment part can be hidden, and our main results are stated at
the end. Section \ref{section5} contains detailed proofs of our main
theorems.

\section{Market model}\label{section2}

\subsection{The financial market model}
We consider a financial market with $d\in\mathbb{N}$ risky assets
modeled by a $d$-dimensional semimartingale
%
\begin{equation}
S=\bigl(S_{t}^{(1)},\ldots, S_{t}^{(d)}
\bigr)_{t\in[0,T]}
\end{equation}
on a given filtered probability space $(\Omega, \mathcal{F}, \mathbb
{F}=(\mathcal{F}_{t})_{0\leq t\leq T}, \mathbb{P})$, where the
filtration $\mathbb{F}$ satisfies the usual conditions, and the
maturity time is given by $T$. To simplify the notation, we take
$\mathcal{F}=\mathcal{F}_T$. We make the standard assumption that there
exists one riskless bond $S_{t}^{(0)}\equiv1$, $\forall t\in[0,T]$, which
amounts to considering $S_{t}^{(0)}$ as the num\'{e}raire asset.

The portfolio process $H=(H_{t}^{(1)},\ldots, H_{t}^{(d)} )_{t\in
[0,T]}$ is a predictable $S$-in\-tegrable process representing the number
of shares of each risky asset held by the investor at time $t\in
[0,T]$. The accumulated gains/losses process of the investor under his
trading strategy $H$ by time $t$ is given by
%
\begin{equation}
X_{t}^{H}=(H\cdot S)_{t}=\displaystyle\sum
\limits
_{k=1}^{d}\int_{0}^{t}H_{u}^{(k)}\,dS_{u}^{(k)},\qquad
t\in[0,T].
\end{equation}

\subsection{Admissible portfolios and consumption habit formation}
The portfolio process $(H_{t})_{t\in[0,T]}$ is called \textit
{admissible} if there exists a constant bound $a\in\mathbb{R}$ such
that $X_{t}^{H}\geq a$, a.s. for all $t\in[0,T]$.

Now, given the initial wealth $x>0$, the agent will also choose an
intermediate consumption plan during the whole investment horizon, and
we denote the consumption rate process by $c_{t}$. The resulting
self-financing \textit{wealth process} $(W_{t}^{x,H,c})_{t\in[0,T]}$ is
given by
%
\begin{equation}
W_{t}^{x,H,c}\triangleq x+(H\cdot S)_{t}-\int
_{0}^{t}c_{s}\,ds,\qquad t\in[0,T].
\end{equation}

Besides of the wealth process, as we defined in the \hyperref[sec1]{Introduction}, the
associated \textit{consumption habit formation} process $Z(c)_t$ is
given equivalently by the following exponentially weighted average of
agent's past consumption integral and the initial habit:
%
\begin{equation}
\label{Z} Z(c)_t=ze^{-\int_{0}^{t}\alpha_{v}\,dv}+\int_{0}^{t}
\delta _{s}e^{-\int
_{s}^{t}\alpha_{v}\,dv}c_{s}\,ds.
\end{equation}
Here discounting factors $\alpha_{t}$ and $\delta_{t}$ measure,
respectively, the persistence of the initial habits level and the
intensity of consumption history. In this paper, we shall be mostly
interested in the general case when discounting factors $\alpha_t$ and
$\delta_t$ are stochastic processes which are allowed to be unbounded.
However, for technical reasons, we will assume that $\int_0^{t}(\delta
_u-\alpha_u)\,du<\infty$ a.s. for each $t\in[0,T]$.

Throughout this paper, we make the assumption that the consumption
habit is addictive, that is, $c_{t}\geq Z(c)_{t}$, $\forall t\in
[0,T]$, which is to say, the investor's current consumption rate shall
never fall below \textit{the standard of living} process.

A consumption process $(c_{t})_{t\in[0,T]}$ is defined to be
$(x,z)$-\textit{financeable} if there exists an admissible portfolio
process $(H_{t})_{t\in[0,T]}$ such that $W_{t}^{x,H,c}\geq0$, a.s. for
$\forall t\in[0,T]$ and the addictive habit formation constraint
$c_{t}\geq Z(c)_{t}$, a.s. for $\forall t\in[0,T]$ holds. The class of
all $(x,z)$-financeable consumption rate processes will be denoted by
$\mathcal{A}(x,z)$, for $x>0$, $z\geq0$.

\subsection{Absence of arbitrage}
A probability measure $\mathbb{Q}$ is called the \textit{equivalent
local martingale measure} if it is equivalent to $\mathbb{P}$ and if
$X_{t}^{H}$ is a local martingale under $\mathbb{Q}$. We denote by
$\mathcal{M}$ the family of equivalent local martingale measures, and
in order to rule out the arbitrage opportunities in the market, we
assume that
%
\begin{equation}
\label{no-arbitrage} \mathcal{M}\neq\varnothing.
\end{equation}
See {Delbaen and Schachermayer} \cite{Schachermayer94} and
\cite
{schachermayer98} for comprehensive discussions on the topic of no
arbitrage.

Define the RCLL process $Y^{\mathbb{Q}}$ by
\[
Y_{t}^{\mathbb{Q}}=\mathbb{E} \biggl[\frac{d\mathbb{Q}}{d\mathbb
{P}} \bigg|
\mathcal{F}_{t} \biggr]
\]
for the $\mathbb{Q}\in\mathcal{M}$. Then $Y^{\mathbb{Q}}$ is called an
equivalent local martingale measure density and with a slight abuse of
notation, we denote $\mathcal{M}$ also as the set of all equivalent
local martingale density processes.

The celebrated optional decomposition theorem (see {Kramkov}
\cite{MR1402653}) enables us to characterize the $(x,z)$-financeable
consumption process by the following budget constraint condition:

%
\begin{proposition}\label{thm1}
The process $(c_{t})_{t\in[0,T]}$ is $(x,z)$-financeable if and only
if $c_{t}\geq Z(c)_{t}$, $\forall t\in[0,T]$ and
%
\begin{equation}
\label{budgetconst} \mathbb{E} \biggl[\int_{0}^{T}c_{t}Y_{t}\,dt
\biggr] \leq x\qquad  \forall Y\in\mathcal{M}.
\end{equation}
\end{proposition}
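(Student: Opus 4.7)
The equivalence in Proposition \ref{thm1} carries the addictive requirement $c_t \geq Z_t$ on both sides, so the substantive content reduces to showing that a nonnegative consumption rate $c$ (note $c \geq Z \geq 0$ automatically, since $z \geq 0$ and $\alpha,\delta$ are nonnegative) is financed by some admissible $H$ with $W^{x,H,c} \geq 0$ if and only if $\mathbb{E}[\int_0^T c_t Y_t\, dt] \leq x$ for every $Y \in \mathcal{M}$. The plan is to prove necessity from the supermartingale property of bounded-below local martingales, and sufficiency from the Optional Decomposition Theorem of \textit{Kramkov} \cite{MR1402653}.

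\textbf{Necessity.} Fix $Y \in \mathcal{M}$ and let $\mathbb{Q}$ be the associated equivalent local martingale measure. Admissibility makes $(H \cdot S)$ a $\mathbb{Q}$-local martingale bounded below by a constant, hence a $\mathbb{Q}$-supermartingale, so $\mathbb{E}^{\mathbb{Q}}[(H \cdot S)_T] \leq 0$. From $W_T^{x,H,c} \geq 0$ we have $\int_0^T c_s\, ds \leq x + (H \cdot S)_T$, and taking the $\mathbb{Q}$-expectation yields $\mathbb{E}[Y_T \int_0^T c_s\, ds] \leq x$. Since $c_s Y_s \geq 0$, Fubini permits the interchange of time- and path-integrals, and the martingale identity $\mathbb{E}[Y_T \mid \mathcal{F}_s] = Y_s$ then converts the previous inequality into $\mathbb{E}[\int_0^T c_s Y_s\, ds] = \mathbb{E}[Y_T \int_0^T c_s\, ds] \leq x$, as desired.

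\textbf{Sufficiency and main obstacle.} Assume the budget inequality. Introduce the superhedging candidate
\begin{equation*}
V_t \;:=\; \operatorname*{ess\,sup}_{Y \in \mathcal{M}} \; Y_t^{-1}\, \mathbb{E}\Big[\int_t^T c_s Y_s\, ds \,\Big|\, \mathcal{F}_t\Big],
\end{equation*}
so $V_0 \leq x$ by hypothesis. Using the $m$-stability of $\mathcal{M}$ under concatenation (pasting), one checks that $Y_t V_t + \int_0^t c_s Y_s\, ds$ is a $\mathbb{P}$-supermartingale for every $Y \in \mathcal{M}$, equivalently $V + \int_0^{\cdot} c_s\, ds$ is a $\mathbb{Q}$-supermartingale under every $\mathbb{Q} \in \mathcal{M}$. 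The Optional Decomposition Theorem then produces an admissible portfolio $H$ and a nondecreasing optional process $A$ with $A_0 = 0$ such that $V_t + \int_0^t c_s\, ds = V_0 + (H \cdot S)_t - A_t$ for all $t$. It follows that $W_t^{x,H,c} = x + (H \cdot S)_t - \int_0^t c_s\, ds = (x - V_0) + V_t + A_t \geq 0$, so $c \in \mathcal{A}(x,z)$. The principal technical step is the verification of the supermartingale property of the superhedging candidate, which is exactly what licenses invoking the Optional Decomposition Theorem; this rests on the stability of $\mathcal{M}$ under pasting, a classical feature of general semimartingale markets. The addictive habit constraint $c \geq Z$ plays no active role in either direction and is simply transported from hypothesis to conclusion.
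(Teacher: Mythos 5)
Your proof is correct and takes essentially the same route as the paper: the paper states this proposition without an explicit proof, attributing it directly to Kramkov's Optional Decomposition Theorem, and your argument (the supermartingale bound of the bounded-below gains process under each equivalent local martingale measure for necessity, and the superhedging value process together with the optional decomposition for sufficiency) is precisely the standard argument that citation stands for. The details you leave implicit --- the c\`adl\`ag modification of the essential-supremum process and the pasting (m-stability) argument giving the supermartingale property --- are the classical ingredients of that theorem's use and do not affect correctness.
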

%

\subsection{The utility function}
The individual investor's preference is represented by a utility
function $\mathit{U}\dvtx [0,T]\times(0, \infty)\rightarrow\mathbb{R}$,
such that, for every $x>0$, $\mathit{U}(\cdot,x)$ is continuous on
$[0,T]$, and for every $t\in[0,T]$, the function $\mathit{U}(t,\cdot)$
is strictly concave, strictly increasing, continuously differentiable
and satisfies the Inada conditions,
%
\begin{equation}
\label{Inada} \mathit{U}'(t,0)\triangleq\mathop{{
\lim}}_{x\rightarrow0}\mathit {U}'(t,x)=\infty, \qquad\mathit{U}'(t,
\infty)\triangleq\mathop{{\lim }}_{x\rightarrow\infty}\mathit{U}'(t,x)=0,
\end{equation}
where\vspace*{1pt} $\mathit{U}'(t,x)\triangleq\frac{\partial}{\partial x}\mathit
{U}(t,x)$. For each $t\in[0,T]$, we extend the definition of the
utility function by $\mathit{U}(t,x)=-\infty$ for all $x<0$, which is
equivalent to the addictive habit formation constraint $c_t\geq Z(c)_t$
when the utility function is defined on the difference between the
consumption rate process $c_t$ and the habit formation process
$Z(c)_t$.

According to these assumptions, the inverse $\mathit{I}(t,\cdot
)\dvtx \mathbb
{R}_{+}\rightarrow\mathbb{R}_{+}$ of the function $\mathit
{U}'(t,\cdot
)$ exists for every $t\in[0,T]$, and is continuous and strictly
decreasing. The convex conjugate of the utility function is defined by
\[
\mathit{V}(t,y)\triangleq\mathop{{\sup}}_{x>0}\bigl\{\mathit
{U}(t,x)-xy\bigr\},\qquad y>0.
\]

Following the asymptotic elasticity condition on utility functions
coined by Kramkov and Schachermayer \cite{kram99} (see also Karatzas
and {\v{Z}}itkovi\'{c} \cite{KarZit03}), we make assumptions on the
asymptotic behavior of $\mathit{U}$ at both $x=0$ and $x=\infty$ for
future purposes.

%
\begin{assumption}\label{ASSUV}
The utility function $\mathit{U}$ satisfies the reasonable asymptotic
elasticity condition both at $x=\infty$ and $x=0$, that is,
%
\begin{equation}
\label{assAEU} \AEE_{\infty}[\mathit{U}] =\mathop{{\lim\sup}}_{x\rightarrow\infty
}
\biggl(\sup_{t\in[0,T]}\frac{x\mathit{U}'(t,x)}{\mathit{U}(t,x)} \biggr)<1
\end{equation}
and
%
\begin{equation}
\label{assAEV} \AEE_{0}[\mathit{U}] =\mathop{{\lim\sup}}_{x\rightarrow0}
\biggl(\sup_{t\in
[0,T]}\frac{x\mathit{U}'(t,x)}{|\mathit{U}(t,x)|} \biggr)<\infty.
\end{equation}
Moreover, in order to get some inequalities uniformly in time $t$, we
shall assume
%
\begin{equation}
\label{up} \lim_{x\rightarrow\infty} \Bigl(\inf_{t\in[0,T]}
\mathit {U}(t,x) \Bigr)>0
\end{equation}
and
%
\begin{equation}
\label{down} \lim_{x\rightarrow0} \Bigl(\sup_{t\in[0,T]}
\mathit{U}(t,x) \Bigr)<0.
\end{equation}
\end{assumption}

\begin{remark}
Many well-known utility functions satisfy reasonable asymptotic
elasticity conditions (\ref{assAEU}) and (\ref{assAEV}), for
example, the discounted log utility function $\mathit
{U}(t,x)=e^{-\beta
t}\log(x)$ or discounted power utility function $\mathit
{U}(t,x)=e^{-\beta t}\frac{x^{p}}{p}$ ($p<1$ and $p\neq0$), for some
constant $\beta>0$. However, it is also easy to check that the utility
function $\mathit{U}(t,x)=-e^{{1}/{x}}$ does not satisfy condition
(\ref{assAEV}), and the utility function $\mathit{U}(t,x)=\frac
{x}{\log{x}}$ does not satisfy the condition (\ref{assAEU}).
\end{remark}

%
\begin{remark}
If the utility function satisfies the lower bound assumption $\mathop
{{\inf}}_{t\in[0,T]}\mathit{U}(t,0)>-\infty$, then condition (\ref{assAEV}) is automatically verified, and if the utility function
satisfies the upper bound assumption $\mathop{{\sup}}_{t\in
[0,T]}\mathit
{U}(t,\infty)<\infty$, condition (\ref{assAEU}) holds true.
\end{remark}

Next, some technical results give the equivalent characterizations of
reasonable asymptotic elasticity conditions (\ref{assAEU}) and
(\ref{assAEV}). The proof is based on the fact that $-V$ is a
concave function and on similar arguments in Lemma $6.3$ of Kramkov and
Schachermayer \cite{kram99}; see also Proposition $3.7$ of Karatzas and
{\v{Z}}itkovi\'{c} \cite{KarZit03}.

%
\begin{lemma}\label{Tech}
Let $\mathit{U}(t,x)$ be a utility function satisfying Assumption
\ref{ASSUV}. We have $\AEE_{0}[\mathit{U}]<\infty$ if and only if
$\AEE_{\infty
}[\mathit{V}]<1$, where we define
%
\begin{equation}
\label{assistTech} \AEE_{\infty}[\mathit{V}] =\mathop{{\lim\sup}}_{y\rightarrow\infty
}
\biggl(\sup_{t\in[0,T]}\frac{y\mathit{V}'(t,y)}{\mathit{V}(t,y)} \biggr)<1.
\end{equation}
In each of the subsequent assertions, the infimum of $\gamma_1>0$, for
which these assertions hold true, equals the reasonable asymptotic
elasticity $\AEE_{\infty}[\mathit{U}]$, and the infimum of $\gamma_2>0$
equals the reasonable asymptotic elasticity $\AEE_{\infty}[\mathit{V}]$.
\begin{longlist}[(iii)]
\item[(i)] There are $x_{0}>0$ and $y_0>0$ for all $t\in[0,T]$ s.t.
\begin{eqnarray*}
\mathit{U}(t,\lambda x)&<&\lambda^{\gamma_1}\mathit{U}(t,x)\qquad \mbox {for }
\lambda>1,x\geq x_{0};
\\
\mathit{V}(t,\lambda y)&>&\lambda ^{\gamma_2}
\mathit{V}(t,y)\qquad\mbox{for }\lambda>1, y\geq y_0.
\end{eqnarray*}
\item[(ii)] There are $x_{0}>0$ and $y_0>0$ for all $t\in[0,T]$ s.t.
\begin{eqnarray*}
\mathit{U}'(t,x) &<& \gamma_1\frac{\mathit{U}(t,x)}{x}\qquad\mbox
{for } x\geq x_{0};
\\
\mathit{V}'(t,y)&>& \gamma_2 \frac{\mathit{V}(t,y)}{y}\qquad \mbox{for } y\geq y_0.
\end{eqnarray*}
\item[(iii)] There are $x_0>0$ and $y_{0}>0$ for all $t\in[0,T]$ s.t.
\begin{eqnarray*}
\mathit{V}(t,\mu y)&<&\mu^{-{\gamma_1}/{(1-\gamma_1)}}\mathit {V}(t,y)\qquad\mbox{for } 0<\mu<1,
0<y\leq y_{0};
\\
\mathit{U}(t,\mu x)&>&\mu^{-{\gamma_2}/{(1-\gamma_2)}}\mathit {U}(t,x)\qquad\mbox{for } 0<\mu<1,
0<x\leq x_0.
\end{eqnarray*}
\item[(iv)] There are $x_0>0$ and $y_{0}>0$ for all $t\in[0,T]$ s.t.
\begin{eqnarray*}
-\mathit{V}'(t,y)&<& \biggl(\frac{\gamma_1}{1-\gamma_1} \biggr)
\frac{\mathit
{V}(t,y)}{y}\qquad\mbox{for } 0<y\leq y_{0};
\\
-\mathit{U}'(t,x)&>& \biggl(\frac{\gamma_2}{1-\gamma_2} \biggr)
\frac
{\mathit
{U}(t,x)}{x}\qquad\mbox{for } 0<x\leq x_0.
\end{eqnarray*}
\end{longlist}
\end{lemma}
%

\section{A new characterization of financeable consumption
processes}\label{section3}

\subsection{Functional set up}
In the spirit of {Bouchard and Pham} \cite{pham} who treated the
wealth dependent problem (see also {{\v{Z}}itkovi\'{c}} \cite
{Zit05} on consumption and endowment with stochastic clock), we denote
$\mathcal{O}$ as $\sigma$-algebra of optional sets relative to the
filtration $(\mathcal{F}_{t})_{t\in[0,T]}$, and let $d\overline{\mathbb
{P}}=dt\times d\mathbb{P}$ be the measure on the product space
$(\Omega
\times[0,T],\mathcal{O})$ defined as
%
\begin{equation}
\overline{\mathbb{P}}[A]=\mathbb{E}^{\mathbb{P}} \biggl[\int_{0}^{T}
\mathbf {1}_{A}(t,\omega)\,dt \biggr]\qquad\mbox{for } A\in\mathcal{O}.
\end{equation}
We denote by $\mathbb{L}^{0}(\Omega\times[0,T],\mathcal{O},\overline
{\mathbb
{P}})$ ($\mathbb{L}^{0}$ for short) the set of all random variables on
the product space $\Omega\times[0,T]$ with\vspace*{2pt} respect to the optional
$\sigma$-algebra $\mathcal{O}$ endowed with the topology of convergence
in measure $\overline{\mathbb{P}}$. And from now on, we shall identify the
optional stochastic process $(Y_t)_{t\in[0,T]}$ with the random
variable $Y\in\mathbb{L}^{0}(\Omega\times[0,T],\mathcal{O},\overline
{\mathbb
{P}})$. We also define the positive orthant $\mathbb{L}_{+}^{0}(\Omega
\times[0,T],\mathcal{O},\overline{\mathbb{P}})$ ($\mathbb{L}_{+}^{0}$ for
short) as the set of $Y=Y(t,\omega)\in\mathbb{L}^{0}$ such that
\[
Y\geq0,\qquad \overline{\mathbb{P}}\mbox{-a.s.}
\]

Endow $\mathbb{L}_{+}^{0}$ with the bilinear form valued in $[0,\infty
]$ as
\[
\langle X, Y\rangle=\mathbb{E} \biggl[\int_{0}^{T}X_{t}Y_{t}\,dt
\biggr]\qquad \mbox{for all } X,Y\in\mathbb{L}_{+}^{0}.
\]

\subsection{Path-dependence reduction by auxiliary processes}
At this point, we are able to define the set of all $(x,z)$-financeable
consumption rate processes as a set of random variables on the product
space $(\Omega\times[0,T],\mathcal{O},\overline{\mathbb{P}})$, and
Proposition \ref{thm1} states that
%
\begin{equation}
\mathcal{A}(x,z)= \bigl\{c\in\mathbb{L}_{+}^{0}\dvtx
c_t\geq Z(c)_t,\ \forall t\in[0,T]\mbox{ and }
\langle c,Y\rangle\leq x,\ \forall Y\in \mathcal{M} \bigr\}.\hspace*{-25pt}
\end{equation}

However, the family $\mathcal{A}(x,z)$ may be empty for some values
$x>0$, $z\geq0$. We shall restrict ourselves to the \textit{effective
domain} $\overline{\mathcal{H}}$ which is defined as the union of the
\textit
{interior} of set such that $\mathcal{A}(x,z)$ is not empty, and the
boundary $\{x>0, z=0\}$
%
\begin{equation}
\overline{\mathcal{H}}\triangleq\int \bigl\{(x,z)\in(0,\infty)\times [0,\infty )\dvtx
\mathcal{A}(x,z)\neq\varnothing \bigr\}\cup(0,\infty)\times\{0\}.
\end{equation}
From the definition, $\overline{\mathcal{H}}$ includes the special case of
zero initial habit, that is, $z=0$.

Before we state the next result, we shall first impose some additional
conditions on the discounting factors $\alpha_{t}$ and $\delta_{t}$,
which are essential for the well-posedness of the primal utility
optimization problem:

\begin{assumption}\label{asssss}
We assume that nonnegative optional processes $\alpha_{t}$ and $\delta
_{t}$ satisfy
%
\begin{equation}
\label{ass31} \mathop{{\sup}}_{Y\in\mathcal{M}}\mathbb{E} \biggl[\int
_{0}^{T}e^{\int
_{0}^{t}(\delta_{v}-\alpha_{v})\,dv}Y_{t}\,dt \biggr]<
\infty,
\end{equation}
and there exists a constant $\bar{x}>0$ such that
%
\begin{equation}
\label{ass41} \mathbb{E} \biggl[\int_{0}^{T}
\mathit{U}^{-}\bigl(t,\bar{x}e^{-\int
_{0}^{t}\alpha_{v}\,dv}\bigr)\,dt \biggr] <\infty.
\end{equation}
\end{assumption}

%
\begin{remark}
If stochastic discounting processes $\alpha_{t}$ and $\delta_{t}$ are
assumed to be bounded, conditions (\ref{ass31}) and (\ref{ass41})
will be satisfied. Condition (\ref{ass31}) is the well-known
super-hedging property of the random variable $\int_0^Te^{\int
_0^{t}(\delta_v-\alpha_v)\,dv}\,dt$ in the original market.
\end{remark}

%
\begin{lemma}
Under condition (\ref{ass31}), the effective domain $\overline{\mathcal
{H}}$ can be rewritten explicitly as
%
\begin{equation}
\qquad \overline{\mathcal{H}}= \biggl\{(x,z)\in(0,\infty)\times[0,\infty)\dvtx  x>z\sup
_{Y\in\mathcal{M}}\mathbb{E} \biggl[\int_0^Te^{\int_0^t(\delta
_v-\alpha
_v)\,dv}Y_t\,dt
\biggr] \biggr\}.
\end{equation}
\end{lemma}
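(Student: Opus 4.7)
The plan is to set $K := \sup_{Y \in \mathcal{M}} \mathbb{E}\big[\int_0^T e^{\int_0^t(\delta_v - \alpha_v)\,dv} Y_t\,dt\big]$, which is finite by Assumption (3.1), first identify $\{(x,z) \in (0,\infty) \times [0,\infty) : \mathcal{A}(x,z) \neq \varnothing\}$ as the closed half-plane $\{x \geq zK\}$, and then read off its topological interior together with the added boundary stratum $(0,\infty) \times \{0\}$.

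\textbf{Step 1: a canonical minimum plan and a pointwise lower bound on $Z$.} Define $c^*_t := z\,e^{\int_0^t(\delta_v - \alpha_v)\,dv}$. A direct substitution into (2.4) shows $Z_t(c^*) = c^*_t$; equivalently, $c^*$ solves $dZ_t = (\delta_t - \alpha_t)Z_t\,dt$ with $Z_0 = z$. More crucially, for any $c \in \mathcal{A}(x,z)$, the addictive constraint $c_s \geq Z_s(c)$ plugged back into (2.4) yields
$$Z_t(c) \geq z\,e^{-\int_0^t \alpha_v\,dv} + \int_0^t \delta_s\,e^{-\int_s^t \alpha_v\,dv} Z_s(c)\,ds.$$
Multiplying by $e^{\int_0^t \alpha_v\,dv}$ and setting $\tilde Z_t := Z_t(c)\,e^{\int_0^t \alpha_v\,dv}$ gives the integral inequality $\tilde Z_t \geq z + \int_0^t \delta_s \tilde Z_s\,ds$, and Gronwall's lemma then yields $\tilde Z_t \geq z\,e^{\int_0^t \delta_v\,dv}$, i.e., $c_t \geq Z_t(c) \geq c^*_t$ for $\bar{\mathbb{P}}$-a.e.\ $(t,\omega)$.

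\textbf{Step 2: $\mathcal{A}(x,z) \neq \varnothing$ if and only if $x \geq zK$.} For sufficiency, take $c = c^*$; then for each $Y \in \mathcal{M}$,
$$\langle c^*, Y\rangle = z\,\mathbb{E}\Big[\int_0^T e^{\int_0^t(\delta_v - \alpha_v)\,dv} Y_t\,dt\Big] \leq zK \leq x,$$
so Proposition $\ref{thm1}$ gives $c^* \in \mathcal{A}(x,z)$. For necessity, any $c \in \mathcal{A}(x,z)$ satisfies $c \geq c^*$ by Step 1, hence $\langle c, Y\rangle \geq z\,\mathbb{E}\big[\int_0^T e^{\int_0^t(\delta_v - \alpha_v)\,dv} Y_t\,dt\big]$; combining with the budget constraint $\langle c, Y\rangle \leq x$ and taking $\sup_{Y \in \mathcal{M}}$ produces $x \geq zK$.

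\textbf{Step 3: taking the interior.} The set $\{(x,z) \in (0,\infty) \times [0,\infty) : x \geq zK\}$ has $\mathbb{R}^2$-interior equal to $\{z > 0,\ x > zK\}$, since every Euclidean ball around a point with $z = 0$ meets $\{z < 0\}$. Unioning with the stratum $(0,\infty) \times \{0\}$, on which $x > 0 = zK$ is automatic, produces exactly the right-hand side of (3.6). \textbf{Main obstacle.} The only genuinely nontrivial step is the comparison in Step 1: one must translate the functional relation $c \geq Z(c)$ into a pointwise lower bound on $Z(c)$. The integrating-factor substitution reduces this to a classical Gronwall estimate, but it is the single point where the structure of the habit-formation ODE is really used; everything else is bookkeeping with the budget constraint and with the definition of topological interior.
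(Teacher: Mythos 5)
Your proof is correct and follows essentially the same route as the paper's: both identify the subsistent plan $c^{*}_{t}=z\,e^{\int_{0}^{t}(\delta_{v}-\alpha_{v})dv}$ as the pointwise minimum of any financeable consumption, establish $Z_{t}(c)\geq c^{*}_{t}$ by a comparison argument for the habit dynamics (the paper works directly with the differential inequality $d(Z_t-\bar c_t)\geq(\delta_t-\alpha_t)(Z_t-\bar c_t)dt$ and an integrating factor; you use the equivalent integral form plus Gronwall), and then pass through the budget constraint. Your Step 3 is a slightly more explicit rendering of the topological step that the paper leaves implicit, but the substance is identical.
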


The proof of the lemma is straightforward, and we refer to Yu \cite{Yu}
for details.

By choosing $(x,z)\in\overline{\mathcal{H}}$, we can now define the
preliminary version of our \textit{primal utility maximization
problem} as
%
\begin{equation}
\label{primalp} u(x,z)\triangleq\mathop{{\sup}}_{c\in\mathcal{A}(x,z)}\mathbb {E} \biggl[
\int_{0}^{T}\mathit{U}\bigl(t,c_{t}-Z(c)_t
\bigr)\,dt \biggr],\qquad (x,z)\in\overline{\mathcal{H}}.
\end{equation}

Now, for fixed $(x,z)\in\overline{\mathcal{H}}$, and each
$(x,z)$-financeable consumption rate process, we want to generalize the
Market Isomorphism idea by Schroder and Skiadas \cite{Schroder01072002}
in order to reduce the path-dependence structure. By introducing the
auxiliary process $\tilde{c}_{t}=c_{t}-Z(c)_t$, the auxiliary set of
$\mathcal{A}(x,z)$ is given as
%
\begin{equation}
\label{setbarA} \overline{\mathcal{A}}(x,z)\triangleq \bigl\{\tilde{c}\in\mathbb
{L}_{+}^{0}\dvtx  \tilde{c}_{t}=c_{t}-Z(c)_t,\
\forall t\in[0,T], c\in \mathcal {A}(x,z) \bigr\}.
\end{equation}

It is straightforward to verify the following lemma.

\begin{lemma}
For\vspace*{1.5pt} each fixed $(x,z)\in\overline{\mathcal{H}}$, there is one-to-one
correspondence between sets $\mathcal{A}(x,z)$ and $\overline{\mathcal
{A}}(x,z)$, and hence we have $\overline{\mathcal{A}}(x,z)\neq\varnothing$ for
$(x,z)\in\overline{\mathcal{H}}$.
\end{lemma}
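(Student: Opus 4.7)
The plan is to exhibit the map $\Phi:\mathcal{A}(x,z)\to\bar{\mathcal{A}}(x,z)$ given by $\Phi(c)=c-Z(\cdot;c)$ and construct an explicit inverse. Surjectivity of $\Phi$ is immediate from the very definition $(\ref{set:barA})$: every element of $\bar{\mathcal{A}}(x,z)$ is of the form $c-Z(\cdot;c)$ for some $c\in\mathcal{A}(x,z)$. The work is in producing the inverse; this is where path-dependence reduction really happens, and the argument rests on the linearity of the habit ODE.

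Given $\tilde{c}\in\bar{\mathcal{A}}(x,z)$, I would recover the unique corresponding $c\in\mathcal{A}(x,z)$ as follows. Substituting $c_{t}=\tilde{c}_{t}+Z_{t}$ into $dZ_{t}=(\delta_{t}c_{t}-\alpha_{t}Z_{t})dt$ yields the linear inhomogeneous equation
\begin{equation}
dZ_{t}=(\delta_{t}-\alpha_{t})Z_{t}dt+\delta_{t}\tilde{c}_{t}dt,\quad Z_{0}=z,\nonumber
\end{equation}
which admits the unique pathwise solution
\begin{equation}
Z_{t}=e^{\int_{0}^{t}(\delta_{v}-\alpha_{v})dv}\Big(z+\int_{0}^{t}e^{-\int_{0}^{s}(\delta_{v}-\alpha_{v})dv}\delta_{s}\tilde{c}_{s}ds\Big).\nonumber
\end{equation}
Defining $c_{t}\triangleq\tilde{c}_{t}+Z_{t}$ then produces a candidate inverse $\Psi(\tilde{c})=c$. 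To see that $\Psi$ is indeed the inverse of $\Phi$, fix $c^{1},c^{2}\in\mathcal{A}(x,z)$ with $\Phi(c^{1})=\Phi(c^{2})$ and set $\Delta_{t}\triangleq Z(\cdot;c^{1})_{t}-Z(\cdot;c^{2})_{t}=c^{1}_{t}-c^{2}_{t}$. Subtracting the habit ODEs gives $d\Delta_{t}=(\delta_{t}-\alpha_{t})\Delta_{t}dt$ with $\Delta_{0}=0$, whence $\Delta\equiv 0$ by the Gr\"{o}nwall/uniqueness argument, establishing injectivity.

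The non-emptiness assertion then follows by exhibiting a single element. By Lemma $\ref{lemK}$, for every $(x,z)\in\bar{\mathcal{H}}$ we have $x\geq z\sup_{Y\in\mathcal{M}}\mathbb{E}[\int_{0}^{T}e^{\int_{0}^{t}(\delta_{v}-\alpha_{v})dv}Y_{t}dt]$, and the proof of that lemma already verified that the subsistence consumption $\bar{c}_{t}=ze^{\int_{0}^{t}(\delta_{v}-\alpha_{v})dv}$ lies in $\mathcal{A}(x,z)$. Since $\bar{c}_{t}\equiv Z(\cdot;\bar{c})_{t}$ by construction, $\Phi(\bar{c})\equiv 0\in\bar{\mathcal{A}}(x,z)$, so $\bar{\mathcal{A}}(x,z)\neq\emptyset$.

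The only mildly delicate point is the measurability/adaptedness of the reconstruction, i.e.\ verifying that $\Psi(\tilde{c})$ is genuinely an optional process in $\mathbb{L}_{+}^{0}$; this is routine since the exponential integrands are built from the nonnegative optional processes $\alpha,\delta$, and $\tilde{c}$ itself is optional, so Fubini applies pathwise to the explicit formula for $Z$. No obstacle is substantial here once the linear-ODE inversion is written down.
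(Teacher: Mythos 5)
Your proposal is correct and follows essentially the same route as the paper: the inverse map is obtained by solving the linear habit ODE explicitly in terms of $\tilde{c}$ (your formula for $Z$ is exactly the paper's $\widetilde{Z}$), with uniqueness/injectivity settled by the same linear-ODE argument that the paper invokes more tersely, and non-emptiness via the subsistence plan from Lemma $\ref{lemK}$. The extra Gr\"{o}nwall step and the explicit element $\Phi(\bar{c})\equiv 0$ only make explicit what the paper leaves implicit; no substantive difference.
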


Let us turn our attention to the set $\mathcal{M}$ of equivalent local
martingale measure densities, and for each $Y\in\mathcal{M}$, the
auxiliary optional process with respect to $Y_{t}$ is defined as
%
\begin{equation}
\label{Gamma} \Gamma_{t}\triangleq Y_{t}+
\delta_{t}\mathbb{E} \biggl[\int_{t}^{T}e^{\int
_{t}^{s}(\delta_{v}-\alpha_{v})\,dv}Y_{s}\,ds
\Big|\mathcal{F}_{t} \biggr]\qquad \forall t\in[0,T].
\end{equation}

Let us denote the set of all these auxiliary optional processes by
%
\begin{eqnarray}\label{settildeM}
\widetilde{\mathcal{M}} &=& \biggl\{\Gamma\in\mathbb{L}_{+}^{0}\dvtx
\Gamma _{t}=Y_{t}+\delta_{t}\mathbb{E} \biggl[
\int_{t}^{T}e^{\int
_{t}^{s}(\delta
_{v}-\alpha_{v})\,dv}Y_{s}\,ds \Big|
\mathcal{F}_{t} \biggr],
\nonumber\\[-8pt]\\[-8pt]
&&\hspace*{146pt} \forall t\in [0,T], Y\in\mathcal{M} \biggr\}.\nonumber
\end{eqnarray}
We remark here that although stochastic discounting processes $\delta
_{t}$ and $\alpha_{t}$ are unbounded, under condition (\ref{ass31}),
the auxiliary dual process $\Gamma$ is well defined in $\mathbb
{L}_{+}^{0}$, but it is not necessarily in $\mathbb{L}^{1}$.

A direct application of the Fubini--Tonelli theorem induces the key
equalities below; for the detailed proof, we refer to Proposition
$2.3.3$ of Yu \cite{Yu}.

%
\begin{proposition}\label{prop31}
Under condition (\ref{ass31}), for each nonnegative optional process
$c_{t}$ such that $c_{t}\geq Z(c)_t$ with $Z(c)_t$ defined by (\ref{Z}) for fixed initial standard of living $z\geq0$ and the
nonnegative optional process $Y_{t}$, we have the following equalities
with respect to their corresponding auxiliary processes $\tilde
{c}_{t}=c_{t}-Z(c)_t$ and $\Gamma_{t}$ which is defined by (\ref{Gamma}), that
%
\begin{eqnarray}
\label{eqnequiv} {\langle} c,Y {\rangle} & =& {\langle}\tilde{c},\Gamma {\rangle}+z
{\langle} w,Y {\rangle}
\nonumber
\\[-8pt]
\\[-8pt]
& =& {\langle}\tilde{c},\Gamma {\rangle}+z {\langle} \tilde {w},\Gamma {\rangle}.
\nonumber
\end{eqnarray}
Here we define random processes $w, \tilde{w}\in\mathbb{L}_{+}^{0}$ by
%
\begin{equation}
\label{shadow} w_{t}\triangleq e^{\int_{0}^{t}(\delta_{v}-\alpha_{v})\,dv}\quad\mbox {and}\quad
\tilde{w}_{t}\triangleq e^{\int_{0}^{t}(-\alpha
_{v})\,dv}\qquad \mbox{for all } t
\in[0,T].
\end{equation}
\end{proposition}

Based on Propositions \ref{thm1} and \ref{prop31}, under
conditions (\ref{ass31}) and (\ref{ass41}), clearly we will have
the alternative budget constraint characterization of the consumption
rate process $c_{t}$ as:

%
\begin{proposition}\label{char}
For any given pair $(x,z)\in\overline{\mathcal{H}}$, the consumption rate
process $c$ is $(x,z)$-financeable if and only if $c_{t}\geq Z(c)_t$,
$\forall t\in[0,T]$ and
\[
{\bigl\langle} c-Z(c),\Gamma {\bigr\rangle}\leq x-z {\langle} \tilde {w},\Gamma
{\rangle}\qquad\mbox{for all } \Gamma\in\widetilde {\mathcal{M}}.
\]
\end{proposition}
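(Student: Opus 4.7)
The plan is to recognize Proposition~\ref{char} as an essentially immediate consequence of two results already in hand: the classical superhedging/budget-constraint characterization of Proposition~\ref{thm1}, and the algebraic identity of Proposition~\ref{prop3.1}. The habit-addictive pointwise constraint $c_t \geq Z_t$ is common to both formulations, so the only work is to translate the scalar inequality $\langle c, Y\rangle \leq x$ over $Y \in \mathcal{M}$ into an equivalent inequality over $\Gamma \in \widetilde{\mathcal{M}}$, using the surjection $Y \mapsto \Gamma$ given by formula (\ref{Gamma}) that by definition (\ref{set:tildeM}) generates all of $\widetilde{\mathcal{M}}$.

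For the forward direction, assume $c \in \mathcal{A}(x,z)$. Then $c_t \geq Z_t$ and $\langle c, Y\rangle \leq x$ for every $Y \in \mathcal{M}$ by Proposition~\ref{thm1}. Pick any $\Gamma \in \widetilde{\mathcal{M}}$ and let $Y \in \mathcal{M}$ be a representative from which $\Gamma$ is constructed via (\ref{Gamma}). Since $c$ satisfies the hypotheses of Proposition~\ref{prop3.1}, combining its two equalities yields
\begin{equation}
\langle c - Z, \Gamma\rangle + z\langle \tilde{w}, \Gamma\rangle \;=\; \langle c, Y\rangle \;\leq\; x,\nonumber
\end{equation}
which rearranges into the claimed inequality. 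The converse is symmetric: given $c$ with $c_t \geq Z_t$ and $\langle c - Z, \Gamma\rangle \leq x - z\langle \tilde{w}, \Gamma\rangle$ for every $\Gamma \in \widetilde{\mathcal{M}}$, take an arbitrary $Y \in \mathcal{M}$ and build the associated $\Gamma \in \widetilde{\mathcal{M}}$ through (\ref{Gamma}); Proposition~\ref{prop3.1} reconstitutes the left side as $\langle c, Y\rangle$, giving $\langle c, Y\rangle \leq x$ for every $Y \in \mathcal{M}$, so that Proposition~\ref{thm1} produces $c \in \mathcal{A}(x,z)$.

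There is essentially no obstacle, since the substantive analytic work was absorbed into Propositions~\ref{thm1} and~\ref{prop3.1}. The only point worth flagging is the finiteness of the quantities involved, which ensures that the rearrangement of the inequality is meaningful rather than of the form $\infty - \infty$. By the second equality of Proposition~\ref{prop3.1}, $\langle \tilde{w}, \Gamma\rangle = \langle w, Y\rangle$, and Assumption~(\ref{ass3.1}) bounds $\langle w, Y\rangle$ uniformly in $Y \in \mathcal{M}$; together with $(x,z) \in \bar{\mathcal{H}}$ and Lemma~\ref{lemK}, this guarantees that $x - z\langle \tilde{w}, \Gamma\rangle$ is a well-defined real number (strictly positive in the interior of $\bar{\mathcal{H}}$, and equal to $x>0$ on the boundary $z=0$), so the statement of the proposition is unambiguous.
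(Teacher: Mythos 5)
Your proof is correct and follows exactly the route the paper intends: the paper states Proposition \ref{char} as an immediate consequence of Propositions \ref{thm1} and \ref{prop3.1} (using the surjection $Y\mapsto\Gamma$ defining $\widetilde{\mathcal{M}}$) and gives no further argument. Your added remark on finiteness of $\big{\langle}\tilde{w},\Gamma\big{\rangle}$ via Assumption (\ref{ass3.1}) is a harmless and sensible precaution.
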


Proposition \ref{char} provides us the alternative definition of set
$\overline{\mathcal{A}}(x,z)$ for $(x,z)\in\overline{\mathcal{H}}$ by
%
\begin{equation}
\label{setbarAA} \overline{\mathcal{A}}(x,z)= \bigl\{\tilde{c}\in\mathbb{L}_{+}^{0}\dvtx
{\langle}\tilde{c},\Gamma {\rangle}\leq x-z {\langle} \tilde {w},\Gamma {
\rangle},\ \forall\Gamma\in\widetilde{\mathcal {M}} \bigr\}.
\end{equation}
%

\subsection{Embedding into an abstract utility maximization problem with the shadow random endowment}
In order\vspace*{2pt} to apply the convex duality approach for the random endowment,
we need to enlarge the\vspace*{1pt} domain of the set $\overline{\mathcal{H}}$ to
$\mathcal{H}$ and also enlarge the corresponding auxiliary set $\overline
{\mathcal{A}}(x,z)$ to $\widetilde{\mathcal{A}}(x,z)$,
%
\begin{equation}
\label{settildeA} \widetilde{\mathcal{A}}(x,z)\triangleq \bigl\{\tilde{c}\in\mathbb
{L}_{+}^{0}\dvtx  {\langle}\tilde{c},\Gamma {\rangle}\leq x-z {
\langle} \tilde{w},\Gamma {\rangle},\ \forall\Gamma\in \widetilde {\mathcal{M}}
\bigr\},
\end{equation}
where now $(x,z)\in\mathbb{R}^{2}$ and is restricted in the enlarged
domain $\mathcal{H}$,
\[
\mathcal{H}\triangleq \operatorname{int} \bigl\{(x,z)\in\mathbb{R}^{2}\dvtx  \widetilde {
\mathcal{A}}(x,z)\neq\varnothing \bigr\}.
\]

Under condition (\ref{ass31}) and Proposition \ref{prop31}, it is
easy to verify the equivalent characterization of $\mathcal{H}$ by the
following:

%
\begin{lemma}\label{lemKKK}
%
\begin{eqnarray}
\label{setK} \mathcal{H}&= & \bigl\{(x,z)\in\mathbb{R}^{2}\dvtx  x> z {
\langle} \tilde {w}, \Gamma {\rangle}, \mbox{ for all } \Gamma\in \widetilde {
\mathcal{M}} \bigr\}
\nonumber
\\[-8pt]
\\[-8pt]
&= & \bigl\{(x,z)\in\mathbb{R}^{2}\dvtx  x>\bar{p}z, z\geq0 \bigr\}\cup
\bigl\{ (x,z)\in\mathbb{R}^{2}\dvtx x>\unde{p}z, z<0 \bigr\}.
\nonumber
\end{eqnarray}
Here
%
\begin{equation}
\label{pbar} \bar{p}\triangleq\mathop{{\sup}}_{Y\in\mathcal{M}} {\langle }w,Y {
\rangle}=\mathop{{\sup}}_{\Gamma\in\widetilde{\mathcal{M}}} {\langle }\tilde{w},\Gamma {\rangle},
\end{equation}
and
%
\begin{equation}
\label{pdown} \unde{p}\triangleq\mathop{{\inf}}_{Y\in\mathcal{M}} {\langle
}w,Y {\rangle}=\mathop{{\inf}}_{\Gamma\in\widetilde{\mathcal
{M}}} {\langle}\tilde{w},\Gamma {
\rangle},
\end{equation}
where $\bar{p},\unde{p}<\infty$, and $\mathcal{H}$ is a
well-defined convex cone in $\mathbb{R}^{2}$. Moreover,
%
\begin{eqnarray}
\label{closeK} \operatorname{cl}\mathcal{H}&= & \bigl\{(x,z)\in\mathbb{R}^{2}\dvtx
\widetilde{\mathcal {A}}(x,z)\neq\varnothing \bigr\}
\nonumber
\\[-8pt]
\\[-8pt]
&= & \bigl\{(x,z)\in\mathbb{R}^{2}\dvtx  x\geq z {\langle} \tilde{w},
\Gamma {\rangle},\mbox{ for all } \Gamma\in\widetilde{\mathcal {M}} \bigr\},
\nonumber
\end{eqnarray}
where $\operatorname{cl}\mathcal{H}$ denotes the closure of the set
$\mathcal{H}$ in
$\mathbb{R}^{2}$.
\end{lemma}

We will now define the \textit{auxiliary primal utility maximization
problem} based on the auxiliary domain $\widetilde{\mathcal{A}}(x,z)$ as
%
\begin{equation}
\label{tilde-u} \tilde{u}(x,z)\triangleq\mathop{{\sup}}_{\tilde{c}\in\widetilde
{\mathcal {A}}(x,z)}\mathbb{E}
\biggl[\int_{0}^{T}\mathit{U}(t,\tilde
{c}_{t})\,dt \biggr],\qquad (x,z)\in\mathcal{H}.
\end{equation}

By definitions of $\overline{\mathcal{A}}(x,z)$ for $(x,z)\in\overline
{\mathcal
{H}}$ and $\widetilde{\mathcal{A}}(x,z)$ for $(x,z)\in\mathcal{H}$, we
successfully embedded our original utility maximization problem (\ref
{primalp}) with consumption habit formation into the auxiliary
abstract utility maximization problem (\ref{tilde-u}) without habit
formation, but with the shadow random endowment. More precisely, the
following equivalence can be guaranteed that for any $(x,z)\in\overline{\mathcal{H}}\subset\mathcal{H}$,
%
\begin{equation}
\label{barA=tildeA} \overline{\mathcal{A}}(x,z)=\widetilde{\mathcal{A}}(x,z),
\end{equation}
and the two value functions coincide,
%
\begin{equation}
\label{u=tilde-u} u(x,z)=\tilde{u}(x,z).
\end{equation}
In addition, we have that $c_{t}^{\ast}$ is the optimal solution for
$u(x,z)$ if and only if $\tilde{c}_{t}^{\ast}=c_{t}^{\ast}-Z(c^{\ast
})_{t}\geq0$, and for all $t\in[0,T]$ is the optimal\vspace*{1.5pt} solution for
$\tilde{u}(x,z)$, when $(x,z)\in\overline{\mathcal{H}}$.

\section{The dual optimization problem and main results}\label{section4}
Inspired by the idea in Hugonnier and Kramkov \cite{kram04} for
optimal investment with random endowment, we concentrate now on the
construction of the dual problem by first introducing the set $\mathcal{R}$,
%
\begin{equation}
\label{setL} \mathcal{R}\triangleq ri \bigl\{(y,r)\in\mathbb{R}^{2}\dvtx
xy-zr\geq0,\mbox{ for all } (x,z)\in\mathcal{H} \bigr\}.
\end{equation}

Let us make the following assumption on stochastic discounting processes
$\alpha_{t}$ and~$\delta_{t}$.

\begin{assumption}\label{ass5}
The random variable defined by
%
\begin{equation}
\mathcal{E}\triangleq\int_{0}^{T}w_{t}\,dt=
\int_{0}^{T}e^{\int
_{0}^{t}(\delta_{v}-\alpha_{v})\,dv}\,dt
\end{equation}
is not replicable under our original financial market; that is, there
is no constant $K$ such that
\[
\mathbb{E}^{\mathbb{Q}}[\mathcal{E}]=K\qquad\mbox{for any } \mathbb {Q}\in
\mathcal{M}.
\]
\end{assumption}

%
\begin{remark}
Under Assumption \ref{ass5}, the existence of market isomorphism by
Schroder and Skiadas \cite{Schroder01072002} may no longer hold. Our
work can generally extend their conclusions and provide the existence
and uniqueness of the optimal solution in incomplete markets using
convex analysis.
\end{remark}

%
\begin{lemma}\label{lemL}
Under Assumption \ref{ass5}, we know that $\mathcal{R}$ is an open
convex cone in $\mathbb{R}^{2}$ and can be rewritten as
%
\begin{equation}
\label{LLL} \mathcal{R}= \bigl\{(y,r)\in\mathbb{R}^{2}\dvtx  y>0\mbox{ and } \unde{p}y<r<\bar{p}y \bigr\},
\end{equation}
where $\bar{p}$ and $\unde{p}$ are defined by (\ref{pbar}) and
(\ref{pdown}), and $\bar{p}<\unde{p}$.
\end{lemma}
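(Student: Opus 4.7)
The plan is to compute the closed convex cone
\[
P \triangleq \{(y,r)\in\mathbb{R}^2 : xy - zr \geq 0 \text{ for all } (x,z)\in\mathcal{H}\}
\]
explicitly via case analysis using the description of $\mathcal{H}$ from Lemma $\ref{lemKKK}$, and then read off its relative interior. Since $\mathcal{R} = \mathrm{ri}\,P$, and $P$ will turn out to have nonempty interior in $\mathbb{R}^2$ precisely under Assumption $(\ref{ass5})$, the relative interior coincides with the topological interior, giving the stated formula.

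First I would argue that $P$ is a closed convex cone (intersection of closed half-spaces indexed by $\mathcal{H}$). Then I would compute its explicit form. Probing with $z=0$ and $x>0$ arbitrary forces $y\geq 0$. Next, for $z>0$ and $x$ ranging over $(\bar{p}z,\infty)$, the constraint $xy \geq zr$ requires $y \geq 0$ (otherwise $xy\to -\infty$) and, taking the limit $x\downarrow \bar{p}z$, the inequality $r \leq \bar{p}y$. Finally, for $z<0$ and $x$ ranging over $(\underline{p}z,\infty)$ (an interval unbounded above, with $\underline{p}z \leq 0$), the infimum of $xy$ is attained as $x\downarrow \underline{p}z$ (when $y>0$), giving $\underline{p}zy \geq zr$; dividing by the negative $z$ flips the inequality to $r \geq \underline{p}y$. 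When $y=0$, the same probes force $r=0$. Collecting cases,
\[
P = \{(y,r)\in\mathbb{R}^2 : y\geq 0,\ \underline{p}y \leq r \leq \bar{p}y\}.
\]

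Next I would observe that Assumption $(\ref{ass5})$ — non-replicability of $\mathcal{E} = \int_0^T w_t\,dt$ — combined with the identification $\bar{p} = \sup_{Y\in\mathcal{M}} \langle w,Y\rangle$ and $\underline{p} = \inf_{Y\in\mathcal{M}} \langle w,Y\rangle$ from $(\ref{pbar})$ and $(\ref{pdown})$ implies $\underline{p} < \bar{p}$ (the equality $\bar{p}=\underline{p}$ would make $\mathcal{E}$ attainable from initial wealth $\bar{p}$ via the super-hedging/sub-hedging duality, contradicting non-replicability). This strict separation is the key point that makes $P$ two-dimensional rather than a ray, and it is the step I expect to require the most care, since it invokes the standard optional decomposition/super-replication duality rather than a bare calculation.

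Finally, since $P$ is a two-dimensional closed convex cone with apex at the origin and nonempty topological interior in $\mathbb{R}^2$, its relative interior equals its interior, namely
\[
\mathcal{R} = \mathrm{int}\,P = \{(y,r)\in\mathbb{R}^2 : y>0,\ \underline{p}y < r < \bar{p}y\},
\]
which is visibly an open convex cone. This completes the plan; the bulk of the work is routine polar-cone bookkeeping, with the only conceptually nontrivial input being the implication \emph{non-replicability} $\Longrightarrow \underline{p}<\bar{p}$.
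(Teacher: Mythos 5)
Your proposal is correct and follows essentially the same route as the paper: both compute the polar cone of $-\mathcal{H}$ from the explicit description of $\mathcal{H}$ in Lemma \ref{lemKKK}, invoke Assumption (\ref{ass5}) to get $\underline{p}<\bar{p}$ so the cone is two-dimensional, and then pass to the relative interior (the paper cites Rockafellar, Corollary 14.6.1, where you spell out the half-space case analysis by hand, but it is the same argument). One small note: the lemma's final clause ``$\bar{p}<\underline{p}$'' is a typo in the paper, and you correctly derive the intended inequality $\underline{p}<\bar{p}$.
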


For\vspace*{2pt} an arbitrary pair $(y,r)\in\mathcal{R}$, we denote by $\widetilde
{\mathcal{Y}}(y,r)$ the set of nonnegative processes as a proper
extension of the auxiliary set $\widetilde{\mathcal{M}}$ in the way that
%
\begin{equation}
\label{setDual} \widetilde{\mathcal{Y}}(y,r)\triangleq \bigl\{\Gamma\in\mathbb
{L}_{+}^{0}\dvtx  {\langle}\tilde{c},\Gamma {\rangle}\leq xy-zr,
\mbox{ for all } \tilde{c}\in\widetilde{\mathcal{A}}(x,z)\mbox{ and } (x,z)\in
\mathcal{H} \bigr\}.\hspace*{-25pt}
\end{equation}

The \textit{auxiliary dual utility maximization problem} to (\ref{tilde-u}) can be now defined by
%
\begin{equation}
\label{tilde-v} \tilde{v}(y,r)\triangleq\mathop{{\inf}}_{\Gamma\in\widetilde
{\mathcal {Y}}(y,r)}\mathbb{E}
\biggl[\int_{0}^{T}\mathit{V}(t,\Gamma
_{t})\,dt \biggr],\qquad (y,r)\in\mathcal{R}.
\end{equation}

The following theorems constitute our main results, and we provide
their proofs through a number of auxiliary results in the next section.

\begin{theorem}\label{main-1}
Given Assumptions \ref{asssss} and \ref{ass5}, assume also that
conditions (\ref{no-arbitrage}), (\ref{Inada}), (\ref{assAEV}),
(i.e., $\AEE_{0}[\mathit{U}]<\infty$), (\ref{up}) and (\ref{down})
hold true. Moreover, assume that
%
\begin{equation}
\label{finite-u} \tilde{u}(x,z)<\infty\qquad \mbox{for some }(x,z)\in\mathcal{H},
\end{equation}
then we have:
\begin{longlist}[(ii)]
\item[(i)] The function $\tilde{u}$ is $(-\infty,\infty)$-valued on
$\mathcal{H}$ and $\tilde{v}(y,r)$ is $(-\infty,\infty]$-valued on
$\mathcal{R}$. For each $(y,r)\in\mathcal{R}$ there exists a constant
$s=s(y,r)>0$ such that $\tilde{v}(sy,sr)<\infty$, and the conjugate
duality of value functions $\tilde{u}$ and $\tilde{v}$ holds
\begin{eqnarray*}
\tilde{u}(x,z)&=&\mathop{{\inf}}_{(y,r)\in\mathcal{R}}\bigl\{\tilde {v}(y,r)+xy-zr\bigr
\},\qquad (x,z)\in\mathcal{H},
\\
\tilde{v}(y,r)&=&\mathop{{\sup}}_{(x,z)\in\mathcal{H}}\bigl\{\tilde {u}(x,z)-xy+zr\bigr
\},\qquad (y,r)\in\mathcal{R}.
\end{eqnarray*}
\item[(ii)] The\vspace*{1pt} solution $\Gamma^{\ast}(y,r)$ to the optimization
problem (\ref{tilde-v}) exists and is unique (in the sense of $=$
under $\overline{\mathbb{P}}$ in $\mathbb{L}_{+}^{0}$) for all $(y,r)\in
\mathcal{R}$ such that $\tilde{v}(y,r)<\infty$.
\end{longlist}
\end{theorem}

%
\begin{theorem}\label{main-2}
In addition to assumptions of Theorem \ref{main-1}, we also assume
that condition (\ref{assAEU}) holds, (i.e., $\AEE_{\infty}[\mathit
{U}]<1$). Then we also have:
\begin{longlist}[(iii)]
\item[(i)] The value function $\tilde{v}(y,r)$ is $(-\infty,\infty
)$-valued on $(y,r)\in\mathcal{R}$, and $\tilde{v}$ is continuously
differentiable on $\mathcal{L}$.
\item[(ii)] The\vspace*{1pt} solution $\tilde{c}^{\ast}(x,z)$ to optimization
problem (\ref{tilde-u}) exists and is unique (in the sense of $=$
under $\overline{\mathbb{P}}$ in $\mathbb{L}_{+}^{0}$) for any $(x,z)\in
\mathcal{H}$, and there exists a representation of the optimal solution
such that $\tilde{c}_{t}^{\ast}(x,z)> 0, \mathbb{P}$-a.s. for all
$t\in[0,T]$.
\item[(iii)] The superdifferential of $\tilde{u}$ maps $\mathcal{H}$
into $\mathcal{R}$, that is,
%
\begin{equation}
\partial\tilde{u}(x,z)\subset\mathcal{R},\qquad (x,z)\in\mathcal{H}.
\end{equation}
Moreover, if $(y,r)\in\partial\tilde{u}(x,z)$, then there exists a
representation of the optimal solution such that $\Gamma_{t}^{\ast
}(y,r)>0$, $\mathbb{P}$-a.s. for all $t\in[0,T]$ and $\tilde
{c}^{\ast
}(x,z)$ and $\Gamma^{\ast}(y,r)$ are related by
%
\begin{eqnarray}
\label{optm-soln}  \Gamma_{t}^{\ast}(y,r) &=&\mathit{U}'
\bigl(t,\tilde{c}_{t}^{\ast}(x,z)\bigr)\quad\mbox{or}\nonumber
\\
\tilde{c}_{t}^{\ast}(x,z)&=& \mathit{I}\bigl(t,\Gamma
_{t}^{\ast}(y,r)\bigr),\qquad t\in[0,T],
\\
 {\bigl\langle}\Gamma^{\ast}(y,r),\tilde{c}^{\ast}(x,z) {\bigr
\rangle} &=&xy-zr.
\nonumber
\end{eqnarray}
\item[(iv)] If we restrict the choice of initial wealth $x$ and initial
standard of living $z$ such that $(x,z)\in\overline{\mathcal{H}}\subset
\mathcal{H}$, the solution $c_{t}^{\ast}(x,z)$ to our primal utility
optimization problem (\ref{primalp}) exists and is unique, moreover,
%
\begin{equation}
\tilde{c}_{t}^{\ast}(x,z)=c_{t}^{\ast}(x,z)-Z
\bigl(c^{\ast}\bigr)_{t}(x,z),\qquad t\in[0,T].
\end{equation}
\end{longlist}
\end{theorem}
%

\section{Proofs of main results}\label{section5}

\subsection{The proof of Theorem \texorpdfstring{\protect\ref{main-1}}{4.1}} The following
proposition will serve as the key step to build Bipolar relationships:

%
\begin{proposition}\label{Prop1}
Assume that all conditions of Theorem \ref{main-1} hold true. The
families $ (\widetilde{\mathcal{A}}(x,z) )_{(x,z)\in\mathcal{H}}$
and $ (\widetilde{\mathcal{Y}}(y,r) )_{(y,r)\in\mathcal
{R}}$ have
the following properties:
\begin{longlist}[(ii)]
\item[(i)] For any $(x,z)\in\mathcal{H}$, the set $\widetilde
{\mathcal
{A}}(x,z)$ contains a strictly positive random variable on the product
space. A nonnegative random variable $\tilde{c}$ belongs to
$\widetilde
{\mathcal{A}}(x,z)$ if and only if
%
\begin{equation}
\label{character-1} {\langle} \tilde{c},\Gamma {\rangle}\leq xy-zr\qquad\mbox{for all }
(y,r)\in\mathcal{R}\mbox{ and }\Gamma\in\widetilde{\mathcal{Y}}(y,r).
\end{equation}

\item[(ii)] For any $(y,r)\in\mathcal{R}$, the set $\widetilde
{\mathcal
{Y}}(y,r)$ contains a strictly positive random variable on the product
space. A nonnegative random variable $\Gamma$ belongs to $\widetilde
{\mathcal{Y}}(y,r)$ if and only if
%
\begin{equation}
\label{character-2} {\langle} \tilde{c},\Gamma {\rangle}\leq xy-zr\qquad\mbox{for all }
(x,z)\in\mathcal{H}\mbox{ and }\tilde{c}\in\widetilde{\mathcal{A}}(x,z).
\end{equation}
\end{longlist}
\end{proposition}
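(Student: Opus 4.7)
The plan is to (a) record the routine structural properties of $\widetilde{\mathcal{A}}(x,z)$ and $\widetilde{\mathcal{Y}}(y,r)$, (b) exhibit strictly positive elements in each family, and (c) establish the non-trivial half of (i) by reducing, one $\Gamma^{0}\in\widetilde{\mathcal{M}}$ at a time, to the defining budget constraint (\ref{set:tildeA}), with a convex-combination approximation handling the cases when $\langle\tilde{w},\Gamma^{0}\rangle$ falls on the boundary of $[\underline{p},\bar{p}]$.

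Both families are intersections of half-spaces of the form $\{\langle\cdot,\Gamma\rangle\leq\text{const}\}$ (resp.\ $\{\langle\tilde{c},\cdot\rangle\leq\text{const}\}$) in $\mathbb{L}_{+}^{0}$, so they are convex, solid, and closed in $\bar{\mathbb{P}}$-measure (the last via Fatou along an a.s.-convergent subsequence). The ``only if'' direction of (ii) is the definition (\ref{set:Dual}) verbatim; the ``only if'' direction of (i) follows because $\Gamma\in\widetilde{\mathcal{Y}}(y,r)$ means exactly $\langle\tilde{c},\Gamma\rangle\leq xy-zr$ for every $\tilde{c}\in\widetilde{\mathcal{A}}(x,z)$.

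For a strictly positive element of $\widetilde{\mathcal{A}}(x,z)$ I take $\tilde{c}=\varepsilon\tilde{w}$; the defining constraint becomes $(\varepsilon+z)\langle\tilde{w},\Gamma\rangle\leq x$ for every $\Gamma\in\widetilde{\mathcal{M}}$, which holds for $\varepsilon>0$ small enough by the explicit form of $\mathcal{H}$ in Lemma \ref{lemKKK} (controlling either the $\bar{p}$-slope or the $\underline{p}$-slope depending on the sign of $z+\varepsilon$), and $\tilde{w}>0$ a.s. gives strict positivity. For a strictly positive element of $\widetilde{\mathcal{Y}}(y,r)$ I start from any $\Gamma\in\widetilde{\mathcal{M}}$: by (\ref{Gamma}) and nonnegativity of $\delta$, $\Gamma_{t}\geq Y_{t}>0$, and direct substitution into (\ref{set:tildeA}) shows $\Gamma\in\widetilde{\mathcal{Y}}(1,\langle\tilde{w},\Gamma\rangle)$. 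Since $\widetilde{\mathcal{M}}$ is convex and $\mathcal{E}$ is non-replicable, the set $\{\langle\tilde{w},\Gamma\rangle:\Gamma\in\widetilde{\mathcal{M}}\}$ is a non-degenerate convex subset of $\mathbb{R}$ with infimum $\underline{p}$ and supremum $\bar{p}$, hence contains the open interval $(\underline{p},\bar{p})$; for $(y,r)\in\mathcal{R}$ I pick $\Gamma$ with $\langle\tilde{w},\Gamma\rangle=r/y$ and rescale to $y\Gamma\in\widetilde{\mathcal{Y}}(y,r)$.

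The substantive step is the converse of (i). Fix $(x,z)\in\mathcal{H}$ and assume $\tilde{c}\geq 0$ satisfies $\langle\tilde{c},\Gamma\rangle\leq xy-zr$ for every $(y,r)\in\mathcal{R}$ and $\Gamma\in\widetilde{\mathcal{Y}}(y,r)$; I must verify $\langle\tilde{c},\Gamma^{0}\rangle\leq x-z\langle\tilde{w},\Gamma^{0}\rangle$ for each $\Gamma^{0}\in\widetilde{\mathcal{M}}$. Setting $\rho_{0}=\langle\tilde{w},\Gamma^{0}\rangle$: if $\rho_{0}\in(\underline{p},\bar{p})$, then $(1,\rho_{0})\in\mathcal{R}$ and, by the observation of the previous paragraph, $\Gamma^{0}\in\widetilde{\mathcal{Y}}(1,\rho_{0})$, so the hypothesis applied at $(y,r)=(1,\rho_{0})$ delivers the required bound on the nose. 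The main obstacle is the boundary case $\rho_{0}\in\{\underline{p},\bar{p}\}$, which I handle by fixing any $\Gamma''\in\widetilde{\mathcal{M}}$ with $\langle\tilde{w},\Gamma''\rangle$ strictly interior and setting $\Gamma^{\varepsilon}=(1-\varepsilon)\Gamma^{0}+\varepsilon\Gamma''\in\widetilde{\mathcal{M}}$; then $\langle\tilde{w},\Gamma^{\varepsilon}\rangle\in(\underline{p},\bar{p})$ for $\varepsilon\in(0,1)$, the interior case applies to $\Gamma^{\varepsilon}$, and letting $\varepsilon\downarrow 0$ recovers the desired bound for $\Gamma^{0}$ by bilinearity. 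Finiteness of $\langle\tilde{c},\Gamma''\rangle$, needed to pass to the limit, is itself forced by the hypothesis applied at $(1,\langle\tilde{w},\Gamma''\rangle)\in\mathcal{R}$, so the boundary obstruction closes cleanly.
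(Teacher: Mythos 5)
Your proof is correct and follows the same basic strategy as the paper: fiber the auxiliary dual set $\widetilde{\mathcal{M}}$ over $p=\langle\tilde{w},\Gamma\rangle$, observe that each $\Gamma\in\widetilde{\mathcal{M}}$ already lies in $\widetilde{\mathcal{Y}}(1,\langle\tilde{w},\Gamma\rangle)$, and pull the defining budget constraint of $\widetilde{\mathcal{A}}(x,z)$ back through the polarity. The one genuine difference is how you handle the possibility that $\langle\tilde{w},\Gamma^{0}\rangle$ lands on the boundary of $[\underline{p},\bar{p}]$ (where $(1,\rho_{0})\notin\mathcal{R}$ and the hypothesis cannot be invoked directly). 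The paper rules this out outright, in Lemma \ref{lemma3}, by citing Theorem 2.11 of Schachermayer \cite{MR2113724}, which shows that non-replicability of $\mathcal{E}$ forces the strict inequalities $\underline{p}<\langle w,Y\rangle<\bar{p}$ for every $Y\in\mathcal{M}$; no boundary case ever arises. You instead accept the possibility and neutralize it with a convex-combination approximation $\Gamma^{\varepsilon}=(1-\varepsilon)\Gamma^{0}+\varepsilon\Gamma''$ (licit because $\widetilde{\mathcal{M}}$ is the image of the convex set $\mathcal{M}$ under an affine map), using finiteness of $\langle\tilde{c},\Gamma''\rangle$ to pass to the limit. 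This is a more self-contained route that avoids the external reference, at the mild cost of an extra limiting argument and the slight awkwardness of arguing about a case that in fact never occurs. Your construction of a strictly positive element of $\widetilde{\mathcal{A}}(x,z)$ as $\varepsilon\tilde{w}$, with the sign-dependent choice of $\varepsilon$, is also a small cosmetic variant of the paper's $\rho=\frac{\lambda}{\bar{p}}\tilde{w}$; both rest on $\tilde{w}\succ 0$ and the openness of $\mathcal{H}$ in the $x$-direction. The remaining pieces (only-if directions from the definitions, strictly positive element of $\widetilde{\mathcal{Y}}(y,r)$ from a representative $\Gamma\in\widetilde{\mathcal{M}}$ with $\Gamma_t\geq Y_t>0$) coincide with the paper.
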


In order to prove Proposition \ref{Prop1}, for any $p>0$, we denote
by $\mathcal{M}(p)$ the subset of $\mathcal{M}$ that consists of
densities $Y\in\mathcal{M}$ such that $ {\langle} w,Y
{\rangle
}=p$. For any density process $Y\in\mathcal{M}(p)$, define the
auxiliary set as
%
\begin{eqnarray}
\widetilde{\mathcal{M}}(p)&\triangleq& \biggl\{\Gamma\in\mathbb{L}_{+}^{0}\dvtx
\Gamma_{t}=Y_{t}+\delta_{t}\mathbb{E} \biggl[
\int_{t}^{T}e^{\int
_{t}^{s}(\delta_{v}-\alpha_{v})\,dv}Y_{s}\,ds \Big|
\mathcal{F}_{t} \biggr],
\nonumber\\[-8pt]\\[-8pt]
&&\hspace*{130pt}\forall t\in[0,T], Y\in\mathcal{M}(p) \biggr\}.\nonumber
\end{eqnarray}
We have $ {\langle} \tilde{w},\Gamma {\rangle}=
{\langle
}w,Y {\rangle}=p$.

Define $\mathcal{P}$ as the open interval $\mathcal{P}=(\unde
{p},\bar{p})$, where $\unde{p},\bar{p}$ are given in (\ref
{pbar}) and~(\ref{pdown}). We have the following result.

%
\begin{lemma}\label{lemma3}
Assume that conditions of Proposition \ref{Prop1} hold true, and let
$p>0$. Then the set $\widetilde{\mathcal{M}}(p)$ is not empty if and
only if $p\in\mathcal{P}=(\unde{p},\bar{p})$, where
$\unde
{p},\bar{p}$ are defined in (\ref{pbar}) and (\ref{pdown}). In
particular,
%
\begin{equation}
\bigcup_{p\in\mathcal{P}}\widetilde{\mathcal {M}}(p)=\widetilde
{\mathcal{M}},
\end{equation}
where the set $\widetilde{\mathcal{M}}$ is defined by (\ref{settildeM}).
\end{lemma}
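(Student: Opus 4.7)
The plan is to reduce the claim to a statement about the slices $\mathcal{M}(p) := \{Y\in\mathcal{M}: \langle w,Y\rangle = p\}$, because by construction $\widetilde{\mathcal{M}}(p)$ is precisely the image of $\mathcal{M}(p)$ under the affine map $Y\mapsto \Gamma$ given in $(\ref{Gamma})$, and Proposition~\ref{prop3.1} guarantees $\langle \tilde{w},\Gamma\rangle = \langle w,Y\rangle = p$ along that image. Assumption $(\ref{ass3.1})$ ensures the map takes values in $\mathbb{L}^{0}_{+}$, so it is enough to prove that $\mathcal{M}(p)$ is non-empty exactly when $p\in(\underline{p},\bar{p})$, and then to derive the union identity $\bigcup_{p\in\mathcal{P}}\mathcal{M}(p)=\mathcal{M}$.

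For sufficiency, I would fix an arbitrary $p\in(\underline{p},\bar{p})$ and use the definitions of $\underline{p}$, $\bar{p}$ as an infimum and a supremum to pick $Y_{1},Y_{2}\in\mathcal{M}$ with $\langle w,Y_{1}\rangle<p<\langle w,Y_{2}\rangle$. Since the family $\mathcal{M}$ of equivalent local martingale measure densities is convex, the segment $Y_{\lambda}=\lambda Y_{1}+(1-\lambda)Y_{2}$ lies entirely in $\mathcal{M}$ for every $\lambda\in[0,1]$. The real-valued map
\begin{equation}
\lambda\;\longmapsto\;\langle w,Y_{\lambda}\rangle \;=\; \lambda\langle w,Y_{1}\rangle+(1-\lambda)\langle w,Y_{2}\rangle \nonumber
\end{equation}
is continuous and affine, and it straddles $p$ at the endpoints $\lambda=0$ and $\lambda=1$, so the intermediate value theorem yields $\lambda^{\ast}\in(0,1)$ with $\langle w,Y_{\lambda^{\ast}}\rangle=p$. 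The associated auxiliary process $\Gamma^{\ast}$, which is well defined by Assumption $(\ref{ass3.1})$, then belongs to $\widetilde{\mathcal{M}}(p)$, proving $\widetilde{\mathcal{M}}(p)\neq\emptyset$.

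For necessity, if $\Gamma\in\widetilde{\mathcal{M}}(p)$ then it arises from some $Y\in\mathcal{M}$ with $\langle w,Y\rangle = p$, so immediately $\underline{p}\leq p\leq \bar{p}$, and Assumption $(\ref{ass5})$ rules out $\underline{p}=\bar{p}$. The inclusion $\bigcup_{p\in\mathcal{P}}\widetilde{\mathcal{M}}(p)\subseteq\widetilde{\mathcal{M}}$ is tautological, and the reverse inclusion is obtained by partitioning $\mathcal{M}$ by the value of $\langle w,\cdot\rangle$: any $Y\in\mathcal{M}$ whose value falls in the open interior $(\underline{p},\bar{p})$ is in the right union by definition, while any boundary element can be replaced by an arbitrarily small convex combination with a strict-inequality witness (as in the sufficiency step) without changing the defining affine formula for $\Gamma$, moving $p$ into $\mathcal{P}$.

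The main obstacle I anticipate is precisely the last point: controlling what happens for $Y\in\mathcal{M}$ whose $\langle w,\cdot\rangle$ value might attain $\underline{p}$ or $\bar{p}$. The substantive content of the lemma is the convex-analytic intermediate-value argument of Step~1, which is soft; the delicacy is in confirming that Assumption $(\ref{ass5})$ together with the affine structure of $Y\mapsto \Gamma$ really lets one push such boundary $Y$'s into the open interval. I would lean on the strict inequalities $\langle w,Y_{1}\rangle<\langle w,Y_{2}\rangle$ from the definitions of $\underline{p}$ and $\bar{p}$ to avoid ever needing an element that attains an extremum, and otherwise treat the boundary just as a trivial residual case.
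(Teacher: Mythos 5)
Your sufficiency step is fine and is essentially the standard argument (the paper disposes of it by citing Lemma 8 of Hugonnier and Kramkov): convexity of $\mathcal{M}$ plus the affineness of $Y\mapsto\langle w,Y\rangle$ and the intermediate value theorem produce, for each $p\in(\underline{p},\bar{p})$, some $Y\in\mathcal{M}$ with $\langle w,Y\rangle=p$, and Assumption $(\ref{ass3.1})$ lets you form the corresponding $\Gamma\in\widetilde{\mathcal{M}}(p)$.

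The genuine gap is in the converse direction and, consequently, in the inclusion $\widetilde{\mathcal{M}}\subseteq\bigcup_{p\in\mathcal{P}}\widetilde{\mathcal{M}}(p)$. From $\Gamma\in\widetilde{\mathcal{M}}(p)$ you only get $\underline{p}\leq p\leq\bar{p}$, and what must be excluded is that some $Y\in\mathcal{M}$ \emph{attains} $\langle w,Y\rangle=\bar{p}$ or $=\underline{p}$; knowing $\underline{p}<\bar{p}$ does not do this. Your proposed remedy — replacing a hypothetical boundary element by a small convex combination with a strict-inequality witness — is logically beside the point: it shows that nearby values of $p$ are attained (which you already know from the sufficiency step), but it does not show that the boundary value is \emph{not} attained. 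Since $\Gamma$ is determined by its $Y$ through the affine formula $(\ref{Gamma})$, perturbing $Y$ changes $\Gamma$, so a $\Gamma$ arising from a boundary-attaining $Y$ would still lie in $\widetilde{\mathcal{M}}$ yet in no $\widetilde{\mathcal{M}}(p)$ with $p\in\mathcal{P}$, and the asserted union identity would fail. The missing ingredient is not soft convex analysis but the super-replication duality: because $\mathcal{E}=\int_{0}^{T}w_{t}dt$ is not replicable (Assumption $(\ref{ass5})$), Theorem $2.11$ of Schachermayer \cite{MR2113724} yields the strict inequalities $\underline{p}<\langle w,Y\rangle<\bar{p}$ for \emph{every} $Y\in\mathcal{M}$; attainment of either bound by an equivalent local martingale measure would force $\mathcal{E}$ to be attainable. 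This is exactly how the paper argues (it deduces $x-z\langle\tilde{w},\Gamma\rangle>0$ for all $(x,z)\in cl\mathcal{H}$ and all $\Gamma\in\widetilde{\mathcal{M}}$, then lets $(x,z)$ range over $cl\mathcal{H}$ to squeeze $p$ strictly between $\underline{p}$ and $\bar{p}$), and without invoking that non-attainment result your proof of the "only if" half does not close.
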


\begin{pf}
The proof reduces to verifying that $\mathcal{P}=\mathcal{P}'$, where
\[
\mathcal{P}'\triangleq\bigl\{p>0\dvtx  \widetilde{\mathcal{M}}(p)\neq
\varnothing\bigr\}.
\]

Similar to the proof of Lemma $8$ of Hugonnier and Kramkov \cite
{kram04}, one direction inclusion that $\mathcal{P}\subseteq\mathcal
{P}'$ is obvious.

For the inverse, let $p\in\mathcal{P}'$, $(x,z)\in \operatorname{cl}\mathcal{H}$,
$\Gamma\in\widetilde{\mathcal{M}}(p)$, and we claim that there
exists a
$\tilde{c}\in\widetilde{\mathcal{A}}(x,z)$ such that
\[
\overline{\mathbb{P}}[\tilde{c}>0]>0,
\]
which implies that
\[
0< {\langle} \tilde{c},\Gamma {\rangle}\leq x-zp.
\]
As $(x,z)$ is an arbitrary element of $\operatorname{cl}\mathcal{H}$,
it follows that
$p\in\mathcal{P}$.

As for the above claim, according to Theorem $2.11$ of {Schachermayer} \cite{MR2113724}, condition (\ref{ass5}) guarantees
that for all $Y\in\mathcal{M}$,
\[
\unde{p}<\langle w, Y\rangle<\bar{p},
\]
which is
\[
\unde{p}<\langle\tilde{w}, \Gamma\rangle<\bar{p},
\]
for all the $\Gamma\in\widetilde{\mathcal{M}}$. By the definition of
$\operatorname{cl}\mathcal{H}$ in Lemma \ref{lemKKK}, for any
$(x,z)\in\operatorname{cl}\mathcal
{H}$, we have
\[
x-z {\langle}\tilde{w},\Gamma {\rangle}> 0,
\]
for all $\Gamma\in\widetilde{\mathcal{M}}$, and the claim holds by the
definition of $\widetilde{\mathcal{A}}(x,z)$.
\end{pf}

%
\begin{lemma}\label{lemma4}
Assume that the conditions of Proposition \ref{Prop1} hold true, let
$p\in\mathcal{P}=(\unde{p},\bar{p})$ and we have $\widetilde
{\mathcal{M}}(p)\subseteq\widetilde{\mathcal{Y}}(1,p)$.
\end{lemma}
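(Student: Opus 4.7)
The plan is to unwind the definitions directly: this should turn out to be essentially a tautology, with the only substantive input being the characterization of $\widetilde{\mathcal{A}}(x,z)$ recorded in equation $(\ref{set:tildeA})$ and the explicit description of $\mathcal{R}$ proved in Lemma $\ref{lemL}$.

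First, I would verify that the pair $(1,p)$ actually lies in $\mathcal{R}$, so that the set $\widetilde{\mathcal{Y}}(1,p)$ is defined in the first place. By Lemma $\ref{lemL}$, $\mathcal{R} = \{(y,r) : y>0,\ \underline{p}y < r < \bar{p}y\}$, and since $p \in \mathcal{P} = (\underline{p},\bar{p})$ this is immediate.

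Next I would fix an arbitrary $\Gamma \in \widetilde{\mathcal{M}}(p)$. By definition of $\widetilde{\mathcal{M}}(p)$, such a $\Gamma$ lies in $\widetilde{\mathcal{M}}$ and satisfies $\langle \tilde{w},\Gamma\rangle = p$. To check that $\Gamma \in \widetilde{\mathcal{Y}}(1,p)$, I must show that for every $(x,z) \in \mathcal{H}$ and every $\tilde{c} \in \widetilde{\mathcal{A}}(x,z)$,
\begin{equation*}
\langle \tilde{c},\Gamma\rangle \leq x\cdot 1 - z\cdot p.
\end{equation*}
But the defining inequality of $\widetilde{\mathcal{A}}(x,z)$ in $(\ref{set:tildeA})$ applied to this particular $\Gamma \in \widetilde{\mathcal{M}}$ gives
\begin{equation*}
\langle \tilde{c},\Gamma\rangle \leq x - z\langle \tilde{w},\Gamma\rangle = x - zp,
\end{equation*}
which is exactly what is required. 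Hence $\Gamma \in \widetilde{\mathcal{Y}}(1,p)$, and since $\Gamma$ was arbitrary, $\widetilde{\mathcal{M}}(p)\subseteq \widetilde{\mathcal{Y}}(1,p)$.

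There is no real obstacle here; the only thing to be careful about is making sure that the dual parameter point $(1,p)$ sits in $\mathcal{R}$ so that the statement is meaningful, which is precisely why the hypothesis $p \in \mathcal{P} = (\underline{p},\bar{p})$ is imposed. Conceptually, this lemma is the easy ``$\widetilde{\mathcal{M}}(p)$-inclusion'' half that will later feed into the bipolar identification used in the proof of Proposition $\ref{Prop1}$, where the nontrivial work of showing that $\widetilde{\mathcal{Y}}(y,r)$ is actually the \emph{solid, closed, convex hull} generated by such scalings of $\widetilde{\mathcal{M}}(p)$ will take place.
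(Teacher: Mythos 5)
Your proof is correct and follows exactly the same approach as the paper's proof, which is just the single sentence that the conclusion follows directly from the definitions of $\widetilde{\mathcal{A}}(x,z)$ and $\widetilde{\mathcal{Y}}(1,p)$. You have simply spelled out the unwinding of those definitions (together with the sanity check that $(1,p)\in\mathcal{R}$), and the key step — specializing the defining inequality of $\widetilde{\mathcal{A}}(x,z)$ to the fixed $\Gamma\in\widetilde{\mathcal{M}}(p)$ and using $\langle\tilde{w},\Gamma\rangle=p$ — is exactly what makes the inclusion immediate.
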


\begin{pf}
The conclusion can be directly derived in light of definitions of sets
$\widetilde{\mathcal{A}}(x,z)$ and $\widetilde{\mathcal{Y}}(1,p)$.
\end{pf}

According to the definition of $\widetilde{A}(x,z)$, similar to the
proof of Lemma $10$ of Hugonnier and Kramkov \cite{kram04}, it is
straightforward to show the following result:

%
\begin{lemma}\label{lemma5}
Assume that conditions of Proposition \ref{Prop1} hold true. For any
$(x,z)\in\mathcal{H}$, a nonnegative random variable $\tilde{c}$
belongs to $\widetilde{\mathcal{A}}(x,z)$ if and only if
%
\begin{equation}
\label{character-3} {\langle} \tilde{c}, \Gamma {\rangle}\leq x-zp\qquad\mbox{for all
} p\in\mathcal{P}\mbox{ and } \Gamma\in\widetilde{\mathcal{M}}(p).
\end{equation}
\end{lemma}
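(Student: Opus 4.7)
The plan is to observe that this lemma is essentially a reformulation of the definition of $\widetilde{\mathcal{A}}(x,z)$ from (\ref{set:tildeA}), obtained by slicing the dual set $\widetilde{\mathcal{M}}$ according to the value of the functional $\Gamma \mapsto \langle \tilde{w}, \Gamma \rangle$. The key enabler is Lemma \ref{lemma3}, which provides the disjoint union decomposition $\widetilde{\mathcal{M}} = \bigcup_{p\in\mathcal{P}}\widetilde{\mathcal{M}}(p)$, together with the defining property that $\langle \tilde{w},\Gamma\rangle = p$ whenever $\Gamma \in \widetilde{\mathcal{M}}(p)$.

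For the forward direction, I would take $\tilde{c}\in \widetilde{\mathcal{A}}(x,z)$, fix $p\in\mathcal{P}$ and $\Gamma\in\widetilde{\mathcal{M}}(p)$, and note that $\widetilde{\mathcal{M}}(p)\subseteq\widetilde{\mathcal{M}}$, so the defining inequality of $\widetilde{\mathcal{A}}(x,z)$ applies; substituting $\langle\tilde{w},\Gamma\rangle=p$ gives $\langle\tilde{c},\Gamma\rangle\leq x-zp$ exactly as required. For the converse, I would start from an arbitrary $\Gamma\in\widetilde{\mathcal{M}}$ and set $p\triangleq\langle\tilde{w},\Gamma\rangle$; by Lemma \ref{lemma3} this $p$ lies in $\mathcal{P}$ and $\Gamma\in\widetilde{\mathcal{M}}(p)$. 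Applying hypothesis (\ref{character-3}) to this $(p,\Gamma)$ yields $\langle\tilde{c},\Gamma\rangle\leq x-zp = x-z\langle\tilde{w},\Gamma\rangle$, which is precisely the condition appearing in definition (\ref{set:tildeA}). Since $\Gamma$ was arbitrary, $\tilde{c}\in\widetilde{\mathcal{A}}(x,z)$.

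There is essentially no obstacle here, since the lemma is a bookkeeping result: the two characterizations are literally the same once one uses that $\widetilde{\mathcal{M}}$ is partitioned by level sets of $\langle\tilde{w},\cdot\rangle$, which is the content of Lemma \ref{lemma3}. The only minor point worth verifying along the way is that, for $(x,z)\in\mathcal{H}$, the right-hand side $x-zp$ is strictly positive for every $p\in\mathcal{P}$, so that the inequality is not vacuous; this is immediate from the explicit description of $\mathcal{H}$ in Lemma \ref{lemKKK} combined with the fact that $\underline{p}<p<\bar{p}$ for $p\in\mathcal{P}$.
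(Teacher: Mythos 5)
Your proof is correct and takes essentially the same route as the paper: both directions hinge on Lemma \ref{lemma3}'s decomposition $\widetilde{\mathcal{M}}=\bigcup_{p\in\mathcal{P}}\widetilde{\mathcal{M}}(p)$ and on the level-set identity $\langle\tilde{w},\Gamma\rangle=p$ for $\Gamma\in\widetilde{\mathcal{M}}(p)$, the only cosmetic difference being that the paper packages the converse as a supremum over $p$ and $\Gamma$ while you argue pointwise for each $\Gamma\in\widetilde{\mathcal{M}}$.
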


\begin{pf*}{Proof of Proposition \ref{Prop1}}
For the validity of assertion (i), given any $(x,z)\in\mathcal{H}$,
there exists a $\lambda>0$ such that $(x-\lambda,z)\in\mathcal{H}$
since $\mathcal{H}$ is an open set.

Let $\tilde{c}\in\widetilde{\mathcal{A}}(x-\lambda,z)$, for any
$\Gamma
\in\widetilde{\mathcal{M}}$, and $\tilde{w}_{t}=e^{-\int
_{0}^{t}\alpha
_{v}\,dv}>0$ for all $t\in[0,T]$, we have
%
\begin{equation}
{\langle}\tilde{c},\Gamma {\rangle}\leq x-\lambda-z {\langle }\tilde{w},\Gamma {
\rangle}.
\end{equation}

By condition (\ref{ass31}) and Proposition \ref{prop31}, we
define $\rho_{t}\triangleq\frac{\lambda}{\bar{p}}\tilde{w}_{t}>0$ for
all $t\in[0,T]$, and then for all $\Gamma\in\widetilde{\mathcal{M}}$,
we obtain
\begin{eqnarray*}
{\langle} \rho, \Gamma {\rangle} & \leq& {\langle} \tilde {c}+\rho, \Gamma {
\rangle} \leq x-\lambda-z {\langle }\tilde {w},\Gamma {\rangle}+\frac{\lambda}{\bar{p}}
{\langle }\tilde {w},\Gamma {\rangle}
\\
& \leq& x-\lambda-z {\langle}\tilde{w},\Gamma {\rangle }+\lambda \leq x-z {
\langle}\tilde{w},\Gamma {\rangle}.
\end{eqnarray*}
Hence, the existence of a strictly positive element $\rho_{t}\in
\widetilde{\mathcal{A}}(x,z)$ follows by the definition of
$\widetilde
{\mathcal{A}}(x,z)$.

If (\ref{character-1}) holds for some $\tilde{c}\in\mathbb
{L}_{+}^{0}$, the density process $\Gamma\in\widetilde{\mathcal{M}}(p)$
belongs to $\widetilde{\mathcal{Y}}(1,p)$ for all $p\in\mathcal{P}$ by
Lemma \ref{lemma4}, and hence (\ref{character-3}) holds. Lemma
\ref{lemma5} then implies that $\tilde{c}\in\widetilde{\mathcal
{A}}(x,z)$. Conversely, suppose $\tilde{c}\in\widetilde{\mathcal
{A}}(x,z)$, the definition of set $\widetilde{\mathcal{Y}}(y,r)$,
$(y,r)\in\mathcal{R}$ implies (\ref{character-1}), and we complete
the proof of assertion (i).

For the proof of the assertion (ii), first we have
\[
k\widetilde{\mathcal{Y}}(y,r)=\widetilde{\mathcal{Y}}(ky,kr)\qquad\mbox{for all }
k>0, (y,r)\in\mathcal{R}.
\]
Therefore, it is enough to consider $(y,r)=(1,p)$ for some $p\in
\mathcal
{P}$. Lemma \ref{lemma4} implies $\Gamma\in\widetilde{\mathcal
{M}}(p)\subseteq\widetilde{\mathcal{Y}}(1,p)$, and the existence of
strictly positive $Y\in\mathcal{M}(p)$ takes care of the existence
$\Gamma\in\widetilde{\mathcal{M}}(p)$ and $\Gamma>0$ $\overline
{\mathbb{P}}$-a.s.

The second part is a direct consequence of the definition of
$\widetilde
{\mathcal{Y}}(y,r)$.
\end{pf*}

For the proof of Theorem \ref{main-1}, we will also need the
following lemmas:

%
\begin{lemma}\label{lemma6}
Under assumptions of Theorem \ref{main-1}, the value function
$\tilde
{u}$ is $(-\infty,\infty)$-valued on $\mathcal{H}$.
\end{lemma}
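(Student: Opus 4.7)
The plan is to establish the two-sided bounds separately, in the order lower then upper, since the concavity argument for the upper bound will need the lower bound as input.

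For the lower bound, fix $(x,z)\in\mathcal{H}$ and use the explicit candidate $\tilde{c}_t=\epsilon\tilde{w}_t=\epsilon e^{-\int_0^t\alpha_v\,dv}$. Admissibility $\tilde{c}\in\widetilde{\mathcal{A}}(x,z)$ amounts to $(\epsilon+z)\langle\tilde{w},\Gamma\rangle\leq x$ for every $\Gamma\in\widetilde{\mathcal{M}}$; since $\langle\tilde{w},\Gamma\rangle$ ranges in $[\underline{p},\bar{p}]$ and $\mathcal{H}$ is defined by strict versions of this same inequality (Lemma $\ref{lemKKK}$), some sufficiently small $\epsilon>0$ works, treating $z\geq 0$ and $z<0$ separately according to which of $\bar{p}$ or $\underline{p}$ is binding. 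To verify $\mathbb{E}\int_0^T U(t,\epsilon\tilde{w}_t)\,dt>-\infty$ I would invoke Assumption $(\ref{ass4.1})$, which furnishes $\bar{x}>0$ with $\mathbb{E}\int_0^T U(t,\bar{x}\tilde{w}_t)\,dt>-\infty$. If $\epsilon\geq\bar{x}$, monotonicity of $U(t,\cdot)$ finishes at once. If $\epsilon<\bar{x}$, I would split the expectation according to whether $\bar{x}\tilde{w}_t\leq x_0$, with $x_0$ the threshold from Corollary $\ref{Tech2}$(iii): on $\{\bar{x}\tilde{w}_t\leq x_0\}$ the AE$_0$ inequality $U(t,\mu y)\geq\mu^{-\gamma/(1-\gamma)}U(t,y)$ applied with $\mu=\epsilon/\bar{x}$, together with $(\ref{down})$ ensuring $U<0$ in that regime, transfers integrability from Assumption $(\ref{ass4.1})$; on the complementary set $\epsilon\tilde{w}_t\geq\epsilon x_0/\bar{x}$, so continuity of $U(\cdot,y)$ on the compact interval $[0,T]$ gives a deterministic lower bound.

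For the upper bound, observe that $\tilde{u}$ is concave on $\mathcal{H}$: the linear characterization $(\ref{set:tildeA})$ immediately yields $\lambda\tilde{c}_0+(1-\lambda)\tilde{c}_1\in\widetilde{\mathcal{A}}(\lambda(x_0,z_0)+(1-\lambda)(x_1,z_1))$, and $U$ is concave in its second argument. Suppose for contradiction that $\tilde{u}(x_*,z_*)=+\infty$ at some $(x_*,z_*)\in\mathcal{H}$, and let $(x_0,z_0)$ be the point supplied by hypothesis $(\ref{finite-u})$. Since $\mathcal{H}$ is open and convex, I can extend the segment from $(x_*,z_*)$ slightly past $(x_0,z_0)$ to produce $(x_1,z_1)\in\mathcal{H}$ and $\lambda\in(0,1)$ with $(x_0,z_0)=\lambda(x_*,z_*)+(1-\lambda)(x_1,z_1)$. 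Concavity then gives $\tilde{u}(x_0,z_0)\geq\lambda(+\infty)+(1-\lambda)\tilde{u}(x_1,z_1)$, which equals $+\infty$ because the lower bound already established makes $\tilde{u}(x_1,z_1)>-\infty$. This contradicts $\tilde{u}(x_0,z_0)<\infty$, forcing $\tilde{u}<\infty$ throughout $\mathcal{H}$.

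The main obstacle sits in the lower-bound step: when $(x,z)$ approaches the boundary of $\mathcal{H}$, the scaling $\epsilon$ is forced to be small, so integrability has to be transferred across scales, and that is precisely where the asymptotic elasticity hypothesis $AE_0[U]<\infty$ enters through Corollary $\ref{Tech2}$(iii); without it one cannot propagate the single-point integrability of Assumption $(\ref{ass4.1})$ to the desired magnitude. Once the lower bound holds uniformly on $\mathcal{H}$, the upper bound reduces to the short concavity-plus-openness argument above, whose only nontrivial input is exactly this uniform lower finiteness.
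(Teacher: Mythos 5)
Your proposal is correct and follows essentially the same route as the paper's proof: both produce the lower bound from a small scalar multiple of $\tilde{w}$ lying in $\widetilde{\mathcal{A}}(x,z)$ (the paper uses the element $\frac{\lambda}{\bar{p}}\tilde{w}$ exhibited in the proof of Proposition $\ref{Prop1}$), both transfer the single-point integrability of Assumption $(\ref{ass4.1})$ to that small scale via the $AE_{0}$ consequences of Corollary $\ref{Tech2}$, and both obtain the upper bound from concavity of $\tilde{u}$ combined with openness of $\mathcal{H}$ and hypothesis $(\ref{finite-u})$. The only cosmetic difference is that the paper packages the $AE_{0}$ step as a uniform affine bound $U(t,x/s)\geq s_{1}U(t,x)+s_{2}$ applied after rescaling the domain point by homogeneity of $\widetilde{\mathcal{A}}$, whereas you carry out the equivalent splitting on $\{\bar{x}\tilde{w}_{t}\leq x_{0}\}$ directly at $(x,z)$, a computation the paper itself performs in the proofs of Lemmas $\ref{lemmanewadd}$ and $\ref{newaddlemma}$.
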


\begin{pf}
First, by Lemma \ref{Tech}, the condition $\AEE_{0}[\mathit
{U}]<\infty$
implies that for any positive constant $s>0$, there exist $s_{1}>0$ and
$s_{2}>0$ such that for all $t\in[0,T]$,
%
\begin{equation}
\mathit{U}(t,x/s) \geq s_{1}\mathit{U}(t,x)+s_{2},\qquad x>0.
\end{equation}

According to condition (\ref{ass41}) and the proof of Proposition
\ref{Prop1}, for each fixed pair $(x,z)\in\mathcal{H}$, there exists
$\lambda=\lambda(x,z)>0$ such that $\frac{\lambda}{\bar{p}}\tilde
{w}_{t}\in\widetilde{\mathcal{A}}(x,z)$, and therefore we deduce that
$\bar{x}\tilde{w}_{t}\in\widetilde{\mathcal{A}}(\frac{\bar
{x}\bar
{p}}{\lambda}x,\frac{\bar{x}\bar{p}}{\lambda}z)$, and
\begin{eqnarray*}
\tilde{u}\biggl(\frac{\bar{x}\bar{p}}{\lambda}x,\frac{\bar{x}\bar
{p}}{\lambda }z\biggr)&=&\sup
_{\tilde{c}\in\widetilde{\mathcal{A}}((({\bar{x}\bar
{p}})/{\lambda})x, (({\bar{x}\bar{p}})/{\lambda})z)}\mathbb{E} \biggl[\int_{0}^{T}
\mathit{U}(t,\tilde{c}_{t})\,dt \biggr]
\\
&\geq&\mathbb{E} \biggl[\int
_{0}^{T}\mathit{U}(t,\bar{x}\tilde{w}_{t})\,dt
\biggr]>-\infty.
\end{eqnarray*}
Hence, for any $(x,z)\in\mathcal{H}$, there exists a constant
$s(x,z)>0$ such that $\tilde{u}(sx,sz)>-\infty$, with $s(x,z)=\frac
{\bar
{x}\bar{p}}{\lambda}$.

For any constant $s>0$,
\[
\widetilde{\mathcal{A}}(x,z)=\widetilde{\mathcal {A}}(sx,sz)/s,
\]
which implies that $\tilde{u}(x,z)>-\infty$ if $\tilde
{u}(sx,sz)>-\infty
$ holds for a constant $s=s(x,z)>0$. By the result above, we can
conclude that $\tilde{u}(x,z)>-\infty$ in the whole domain $\mathcal{H}$.

Now, since the set $\mathcal{H}$ is open, and $\tilde{u}(x,z)<\infty$
for some $(x,z)\in\mathcal{H}$ by the condition (\ref{finite-u}), we
deduce that $\tilde{u}$ is finitely valued on $\mathcal{H}$ by the
concavity of $\tilde{u}$ on~$\mathcal{H}$. And the proof is complete.
\end{pf}

Before we state the next lemma, let us introduce the definition given by
{{\v{Z}}itkovi\'{c}}~\cite{Zit09}.

\begin{definition}\label{def}
A convex subset $C$ of a topological vector space $X$ is said to be \textit
{convexly compact} if for any nonempty set $A$ and any family $\{F_{a}\}
_{a\in A}$ of closed, convex subsets of $C$, the condition
\[
\forall D\in \operatorname{Fin}(A),\qquad \bigcap_{a\in D}F_{a}
\neq\varnothing\quad\Longrightarrow\quad \bigcap_{a\in A}F_{a}
\neq\varnothing,
\]
where the set $\operatorname{Fin}(A)$ consists of all nonempty finite subsets of $A$
for an arbitrary nonempty set $A$.
\end{definition}

{\v{Z}}itkovi\'{c} \cite{Zit09} furthermore derived an easy
characterization on the space of nonnegative, measurable functions;
see Theorem $3.1$ of {\v{Z}}itkovi\'{c} \cite{Zit09}. We modify his
result to fit into our framework.

%
\begin{proposition}\label{compact}
A closed and convex subset $C$ of $\mathbb{L}_{+}^{0}(\Omega\times[0,T],
\mathcal{O},\overline{\mathbb{P}})$ is convexly compact if and only if it is
bounded in the finite measure $\overline{\mathbb{P}}$.
\end{proposition}

Based\vspace*{1pt} on the above proposition, we have the following lemma on the
convexly compactness of sets $\widetilde{\mathcal{A}}(x,z)$ and
$\widetilde{\mathcal{Y}}(y,r)$:

\begin{lemma}\label{lemma8}
For each pair $(x,z)\in\mathcal{H}$ and $(y,r)\in\mathcal{R}$, the sets
$\widetilde{\mathcal{A}}(x,z)$ and $\widetilde{\mathcal{Y}}(y,r)$ are
convex, solid and closed in the topology of convergence in measure~$\overline{\mathbb{P}}$. Moreover, they are both bounded in $\mathbb
{L}_{+}^{0}(\Omega\times[0,T],\mathcal{O},\overline{\mathbb{P}})$; hence
they are both convexly compact.
\end{lemma}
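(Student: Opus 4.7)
The plan is to verify each of the four structural properties (convexity, solidity, closedness in measure, and boundedness in measure) separately for both families, and then appeal to Theorem \ref{compact} to obtain convex compactness. Convexity and solidity are essentially immediate from the defining inequalities, since both $\widetilde{\mathcal{A}}(x,z)$ and $\widetilde{\mathcal{Y}}(y,r)$ are cut out by constraints of the form $\langle\cdot,\cdot\rangle\le\text{const}$, and the bilinear form is linear in each argument and monotone in the nonnegative orthant. Closedness in measure is also standard: given a sequence $\tilde c^n\to\tilde c$ in $\bar{\mathbb{P}}$-measure in $\widetilde{\mathcal{A}}(x,z)$, pass to an a.s.\ convergent subsequence and apply Fatou's lemma to $\langle\tilde c^n,\Gamma\rangle$ for every fixed $\Gamma\in\widetilde{\mathcal{M}}$ to obtain $\langle\tilde c,\Gamma\rangle\le x-z\langle\tilde w,\Gamma\rangle$; the argument for $\widetilde{\mathcal{Y}}(y,r)$ is identical.

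The heart of the proof, and the step I expect to be the main obstacle, is boundedness in $\mathbb{L}_+^0$, i.e.\ boundedness in $\bar{\mathbb{P}}$-measure, because here the path-dependent structure could in principle cause $\tilde c$ to be unbounded on sets of positive measure. My plan is to exploit the existence of \emph{strictly positive} elements in the polar families established in Proposition \ref{Prop1}. For $\widetilde{\mathcal{A}}(x,z)$, pick any $\Gamma_0\in\widetilde{\mathcal{M}}\subseteq\widetilde{\mathcal{Y}}(1,p)$ with $\Gamma_0\succ 0$, which exists by assertion (ii) of Proposition \ref{Prop1}. Then every $\tilde c\in\widetilde{\mathcal{A}}(x,z)$ satisfies the deterministic bound
\begin{equation}
\langle\tilde c,\Gamma_0\rangle\le x-z\langle\tilde w,\Gamma_0\rangle=:K<\infty.\nonumber
\end{equation}
Given $\varepsilon>0$, use $\bar{\mathbb{P}}[\Gamma_0>0]=1$ to choose $\delta>0$ with $\bar{\mathbb{P}}[\Gamma_0<\delta]<\varepsilon/2$, and then a Markov-type estimate
\begin{equation}
\bar{\mathbb{P}}[\tilde c>M]\le\bar{\mathbb{P}}[\Gamma_0<\delta]+\bar{\mathbb{P}}[\tilde c\Gamma_0>M\delta]\le\tfrac{\varepsilon}{2}+\tfrac{K}{M\delta}<\varepsilon\nonumber
\end{equation}
for $M$ sufficiently large, uniformly in $\tilde c\in\widetilde{\mathcal{A}}(x,z)$. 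This yields the required boundedness in measure.

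For $\widetilde{\mathcal{Y}}(y,r)$ the argument is dual: by assertion (i) of Proposition \ref{Prop1} there exists a strictly positive $\tilde c_0\succ 0$ in $\widetilde{\mathcal{A}}(x_0,z_0)$ for any fixed $(x_0,z_0)\in\mathcal{H}$, so every $\Gamma\in\widetilde{\mathcal{Y}}(y,r)$ obeys $\langle\tilde c_0,\Gamma\rangle\le x_0y-z_0r<\infty$, and the same Markov-inequality routine (now conditioning on the level sets of $\tilde c_0$) bounds $\Gamma$ in $\bar{\mathbb{P}}$-measure uniformly over the family. Having established that both families are convex, closed, and bounded in $\bar{\mathbb{P}}$-measure, convex compactness follows directly by invoking Theorem \ref{compact} of \textit{{\v{Z}}itkovi\'{c}}. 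Solidity is a bonus observation from the defining inequalities and plays no role in the convex compactness conclusion, but is needed elsewhere in the duality analysis.
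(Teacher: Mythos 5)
Your proposal is correct, and the boundedness half is essentially the paper's own argument: the paper also deduces boundedness of $\widetilde{\mathcal{A}}(x,z)$ from the existence of a strictly positive $\Gamma\in\widetilde{\mathcal{M}}(p)\subseteq\widetilde{\mathcal{Y}}(1,p)$, and boundedness of $\widetilde{\mathcal{Y}}(y,r)$ from the strictly positive element $\rho=\frac{\lambda}{\bar p}\tilde w\in\widetilde{\mathcal{A}}(x,z)$ produced in Proposition \ref{Prop1}; you merely make explicit the Markov-type estimate (splitting on $\{\Gamma_0<\delta\}$) that the paper leaves implicit, which is a welcome addition. Where you genuinely diverge is in the first half. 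The paper does not verify convexity, solidity and closedness directly from the defining inequalities; instead it introduces the auxiliary sets $\mathfrak{H}(y,r)$, $\widetilde{\mathfrak{A}}(k)$ and $\mathfrak{R}(x,z)$, $\widetilde{\mathfrak{Y}}(k)$, shows $\widetilde{\mathcal{Y}}(y,r)=\widetilde{\mathfrak{A}}(1)^{\circ}$ and $\widetilde{\mathcal{A}}(x,z)=\widetilde{\mathfrak{Y}}(1)^{\circ}$, and invokes the Brannath--Schachermayer bipolar theorem on $\mathbb{L}_{+}^{0}$, so that these three properties come for free because polars are automatically convex, solid and closed in measure. Your direct route (linearity and monotonicity of the pairing for convexity and solidity, a.s.\ subsequence plus Fatou for closedness) is perfectly valid and more elementary for the lemma as stated; what the paper's detour buys is the bipolar relations $(\ref{eqnbipolar1})$ and $(\ref{eqnbipolar2})$ themselves, which are reused later (in the minimax argument of Lemma \ref{lemma11} and in the auxiliary dual problem $\hat v(k)$ of Lemma \ref{lemma13}), so if one follows your proof those relations would still have to be established separately at that point. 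One cosmetic remark: $\bar{\mathbb{P}}$ has total mass $T$ rather than $1$, so the statement $\bar{\mathbb{P}}[\Gamma_0>0]=1$ should read that $\{\Gamma_0=0\}$ is $\bar{\mathbb{P}}$-null; the continuity-of-measure step and the conclusion are unaffected.
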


\begin{pf} For $(y,r)\in\mathcal{R}$, we define auxiliary sets as
%
\begin{eqnarray}
\label{frakA} \mathfrak{H}(y,r) & \triangleq &\bigl\{(x,z)\in\mathcal{H}\dvtx  xy-zr
\leq 1 \bigr\},
\nonumber
\\[-8pt]
\\[-8pt]
\mathfrak{A}(k) & \triangleq& \bigcup_{(x,z)\in
k\mathfrak {H}(y,r)}
\widetilde{\mathcal{A}}(x,z),
\nonumber
\end{eqnarray}
and denote by $\widetilde{\mathfrak{A}}(k)$ the closure of $\mathfrak
{A}(k)$ with respect to convergence in measure $\overline{\mathbb{P}}$.

From Proposition \ref{Prop1}, it follows that
\[
\Gamma\in\widetilde{\mathcal{Y}}(y,r)\quad\Longleftrightarrow\quad {\langle }\tilde{c},\Gamma
{\rangle}\leq1\qquad \forall\tilde{c}\in \widetilde {\mathfrak{A}}(1).
\]

Hence, sets $\widetilde{\mathcal{Y}}(y,r)$ and $\widetilde{\mathfrak
{A}}(1)$ satisfy
\[
\widetilde{\mathcal{Y}}(y,r) =\widetilde{\mathfrak{A}}(1)^{\circ
}.
\]
At the same time, by its definition, $\widetilde{\mathfrak{A}}(1)$
itself is closed, convex and solid. The bipolar theorem in Brannath and
Schachermayer \cite{Brannath} asserts that $\widetilde{\mathfrak
{A}}(1)=\widetilde{\mathfrak{A}}(1)^{\circ\circ}$, and hence we have
the following Bipolar relationship:
%
\begin{eqnarray}
\label{eqnbipolar1} \widetilde{\mathfrak{A}}(1) & =& \widetilde{\mathcal{Y}}(y,r)^{\circ},
\nonumber
\\[-8pt]
\\[-8pt]
\widetilde{\mathcal{Y}}(y,r) & =&\widetilde{\mathfrak{A}}(1)^{\circ
}.
\nonumber
\end{eqnarray}

The Bipolar theorem on $\mathbb{L}_{+}^{0}$ implies that $\mathcal
{Y}(y,r)$ is convex, solid and closed under the convergence in measure
$\overline{\mathbb{P}}$.

Similarly, for $(x,z)\in\mathcal{H}$, we define the set
%
\begin{eqnarray}
\label{frakY} \mathfrak{R}(x,z) & \triangleq&\bigl\{(y,r)\in\mathcal{R}\dvtx  xy-zr
\leq1\bigr\},
\nonumber
\\[-8pt]
\\[-8pt]
\mathfrak{Y}(k) & \triangleq& \bigcup_{(y,r)\in
k\mathfrak {R}(x,z)}
\widetilde{\mathcal{Y}}(y,r),
\nonumber
\end{eqnarray}
and denote by $\widetilde{\mathfrak{Y}}(k)$ the closure of $\mathfrak
{Y}(k)$ with respect to convergence in measure~$\overline{\mathbb{P}}$.

Again, Proposition \ref{Prop1} implies
\[
\tilde{c}\in\widetilde{\mathcal{A}}(x,z)\quad\Longleftrightarrow \quad {\langle} \tilde{c},
\Gamma {\rangle}\leq1\qquad \forall\Gamma\in\widetilde {\mathfrak{Y}},
\]
and the Bipolar relationship
%
\begin{eqnarray}
\label{eqnbipolar2} \widetilde{\mathfrak{Y}}(1) & =& \widetilde{\mathcal{A}}(x,z)^{\circ},
\nonumber
\\[-8pt]
\\[-8pt]
\widetilde{\mathcal{A}}(x,z) & =&\widetilde{\mathfrak{Y}}(1)^{\circ
}.
\nonumber
\end{eqnarray}

Hence, $\widetilde{\mathcal{A}}(x,z)$ is also convex, solid and closed
under convergence in measure~$\overline{\mathbb{P}}$.

Thanks to the existence of strictly positive $\Gamma\in\widetilde
{\mathcal{M}}(p)$ which is also in $\widetilde{\mathcal{Y}}(1,p)$, the
set $\widetilde{\mathcal{A}}(x,z)$ is therefore bounded in measure
$\overline
{\mathbb{P}}$ by Proposition \ref{Prop1} part (i).

Similarly, as in the proof of Proposition \ref{Prop1}, there exists
$\lambda=\lambda(x,z)$ such that $\rho_t>0$ for all $t\in[0,T]$ and
$\rho_{t}=\frac{\lambda}{\bar{p}}\tilde{w}_{t}\in\widetilde
{\mathcal
{A}}(x,z)$. Due to Proposition \ref{Prop1} part (ii), the set
$\widetilde{\mathcal{Y}}(y,r)$ is also bounded in measure $\overline
{\mathbb
{P}}$. Therefore both of them are convexly compact in $\mathbb{L}_{+}^{0}$.
\end{pf}

Contrary to the existing literature, we cannot mimic the classical
proof of the existence of the dual optimizer due to the lack of
integrability of the dual process $\Gamma\in\widetilde{\mathcal
{Y}}(y,r)$ for $(y,r)\in\mathcal{R}$. Conditions $\AEE_0[\mathit
{U}]<\infty$ and $\mathbb{E}[\int_0^T\mathit{U}(t,\bar{x}\tilde
{w}_t)\,dt]>-\infty$ are critical to prove lemmas below.

\begin{lemma}\label{lemma9}
Under assumptions of Theorem \ref{main-1}, for each fixed $(y,r)\in
\mathcal{R}$, we have
\[
\mathop{{\sup}}_{\Gamma\in\widetilde{\mathcal{Y}}(y,r)}\mathbb {E} \biggl[\int_{0}^{T}
\mathit{V}^{-}(t,\Gamma_{t})\,dt \biggr]<\infty.
\]
\end{lemma}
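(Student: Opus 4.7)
The plan is to bound $V^{-}(t,\Gamma_{t})$ pointwise by an expression whose integral decouples into (i) a deterministic quantity depending only on the market data, and (ii) a linear functional of $\Gamma$ of the form $\langle \tilde w,\Gamma\rangle$, which can then be controlled uniformly by invoking the definition of $\widetilde{\mathcal{Y}}(y,r)$. Concretely, I would start from the Fenchel inequality $V(t,y)\geq U(t,x)-xy$, valid for all $x,y>0$, and substitute $x=\bar{x}\tilde{w}_{t}$ (with $\bar{x}$ as in Assumption $(\ref{ass4.1})$) and $y=\Gamma_{t}$, obtaining
\begin{equation}
V(t,\Gamma_{t})\;\geq\;U(t,\bar{x}\tilde{w}_{t})-\bar{x}\tilde{w}_{t}\Gamma_{t}.\nonumber
\end{equation}
Taking negative parts and using $(a+b)^{-}\leq a^{-}+b^{+}$ yields the pointwise control
\begin{equation}
V^{-}(t,\Gamma_{t})\;\leq\;U^{-}(t,\bar{x}\tilde{w}_{t})+\bar{x}\tilde{w}_{t}\Gamma_{t}.\nonumber
\end{equation}

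Next I would integrate over $[0,T]$ and take $\mathbb{E}$ to get
\begin{equation}
\mathbb{E}\Big[\int_{0}^{T}V^{-}(t,\Gamma_{t})dt\Big]\;\leq\;\mathbb{E}\Big[\int_{0}^{T}U^{-}(t,\bar{x}\tilde{w}_{t})dt\Big]+\bar{x}\langle\tilde{w},\Gamma\rangle.\nonumber
\end{equation}
The first term is finite, since Assumption $(\ref{ass4.1})$ gives $\mathbb{E}[\int_{0}^{T}U(t,\bar{x}\tilde{w}_{t})dt]>-\infty$, which (together with the decomposition $U=U^{+}-U^{-}$ and the well-definedness of the integral) forces $\mathbb{E}[\int_{0}^{T}U^{-}(t,\bar{x}\tilde{w}_{t})dt]<\infty$. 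Note that this first term depends on $(y,r)$ only trivially, not on $\Gamma$.

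For the second term, the crucial observation is that $\bar{x}\tilde{w}$ itself lies in some auxiliary admissible set $\widetilde{\mathcal{A}}(x_{0},z_{0})$ with $(x_{0},z_{0})\in\mathcal{H}$: this is precisely the content of the intermediate step in the proof of Lemma $\ref{lemma6}$, which showed $\bar{x}\tilde{w}\in\widetilde{\mathcal{A}}\bigl(\tfrac{\bar{x}\bar{p}}{\lambda}x,\tfrac{\bar{x}\bar{p}}{\lambda}z\bigr)$ for a suitable $\lambda>0$ and $(x,z)\in\mathcal{H}$. Applying the defining inequality of $\widetilde{\mathcal{Y}}(y,r)$ in $(\ref{set:Dual})$ with $\tilde{c}=\bar{x}\tilde{w}$ gives the uniform bound
\begin{equation}
\bar{x}\langle\tilde{w},\Gamma\rangle=\langle\bar{x}\tilde{w},\Gamma\rangle\;\leq\;x_{0}y-z_{0}r,\qquad\Gamma\in\widetilde{\mathcal{Y}}(y,r),\nonumber
\end{equation}
a finite quantity independent of $\Gamma$. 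Combining the two estimates yields the required uniform bound on $\mathbb{E}[\int_{0}^{T}V^{-}(t,\Gamma_{t})dt]$.

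The main potential obstacle is conceptual rather than technical: one must not try to control $\mathbb{E}[\int V^{-}(t,\Gamma_{t})dt]$ by attempting to integrate $\Gamma$ against the constant $1$, since $\Gamma\in\widetilde{\mathcal{Y}}(y,r)$ need not be integrable (as emphasised just before the lemma). The whole point of the shadow-endowment formulation is that the natural test function is not $1$ but $\tilde{w}_{t}=\exp(-\int_{0}^{t}\alpha_{v}dv)$; working with this test function is what makes both the admissibility of $\bar{x}\tilde{w}$ (via Assumption $(\ref{ass4.1})$) and the budget bound against $\Gamma$ available simultaneously. Once one picks the Fenchel comparison point to be $\bar{x}\tilde{w}_{t}$ rather than a constant, the argument reduces to the two estimates above.
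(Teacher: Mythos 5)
Your proof is correct and follows essentially the same route as the paper: both start from the Fenchel inequality $U(t,x)\leq V(t,y)+xy$ with the comparison point $x=\bar{x}\tilde{w}_{t}$, use Assumption $(\ref{ass4.1})$ to control the resulting $U$-term, and bound $\langle\tilde{w},\Gamma\rangle$ uniformly over $\Gamma\in\widetilde{\mathcal{Y}}(y,r)$ by exploiting the fact that a multiple of $\tilde{w}$ lies in some $\widetilde{\mathcal{A}}(x_{0},z_{0})$ with $(x_{0},z_{0})\in\mathcal{H}$. Your version is actually slightly streamlined: the paper's intermediate step of replacing $\Gamma_{t}$ by $\Gamma_{t}\mathbf{1}_{\{\Gamma_{t}\geq y_{0}(t)\}}+y_{0}(t)\mathbf{1}_{\{\Gamma_{t}<y_{0}(t)\}}$ (and then invoking Assumption $(\ref{down})$ to control $\sup_{t}y_{0}(t)$) is avoided by taking negative parts directly, at the small cost of noting that $\tilde{w}\preceq 1$ and $\sup_{t}U^{+}(t,\bar{x})<\infty$ force $\mathbb{E}[\int_{0}^{T}U^{+}(t,\bar{x}\tilde{w}_{t})dt]<\infty$, so that Assumption $(\ref{ass4.1})$ really does give $\mathbb{E}[\int_{0}^{T}U^{-}(t,\bar{x}\tilde{w}_{t})dt]<\infty$. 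One trivial slip: the identity you invoke should read $(a+b)^{-}\leq a^{-}+b^{-}$ (equivalently $(a-c)^{-}\leq a^{-}+c^{+}$); the bound you then write down, $V^{-}(t,\Gamma_{t})\leq U^{-}(t,\bar{x}\tilde{w}_{t})+\bar{x}\tilde{w}_{t}\Gamma_{t}$, is nonetheless correct.
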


\begin{pf}
Condition (\ref{ass41}) admits the existence of $\bar{x}\tilde
{w}_{t}\in\mathbb{L}_{+}^{0}$ such that\break $\mathbb{E} [\int_{0}^{T}\mathit{U}(t,\bar{x}\tilde{w}_{t})\,dt ]>-\infty$. Moreover,
by the proof of Proposition \ref{Prop1}, for each fixed $(y,r)\in
\mathcal{R}$, we can find a pair $(x,z)\in\mathfrak{H}(y,r)$ and there
exists a constant $\lambda(x,z)>0$ such that $\tilde{w}\in\widetilde
{\mathfrak{A}}(\frac{\bar{p}}{\lambda})$, where $\bar{p}$ is
defined by
(\ref{pbar}). Taking into account the inequality $\mathit
{U}(t,x)\leq
\mathit{V}(t,y)+xy$, for any $\Gamma\in\widetilde{\mathcal{Y}}(y,r)$
and $y_{0}(t)\triangleq\inf\{y>0\dvtx  \mathit{V}(t,y)<0\}$, we have
\begin{eqnarray*}
\mathbb{E} \biggl[\int_{0}^{T}
\mathit{V}^{-}(t,\Gamma_{t})\,dt \biggr]&\leq& -\mathbb{E}
\biggl[\int_{0}^{T}\mathit{V}\bigl(t,
\Gamma_{t}\mathbf{1}_{\{
\Gamma_{t}\geq y_{0}(t)\}}+y_{0}(t)
\mathbf{1}_{\{\Gamma_{t}<y_{0}(t)\}
}\bigr)\,dt \biggr]
\\
&\leq& -\mathbb{E} \biggl[\int_{0}^{T}\mathit{U}(t,
\bar{x}\tilde {w}_{t})\,dt \biggr]+\bar{x}\mathbb{E} \biggl[\int
_{0}^{T}\tilde {w}_{t}\Gamma
_{t}\,dt \biggr]
\\
&&{} +\bar{x}\mathbb{E} \biggl[\int_{0}^{T}
\tilde {w}_{t}\bigl(y_{0}(t)-\Gamma_{t}\bigr)
\mathbf{1}_{\{\Gamma_{t}<y_{0}(t)\}
}\,dt \biggr]
\\
&\leq& -\mathbb{E} \biggl[\int_{0}^{T}\mathit{U}(t,
\bar{x}\tilde {w}_{t})\,dt \biggr]+\bar{x}\frac{\bar{p}}{\lambda}+\bar{x}\int
_{0}^{T}y_{0}(t)\,dt.
\end{eqnarray*}
The last term is finitely valued and independent of the initial choice
of $\Gamma$ since $\tilde{w}_{t}\triangleq e^{\int_{0}^{t}(-\alpha
_{v})\,dv}\leq1$ for $t\in[0,T]$ and $\mathop{{\sup }}_{t\in
[0,T]}y_{0}(t)<\infty$ by condition (\ref{down}). Thus the conclusion
holds true.
\end{pf}

%
\begin{lemma}\label{lemma9-10}
Under assumptions of Theorem \ref{main-1}, for any $(y,r)\in
\mathcal
{R}$,\break $ \{\mathit{V}^{-}(\cdot,\Gamma_{\cdot}) \}_{\Gamma
\in
\widetilde{\mathcal{Y}}(y,r)}$ is uniformly integrable.
\end{lemma}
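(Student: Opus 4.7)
My aim is to upgrade the $L^{1}(\bar{\mathbb{P}})$-bound of Lemma~\ref{lemma9} to uniform integrability by combining the polynomial growth of $V^{-}$ forced by the reasonable asymptotic elasticity condition with the uniform $\langle \tilde{w},\Gamma\rangle$-budget built into the definition of $\widetilde{\mathcal{Y}}(y,r)$. First, I would invoke Corollary~\ref{Tech2}~(i) at the rate $AE_{\infty}[V]=\gamma<1$ (which equals $AE_{0}[U]<\infty$ under Assumption~\ref{ass:AEV}) to select some $\gamma_{1}\in(\gamma,1)$ and a threshold $y_{0}>0$ such that $V^{-}(t,y)\le M_{0}(y/y_{0})^{\gamma_{1}}$ for all $y\ge y_{0}$ and $t\in[0,T]$, where $M_{0}=\sup_{t}V^{-}(t,y_{0})<\infty$ is kept uniform in $t$ by Assumption~\ref{down}. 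In parallel, applying the defining inequality of $\widetilde{\mathcal{Y}}(y,r)$ to $\bar{x}\tilde{w}\in\widetilde{\mathcal{A}}(\bar{x}\bar{p},0)$ yields $\mathbb{E}\big[\int_{0}^{T}\tilde{w}_{t}\Gamma_{t}\,dt\big]\le \bar{x}\bar{p}\,y$ uniformly in $\Gamma\in\widetilde{\mathcal{Y}}(y,r)$.

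With these two ingredients, I would verify the absolute-continuity form of uniform integrability: for every $\varepsilon>0$ produce $\delta>0$ such that $\mathbb{E}\big[\int_{0}^{T}V^{-}(t,\Gamma_{t})\mathbf{1}_{A}\,dt\big]<\varepsilon$ whenever $A\in\mathcal{O}$ has $\bar{\mathbb{P}}(A)<\delta$. Splitting the integrand along $\{\Gamma_{t}\le y_{0}\}$ and its complement reduces the bounded piece to $M_{0}\bar{\mathbb{P}}(A)$. On $\{\Gamma_{t}>y_{0}\}\cap A$ the polynomial estimate reads $V^{-}(t,\Gamma_{t})\le M_{0}y_{0}^{-\gamma_{1}}\Gamma_{t}^{\gamma_{1}}$; rewriting $\Gamma^{\gamma_{1}}=(\tilde{w}\Gamma)^{\gamma_{1}}\tilde{w}^{-\gamma_{1}}$ and applying H\"older's inequality with conjugate exponents $1/\gamma_{1}$ and $1/(1-\gamma_{1})$ brings the uniform budget $\langle\tilde{w},\Gamma\rangle\le\bar{p}y$ into play while isolating an integral of $\tilde{w}^{-\gamma_{1}/(1-\gamma_{1})}\mathbf{1}_{A}$ that must then be controlled.

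The main obstacle lies in that last step: when $\alpha$ is unbounded the shadow weight $\tilde{w}_{t}=\exp(-\int_{0}^{t}\alpha_{v}dv)$ can degenerate to zero, so neither $\tilde{w}^{-1}$ nor $\Gamma$ is globally $\bar{\mathbb{P}}$-integrable, and neither the classical de~la~Vall\'{e}e-Poussin criterion nor the subsequence-splitting contradiction of \textit{Kramkov and Schachermayer} \cite{kram03} carries over directly, exactly as the author warns in the remark preceding the lemma. I would bypass this by a Fenchel--Young absorption: the inequality $\bar{x}\tilde{w}_{t}\Gamma_{t}\le \eta\,V^{-}(t,\Gamma_{t})+U^{-}(t,\bar{x}\tilde{w}_{t}/\eta)+\text{bounded}$, obtained from $U(x)\le V(y)+xy$ with a small parameter $\eta>0$ and $y=\Gamma_t$, $x=\bar{x}\tilde{w}_t/\eta$, allows a small fraction $\eta$ of $V^{-}(\Gamma)$ to be absorbed back on the left, while the remainder is dominated by the fixed $\bar{\mathbb{P}}$-integrable function $U^{-}(t,\bar{x}\tilde{w}_{t}/\eta)$ whose integrability follows from Assumption~\ref{ass4.1} (after enlarging $\bar{x}$). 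This is the only step in which $AE_{0}[U]<\infty$ and Assumption~\ref{ass4.1} are jointly essential, in keeping with the author's remark preceding the lemma.
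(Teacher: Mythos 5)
Your first two ingredients are fine: the sublinear bound $V^{-}(t,y)\le M_{0}(y/y_{0})^{\gamma_{1}}$ for $y\ge y_{0}$ does follow from Corollary~$\ref{Tech2}$ together with $(\ref{down})$, and $\langle\tilde{w},\Gamma\rangle\le\bar{p}y$ uniformly on $\widetilde{\mathcal{Y}}(y,r)$ is correct. The gap is in the rescue step. The ``Fenchel--Young absorption'' inequality you invoke,
\begin{equation}
\bar{x}\tilde{w}_{t}\Gamma_{t}\ \le\ \eta\,V^{-}(t,\Gamma_{t})+U^{-}(t,\bar{x}\tilde{w}_{t}/\eta)+\text{bounded},\nonumber
\end{equation}
does not follow from $U(t,x)\le V(t,y)+xy$: substituting $x=\bar{x}\tilde{w}_{t}/\eta$, $y=\Gamma_{t}$ and rearranging gives $\eta V^{-}(t,\Gamma_{t})\le\bar{x}\tilde{w}_{t}\Gamma_{t}+\eta U^{-}(t,\bar{x}\tilde{w}_{t}/\eta)$ on $\{V(t,\Gamma_{t})\le0\}$, i.e.\ exactly the Lemma~$\ref{lemma9}$-type bound with the roles reversed. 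Worse, the inequality you need cannot hold at all: under the very assumption $AE_{0}[U]<\infty$ that you use, $V^{-}(t,\cdot)$ grows at most like $y^{\gamma_{1}}$ with $\gamma_{1}<1$, so the right-hand side is sublinear in $\Gamma_{t}$ while the left-hand side is linear; wherever $\tilde{w}_{t}>0$ it fails pointwise for $\Gamma_{t}$ large. An integrated version over small sets $A$ would amount to showing that the family $\{\tilde{w}\Gamma:\Gamma\in\widetilde{\mathcal{Y}}(y,r)\}$ is uniformly integrable, which is essentially the statement being proved (only the $L^{1}$ budget $\langle\tilde{w},\Gamma\rangle\le\bar{p}y$ is available, and mass of $\Gamma$ can concentrate on small sets). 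So after the H\"older step stalls on $\int\tilde{w}^{-\gamma_{1}/(1-\gamma_{1})}\mathbf{1}_{A}$ --- an obstruction you correctly identify, since no integrability of negative powers of $\tilde{w}$ is assumed --- your argument has no valid way to close.

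The paper's proof takes a different and genuinely necessary route: it replaces $V^{-}$ by a nonnegative, nondecreasing, concave auxiliary function $\widetilde{V}$ with $\widetilde{V}(t,2y)\le\mu\widetilde{V}(t,y)$ for some $\mu\in(1,2)$ (this is where $AE_{0}[U]<\infty$ enters, via Corollary~$\ref{Tech2}$), and then argues by contradiction using Rosenthal's subsequence splitting lemma: if UI failed, one extracts disjoint sets $A^{k}$ carrying $\varepsilon$ of mass, forms $h^{n}=\sum_{k\le n}f^{k}\mathbf{1}_{A^{k}}$, observes that $h^{n}/n\in\widetilde{\mathcal{Y}}(y,r)$ because $\langle\tilde{c},h^{n}\rangle\le n$ for every $\tilde{c}\in\widetilde{\mathfrak{A}}(1)$, and compares the growth rates $\mu^{m}$ versus $2^{m}$ against the uniform bound from Lemma~$\ref{lemma9}$. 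If you want to repair your proof, you would need to reproduce some device of this kind that exploits membership of the renormalized concatenations $h^{n}/n$ in the dual set, rather than a pointwise domination of $\tilde{w}\Gamma$, which is unavailable.
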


\begin{pf}
By Lemma \ref{Tech}, the condition $\AEE_{0}[U]<\infty$ is equivalent
to the following assertion:
%
\begin{equation}
\label{V-1} \exists y_{0}>0\mbox{ and } \mu\in(1,2),\ \forall y\geq
y_{0},\qquad \mathit{V}(t,2y)\geq\mu\mathit{V}(t,y).
\end{equation}

Let $y_{0}>0$ and $\mu\in(1,2)$ be constants in the above inequality
(\ref{V-1}). Taking $\gamma=\log_{2}\mu\in(0,1)$, we define the
auxiliary function $\widetilde{\mathit{V}}(t,y)\dvtx [0,\infty)\times
(0,\infty)\rightarrow\mathbb{R}$ by
%
\begin{equation}
\widetilde{\mathit{V}}(t,y)\triangleq\cases{ \displaystyle-\frac{2y_{0}}{\gamma}
\mathit{V}'(t,2y_{0})-\mathit {V}(t,y),&\quad$y
\geq2y_{0}$,
\cr
\displaystyle-\mathit{V}(t,2y_{0})-
\frac{2y_{0}}{\gamma}\mathit {V}'(t,2y_{0}) \biggl(
\frac{y}{2y_{0}}\biggr)^{\gamma},&\quad$y<2y_{0}$.}
\end{equation}

For each fixed $t>0$, $\widetilde{\mathit{V}}(t,y)$ is a nonnegative,
concave and nondecreasing function which agrees with $-V(t,y)$
up to a constant for large enough values of $y$ and satisfies
%
\begin{equation}
\widetilde{\mathit{V}}(t,2y)\leq\mu\widetilde{\mathit {V}}(t,y)\qquad \mbox{for
all } y>0.
\end{equation}

Lemma \ref{lemma9} asserts that
\[
\mathop{{\sup}}_{\Gamma\in\widetilde{\mathcal{Y}}(y,r)}\mathbb {E} \biggl[\int_{0}^{T}
\mathit{V}^{-}(t,\Gamma_{t})\,dt \biggr]<\infty,
\]
and hence in light of the fact that $\mathit{V}^{-}$ and $\widetilde
{\mathit{V}}$ differ only by a constant in a neighborhood of $\infty$,
we will get
%
\begin{equation}
\label{V-3} \mathop{{\sup}}_{\Gamma\in\widetilde{\mathcal{Y}}(y,r)}\mathbb {E} \biggl[\int
_{0}^{T}\widetilde{\mathit{V}}(t,
\Gamma_{t})\,dt \biggr]<\infty.
\end{equation}

The validity of uniform integrability of the sequence $ (\mathit
{V}^{-}(\cdot,\Gamma_{\cdot}^{n}) )_{n\geq1}$ for $\Gamma
^{n}\in
\widetilde{\mathcal{Y}}(y,r)$, is therefore equivalent to the uniform
integrability of\break $(\widetilde{\mathit{V}}(\cdot,\Gamma_{\cdot
}^{n}))_{n\geq1}$.

To this end, we argue by contradiction. Suppose this sequence is not
uniformly integrable. Then by Rosenthal's subsequence splitting lemma,
we can find a subsequence $(f^{n})_{n\geq1}$, a constant $\varepsilon
>0$ and a disjoint sequence $(A^{n})_{n\geq1}$ of $(\Omega\times
[0,T],\mathcal{O})$ with
\[
A^{n}\in\mathcal{O},\qquad A^{i}\cap A^{j}=\varnothing
\qquad\mbox{if } i\neq j,
\]
such that
\[
\mathbb{E} \biggl[\int_{0}^{T}\widetilde{\mathit
{V}}\bigl(t,f_{t}^{n}\bigr)\mathbf {1}_{A^{n}}\,dt
\biggr]\geq\varepsilon\qquad\mbox{for } n\geq1.
\]

Define the sequence of random variables $(h^{n})_{n\geq1}$
\[
h_{t}^{n}=\sum_{k=1}^{n}f_{t}^{k}
\mathbf{1}_{A^{k}}.
\]

For any $\tilde{c}\in\widetilde{\mathfrak{A}}(1)$, we get
\[
{\bigl\langle} \tilde{c}, h^{n} {\bigr\rangle}\leq\sum
_{k=1}^{n} {\bigl\langle} \tilde{c}, f^{k}
{\bigr\rangle}\leq n.
\]
Hence $\frac{h^{n}}{n}\in\widetilde{\mathcal{Y}}(y,r)$.

One the other hand,
\[
\mathbb{E} \biggl[\int_{0}^{T}\widetilde{\mathit
{V}}\bigl(t,h_{t}^{n}\bigr)\,dt \biggr]\geq\sum
_{k=1}^{n}\mathbb{E} \biggl[\int_{0}^{T}
\widetilde{\mathit {V}}\bigl(t,f_{t}^{k}\bigr)
\mathbf{1}_{A^{k}}\,dt \biggr]\geq\varepsilon n.
\]
Therefore by taking $n=2^{m}$, via iteration, it produces
\begin{eqnarray*}
\mu^{m}\mathop{{\sup}}_{\Gamma_{t}\in\widetilde{\mathcal
{Y}}(y,r)}\mathbb {E} \biggl[\int
_{0}^{T}\widetilde{\mathit{V}}(t,
\Gamma_{t})\,dt \biggr] & \geq& \mu^{m}\mathbb{E} \biggl[\int
_{0}^{T}\widetilde{\mathit{V}}\biggl(t,
\frac
{h_{t}^{2^{m}}}{2^{m}}\biggr)\,dt \biggr]
\\
& \geq&\mathbb{E} \biggl[\int_{0}^{T}\widetilde{
\mathit {V}}\bigl(t,h_{t}^{2^{m}}\bigr)\,dt \biggr]
\geq2^{m}\varepsilon.
\end{eqnarray*}
Since $\mu\in(1,2)$, this contradicts (\ref{V-3}) for m large enough,
and therefore the conclusion holds true.
\end{pf}

Following the classical analysis, Lemma \ref{lemma9-10} together with
Fatou's lemma will deduce the existence of the dual optimizer.

\begin{lemma}\label{lemma10}
For any pair $(y,r)\in\mathcal{R}$ such that $\tilde{v}(y,r)<\infty$,
the optimal solution $\Gamma^{\ast}$ to the optimization problem
(\ref{tilde-v}) exists and is unique.
\end{lemma}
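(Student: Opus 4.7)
The plan is to follow the standard existence-via-convexity argument for duality problems in $\mathbb{L}^0_+$, adapted to this setting where the dual processes $\Gamma$ are not necessarily integrable. First, fix $(y,r)\in\mathcal{R}$ with $\tilde{v}(y,r)<\infty$ and pick a minimizing sequence $(\Gamma^n)_{n\geq 1}\subset\widetilde{\mathcal{Y}}(y,r)$ so that
\begin{equation}
\mathbb{E}\Big[\int_0^T \mathit{V}(t,\Gamma^n_t)\,dt\Big]\ \downarrow\ \tilde{v}(y,r),\qquad n\to\infty.\nonumber
\end{equation}
By Lemma \ref{lemma9}, $\sup_{\Gamma\in\widetilde{\mathcal{Y}}(y,r)}\mathbb{E}[\int_0^T \mathit{V}^-(t,\Gamma_t)\,dt]<\infty$, so $\tilde{v}(y,r)>-\infty$ and the problem is well posed.

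Next, I would invoke the convex compactness of $\widetilde{\mathcal{Y}}(y,r)$ established in Lemma \ref{lemma8} (bounded in measure, closed, convex in $\mathbb{L}^0_+$) together with the Koml\'{o}s-type lemma valid on convexly compact subsets of $\mathbb{L}^0_+$ (Theorem \ref{compact} and the accompanying results of \textit{{\v{Z}}itkovi\'{c}}). This produces forward convex combinations $\bar{\Gamma}^n\in\mathrm{conv}(\Gamma^n,\Gamma^{n+1},\ldots)$ that converge $\bar{\mathbb{P}}$-a.s.\ to some $\Gamma^{\ast}\in\widetilde{\mathcal{Y}}(y,r)$. Convexity of $\mathit{V}(t,\cdot)$ and of $\widetilde{\mathcal{Y}}(y,r)$ gives
\begin{equation}
\tilde{v}(y,r)\le \mathbb{E}\Big[\int_0^T \mathit{V}(t,\bar{\Gamma}^n_t)\,dt\Big]\le \sup_{k\ge n}\mathbb{E}\Big[\int_0^T \mathit{V}(t,\Gamma^k_t)\,dt\Big]\longrightarrow \tilde{v}(y,r),\nonumber
\end{equation}
so $(\bar{\Gamma}^n)$ is still a minimizing sequence.

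The main obstacle, and the one place where the non-integrability of $\Gamma$ bites, is passing to the limit in $\mathbb{E}[\int_0^T \mathit{V}(t,\bar{\Gamma}^n_t)\,dt]$ from only $\bar{\mathbb{P}}$-a.s.\ convergence. I would split $\mathit{V}=\mathit{V}^+-\mathit{V}^-$. For the negative part, Lemma \ref{lemma9-10} supplies uniform integrability of $(\mathit{V}^-(t,\bar{\Gamma}^n_t))$, so together with a.s.\ convergence one gets
\begin{equation}
\mathbb{E}\Big[\int_0^T \mathit{V}^-(t,\bar{\Gamma}^n_t)\,dt\Big]\ \longrightarrow\ \mathbb{E}\Big[\int_0^T \mathit{V}^-(t,\Gamma^{\ast}_t)\,dt\Big].\nonumber
\end{equation}
For the positive part, $\mathit{V}^+\ge 0$ and continuity of $\mathit{V}$ permit a direct application of Fatou's lemma, yielding
\begin{equation}
\mathbb{E}\Big[\int_0^T \mathit{V}^+(t,\Gamma^{\ast}_t)\,dt\Big]\le \liminf_n \mathbb{E}\Big[\int_0^T \mathit{V}^+(t,\bar{\Gamma}^n_t)\,dt\Big].\nonumber
\end{equation}
Combining these two inequalities gives $\mathbb{E}[\int_0^T \mathit{V}(t,\Gamma^{\ast}_t)\,dt]\le \tilde{v}(y,r)$, and since $\Gamma^{\ast}\in\widetilde{\mathcal{Y}}(y,r)$ the reverse inequality is automatic, so $\Gamma^{\ast}$ is optimal.

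Finally, uniqueness (in the $\bar{\mathbb{P}}$-a.s.\ sense) follows from strict convexity of $\mathit{V}(t,\cdot)$: if $\Gamma^{\ast}$ and $\tilde{\Gamma}^{\ast}$ were two distinct minimizers, then $\tfrac{1}{2}(\Gamma^{\ast}+\tilde{\Gamma}^{\ast})\in\widetilde{\mathcal{Y}}(y,r)$ by convexity, and strict convexity on the set $\{\Gamma^{\ast}\neq\tilde{\Gamma}^{\ast}\}$ (which has positive $\bar{\mathbb{P}}$-measure) would produce a strictly smaller value of the dual functional, contradicting the finiteness and minimality of $\tilde{v}(y,r)$.
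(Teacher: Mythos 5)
Your proposal is correct and follows essentially the same route as the paper: a minimizing sequence, forward convex combinations converging $\bar{\mathbb{P}}$-a.s.\ to a limit that stays in the closed, bounded-in-measure set $\widetilde{\mathcal{Y}}(y,r)$, Fatou's lemma for $\mathit{V}^{+}$, and the uniform integrability of $\mathit{V}^{-}$ from Lemma \ref{lemma9-10} to pass to the limit. The only difference is that you spell out the uniqueness argument via strict convexity of $\mathit{V}(t,\cdot)$, which the paper leaves implicit.
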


For the proof of conjugate duality between value functions $\tilde
{u}(x,z)$ and $\tilde{v}(y,r)$, similar to Lemma $11$ of Hugonnier and
Kramkov \cite{kram04}, we have the following result:

%
\begin{lemma}\label{lemma7}
If $\mathcal{G}\subseteq\mathbb{L}_{+}^{0}$ is convex and contains a
strictly positive random variable, then
\[
\mathop{{\sup}}_{g\in\mathcal{G}}\mathbb{E} \biggl[\int_{0}^{T}
\mathit {U}(t,xg_{t})\,dt \biggr]=\mathop{{\sup}}_{g\in \operatorname{cl}\mathcal{G}}\mathbb
{E} \biggl[\int_{0}^{T}\mathit{U}(t,xg_{t})\,dt
\biggr],\qquad x>0,
\]
where $\operatorname{cl}\mathcal{G}$ denotes the closure of $\mathcal{G}$ with respect
to convergence in measure~$\overline{\mathbb{P}}$.
\end{lemma}

%
\begin{lemma}\label{lemmanewadd}
For $\tilde{w}_{t}\triangleq e^{\int_{0}^{t}(-\alpha_{v})\,dv}$, we have
the following result:
%
\begin{equation}
\label{fixresult} \mathbb{E} \biggl[\int_{0}^{T}
\mathit{V}^{-}\bigl(t,\mathit{U}'(t,\tilde
{w}_{t})\bigr)\,dt \biggr]<\infty.
\end{equation}
\end{lemma}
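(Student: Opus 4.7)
The plan is to start from the Fenchel identity
\[
\mathit{V}(t,\mathit{U}'(t,x))=\mathit{U}(t,x)-x\mathit{U}'(t,x),
\]
applied at $x=\tilde{w}_t$. Combined with the elementary inequality $(a+b)^+\leq a^++b^+$, this yields the pointwise majorization
\[
\mathit{V}^-(t,\mathit{U}'(t,\tilde{w}_t))\leq \tilde{w}_t\mathit{U}'(t,\tilde{w}_t)+\mathit{U}^-(t,\tilde{w}_t).
\]
Because $\tilde{w}_t=\exp(-\int_0^t\alpha_v\,dv)\in(0,1]$, I would split the analysis into the two regions $\{\tilde{w}_t>x_0\}$ and $\{\tilde{w}_t\leq x_0\}$ for a threshold $x_0\in(0,1)$ to be chosen small enough that both parts (iii) and (iv) of Corollary $\ref{Tech2}$ are applicable on the small region, and, if $\bar{x}$ from Assumption $(\ref{ass4.1})$ happens to exceed $1$, also that $\bar{x}x_0$ remains in the validity range of those assertions.

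On $\{\tilde{w}_t>x_0\}$ I would use monotonicity of $\mathit{U}(t,\cdot)$ and $\mathit{U}'(t,\cdot)$ to obtain $\mathit{U}'(t,\tilde{w}_t)\leq \mathit{U}'(t,x_0)$ and $\mathit{U}^-(t,\tilde{w}_t)\leq \mathit{U}^-(t,x_0)$. Combined with the concavity difference-quotient bound $\mathit{U}'(t,x_0)\leq 2[\mathit{U}(t,x_0)-\mathit{U}(t,x_0/2)]/x_0$ and the assumed continuity of $\mathit{U}(\cdot,x_0)$ and $\mathit{U}(\cdot,x_0/2)$ on the compact interval $[0,T]$, this produces a deterministic bound on the integrand, so the contribution of this region to $\mathbb{E}[\int_0^T \mathit{V}^-(t,\mathit{U}'(t,\tilde{w}_t))\,dt]$ is finite.

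On $\{\tilde{w}_t\leq x_0\}$, Corollary $\ref{Tech2}$(iv) (the reformulation of $AE_0[\mathit{U}]<\infty$) yields $\tilde{w}_t\mathit{U}'(t,\tilde{w}_t)\leq C|\mathit{U}(t,\tilde{w}_t)|$ for a finite constant $C$, while Assumption $(\ref{down})$ ensures $\mathit{U}(t,\tilde{w}_t)<0$, hence $\mathit{V}^-(t,\mathit{U}'(t,\tilde{w}_t))\leq (C+1)|\mathit{U}(t,\tilde{w}_t)|$ on this event. The remaining task is to bound $\mathbb{E}[\int_0^T|\mathit{U}(t,\tilde{w}_t)|\mathbf{1}_{\{\tilde{w}_t\leq x_0\}}\,dt]$. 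I would achieve this through Corollary $\ref{Tech2}$(iii): if $\bar{x}\leq 1$, monotonicity of $\mathit{U}$ directly gives $|\mathit{U}(t,\tilde{w}_t)|\leq|\mathit{U}(t,\bar{x}\tilde{w}_t)|$; if $\bar{x}>1$, apply the inequality with $\mu=1/\bar{x}$ and $x=\bar{x}\tilde{w}_t$ to obtain $|\mathit{U}(t,\tilde{w}_t)|\leq \bar{x}^{\gamma/(1-\gamma)}|\mathit{U}(t,\bar{x}\tilde{w}_t)|$. Either way, the desired integrability follows from Assumption $(\ref{ass4.1})$.

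The main obstacle I foresee is this final comparison step, since the constant $\bar{x}$ from Assumption $(\ref{ass4.1})$ is an arbitrary positive number, not a priori bounded by $1$, so pure monotonicity is insufficient to relate $|\mathit{U}(t,\tilde{w}_t)|$ to $|\mathit{U}(t,\bar{x}\tilde{w}_t)|$. The scaling inequality provided by Corollary $\ref{Tech2}$(iii)---available precisely because $AE_0[\mathit{U}]<\infty$---is the crucial ingredient that bridges this gap and allows the integrability in Assumption $(\ref{ass4.1})$ to transfer to the estimate we need.
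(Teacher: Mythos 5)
Your proposal is correct and follows essentially the same route as the paper: both arguments reduce the claim to bounding $\mathbb{E}\big[\int_{0}^{T}\tilde{w}_{t}\mathit{U}'(t,\tilde{w}_{t})dt\big]$ (plus harmless bounded terms) and then control that integral by splitting at a small threshold, invoking the reformulations of $AE_{0}[\mathit{U}]<\infty$ from Corollary \ref{Tech2} together with Assumption $(\ref{ass4.1})$, with the same case distinction $\bar{x}\leq 1$ versus $\bar{x}>1$ and the same scaling inequality to transfer integrability from $\mathit{U}(t,\bar{x}\tilde{w}_{t})$ to $\mathit{U}(t,\tilde{w}_{t})$. The only cosmetic difference is the entry point: you use the exact conjugacy identity $\mathit{V}(t,\mathit{U}'(t,x))=\mathit{U}(t,x)-x\mathit{U}'(t,x)$, whereas the paper applies the Fenchel--Young inequality with the argument truncated at $y_{0}(t)=\inf\{y>0:\mathit{V}(t,y)<0\}$; both lead to the same estimate, and your explicit deterministic bound on the region $\{\tilde{w}_{t}>x_{0}\}$ is a legitimate (slightly more detailed) version of the paper's appeal to $\tilde{w}\preceq 1$.
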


\begin{pf}
Similar to the proof of Lemma \ref{lemma9}, we have $\mathbb
{E}
[\int_{0}^{T}\mathit{U}(t,\bar{x}\tilde{w}_{t})\,dt ]>-\infty$,
and by
the inequality $\mathit{U}(t,x)<\mathit{V}(t,y)+xy$, for any
$y_{0}(t)\triangleq\inf\{y>0\dvtx  \mathit{V}(t,y)<0\}$, we have
%
\begin{eqnarray}
\label{wogaile3}  &&\mathbb{E} \biggl[\int_{0}^{T}
\mathit{V}^{-}\bigl(t,\mathit{U}'(t,\tilde
{w}_{t})\bigr)\,dt \biggr]
\nonumber
\\
&&\qquad \leq -\mathbb{E} \biggl[\int_{0}^{T}\mathit{V}
\bigl(t,\mathit {U}'(t,\tilde {w}_{t})
\mathbf{1}_{\{\mathit{U}'(t,\tilde{w}_{t})\geq y_{0}(t)\}
}+y_{0}(t)\mathbf{1}_{\{\mathit{U}'(t,\tilde{w}_{t})<y_{0}(t)\}
}\bigr)\,dt
\biggr]\hspace*{-30pt}
\nonumber
\\
&&\qquad \leq -\mathbb{E} \biggl[\int_{0}^{T}\mathit{U}(t,
\bar{x}\tilde {w}_{t})\,dt \biggr]+\bar{x}\mathbb{E} \biggl[\int
_{0}^{T}\tilde {w}_{t}\mathit
{U}'(t,\tilde{w}_{t})\,dt \biggr]
\\
&&\qquad\quad{} +  \bar{x}\mathbb{E} \biggl[\int_{0}^{T}\tilde
{w}_{t}\bigl(y_{0}(t)-\mathit {U}'(t,
\tilde{w}_{t})\bigr)\mathbf{1}_{\{\mathit{U}'(t,\tilde
{w}_{t})<y_{0}(t)\}}\,dt \biggr]
\nonumber
\\
&&\qquad \leq -\mathbb{E} \biggl[\int_{0}^{T}\mathit{U}(t,
\bar{x}\tilde {w}_{t})\,dt \biggr]+\bar{x}\mathbb{E} \biggl[\int
_{0}^{T}\tilde {w}_{t}\mathit
{U}'(t,\tilde{w}_{t})\,dt \biggr]
+\bar{x}\int
_{0}^{T}y_{0}(t)\,dt.
\nonumber
\end{eqnarray}
We already know the first term and the third term are bounded, as for
the second term, we have two different cases:
\begin{longlist}[\textit{Case} 2.]
\item[\textit{Case} 1.] If $\bar{x}\leq1$, the second term can be rewritten as
\begin{eqnarray*}
\mathbb{E} \biggl[\int_{0}^{T}
\tilde{w}_{t}\mathit{U}'(t,\tilde {w}_{t})\,dt
\biggr] & =&\mathbb{E} \biggl[\int_{0}^{T}
\tilde{w}_{t}\mathit {U}'(t,\tilde{w}_{t})
\mathbf{1}_{\{\tilde{w}_{t}\leq x_{0}\}}\,dt \biggr]
\\
&&{} +\mathbb{E} \biggl[\int_{0}^{T}
\tilde{w}_{t}\mathit{U}'(t,\tilde {w}_{t})
\mathbf{1}_{\{\tilde{w}_{t}>x_{0}\}}\,dt \biggr],
\end{eqnarray*}
where $x_{0}$ is the uniform constant in Lemma \ref{Tech} such that
for all $t\in[0,T]$,
%
\begin{equation}
\label{wogaile} x\mathit{U}'(t,x)< \biggl(\frac{\gamma}{1-\gamma} \biggr)
\bigl(-\mathit {U}(t,x) \bigr)\qquad\mbox{for } 0<x\leq x_{0}.
\end{equation}

Again, by the fact that $\tilde{w}_t\leq1$ for $t\in[0,T]$, it
follows that
\[
\mathbb{E} \biggl[\int_{0}^{T}
\tilde{w}_{t}\mathit{U}'(t,\tilde {w}_{t})
\mathbf{1}_{\{\tilde{w}_{t}>x_{0}\}}\,dt \biggr]<\infty,
\]
and we also have
\begin{eqnarray*}
\mathbb{E} \biggl[\int_{0}^{T}
\tilde{w}_{t}\mathit{U}'(t,\tilde {w}_{t})
\mathbf{1}_{\{\tilde{w}_{t}\leq x_{0}\}}\,dt \biggr]&\leq& - \biggl(\frac{\gamma}{1-\gamma} \biggr)
\mathbb{E} \biggl[\int_{0}^{T}\mathit {U}(t,
\tilde{w}_{t})\mathbf{1}_{\{\tilde{w}_{t}\leq x_{0}\}}\,dt \biggr]
\\
&\leq& \biggl(\frac{\gamma}{1-\gamma} \biggr)\mathbb{E} \biggl[\int
_{0}^{T}\mathit{U}^{-}(t,\bar{x}
\tilde{w}_{t})\,dt \biggr]<\infty,
\end{eqnarray*}
by using inequality (\ref{wogaile}), the increasing property of
$\mathit{U}(t,x)$ with respect to $x$ and condition (\ref{ass41}).

\item[\textit{Case} 2.] If $\bar{x}>1$, the second term can be rewritten as
\begin{eqnarray*}
\mathbb{E} \biggl[\int_{0}^{T}
\tilde{w}_{t}\mathit{U}'(t,\tilde {w}_{t})\,dt
\biggr] & =&\mathbb{E} \biggl[\int_{0}^{T}
\tilde{w}_{t}\mathit {U}'(t,\tilde{w}_{t})
\mathbf{1}_{\{\bar{x}\tilde{w}_{t}\leq x_{0}\}
}\,dt \biggr]
\\
&&{} +\mathbb{E} \biggl[\int_{0}^{T}
\tilde{w}_{t}\mathit{U}'(t,\tilde {w}_{t})
\mathbf{1}_{\{\bar{x}\tilde{w}_{t}>x_{0}\}}\,dt \biggr],
\end{eqnarray*}
where $x_{0}$ is the uniform constant in Lemma \ref{Tech} such that
for all $t\in[0,T]$, inequality (\ref{wogaile}) holds and moreover,
%
\begin{equation}
\label{wogaile2} \mathit{U}\biggl(t,\frac{1}{\bar{x}} x\biggr)>\biggl(
\frac{1}{\bar{x}}\biggr)^{-{\gamma}/{(1-\gamma)}}\mathit{U}(t,x)\qquad\mbox{for } 0<x\leq
x_{0},
\end{equation}
holds for all $t\in[0,T]$.

Again, the second term is bounded since $\bar{x}\tilde{w}_t\leq\bar{x}$
for $t\in[0,T]$, and for the first term, we have
\begin{eqnarray*}
\mathbb{E} \biggl[\int_{0}^{T}
\tilde{w}_{t}\mathit{U}'(t,\tilde {w}_{t})
\mathbf{1}_{\{\bar{x}\tilde{w}_{t}\leq x_{0}\}}\,dt \biggr] & \leq& - \biggl(\frac{\gamma}{1-\gamma} \biggr)
\mathbb{E} \biggl[\int_{0}^{T}\mathit {U}(t,
\tilde{w}_{t})\mathbf{1}_{\{\bar{x}\tilde{w}_{t}\leq x_{0}\}
}\,dt \biggr]
\\
& \leq& \biggl(\frac{\gamma}{1-\gamma} \biggr) \biggl(\frac{1}{\bar
{x}}
\biggr)^{-{\gamma}/(1-\gamma)}\mathbb{E} \biggl[\int_{0}^{T}
\mathit {U}^{-}(t,\bar {x}\tilde{w}_{t})\,dt \biggr]
\\
&<& \infty
\end{eqnarray*}
\end{longlist}
by inequalities (\ref{wogaile}) and (\ref{wogaile2}) and condition
(\ref{ass41}).

Hence the second term in (\ref{wogaile3}) is finite, and therefore
result (\ref{fixresult}) holds true.
\end{pf}

We emphasize that we have to revise the classical Minimax theorem based
on~$\mathbb{L}^{1}$ to derive the important conjugate duality
relationship. The\vspace*{1pt} following Minimax theorem by Kauppila \cite{Helena}
can serve as a substitute tool on the space $\mathbb{L}_{+}^{0}$ for
convexly compact sets.

%
\begin{theorem}[(Minimax theorem)]\label{Minimax}
Let $A$ be a nonempty convex subset of a topological space, and $B$ a
nonempty, closed, convex and convexly compact subset of a topological
vector space. Let $H\dvtx  A\times B\rightarrow\mathbb{R}$ be convex on $A$,
and concave and upper-semicontinuous on $B$. Then
\[
\sup_{B}\inf_{A}H=\inf
_{A}\sup_{B}H.
\]
\end{theorem}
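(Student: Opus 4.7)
The plan is to adapt the classical Ky Fan argument, with the convex compactness of $B$ (Definition \ref{def}) playing the role of ordinary compactness. The easy direction $\sup_B \inf_A H \le \inf_A \sup_B H$ is immediate from $\inf_A H(a, b_0) \le H(a_0, b_0) \le \sup_B H(a_0, b)$. For the reverse, I argue by contradiction: suppose $\sup_B \inf_A H < c < \inf_A \sup_B H$ for some $c$, and aim to produce an $a^{\ast} \in A$ with $\sup_B H(a^{\ast}, b) \le c$.

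For each $a \in A$ set $F_a \triangleq \{b \in B : H(a, b) \ge c\}$, which is closed (by upper semicontinuity of $H(a,\cdot)$) and convex (by concavity of $H(a,\cdot)$). The assumption $\sup_B \inf_A H < c$ says $\inf_A H(a, b) < c$ for every $b \in B$, so $\bigcap_{a \in A} F_a = \emptyset$. Convex compactness of $B$ then forces the finite intersection property to fail: there exist $a_1, \dots, a_n \in A$ with $\bigcap_{i=1}^{n} F_{a_i} = \emptyset$, equivalently $\min_i H(a_i, b) < c$ for every $b \in B$. This is the only place where the hypothesis on $B$ enters, and I expect it to be the main obstacle, or more precisely the place where the proof really departs from the classical Sion--Ky~Fan argument: one must invoke Definition \ref{def} rather than the Heine--Borel-style characterization of compactness, and correspondingly verify that the sets $F_a$ are not merely closed but convex.

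Having reduced to finitely many $a_i$, I pass to finite dimensions. Let $\Phi(b) \triangleq (H(a_1, b), \dots, H(a_n, b))$ and consider the lower shadow $K \triangleq \{y \in \mathbb{R}^n : \exists\, b \in B,\ y \le \Phi(b)\ \text{coordinatewise}\}$. Concavity of each $H(a_i, \cdot)$ makes $K$ convex: if $y \le \Phi(b)$ and $y' \le \Phi(b')$ then $\lambda y + (1-\lambda)y' \le \Phi(\lambda b + (1-\lambda)b')$. The previous paragraph says $K$ is disjoint from $C \triangleq (c,\dots,c) + \mathbb{R}_+^n$. Standard separation in $\mathbb{R}^n$ yields a nonzero $\lambda$ with $\sup_K \lambda \cdot y \le \inf_C \lambda \cdot y$; since $C$ is upper-unbounded along each coordinate axis the separator satisfies $\lambda \ge 0$, and after normalizing I may take $\lambda \in \Delta_n$, whence $\sum_i \lambda_i H(a_i, b) \le c$ for every $b \in B$. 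Convexity of $H$ in its first argument then gives $a^{\ast} \triangleq \sum_i \lambda_i a_i \in A$ with $H(a^{\ast}, b) \le \sum_i \lambda_i H(a_i, b) \le c$ and hence $\sup_B H(a^{\ast}, b) \le c$, contradicting $c < \inf_A \sup_B H$ and finishing the proof.
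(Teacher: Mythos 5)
The paper itself does not prove this theorem; it cites Theorem A.1 of Kauppila \cite{Helena}. Your argument is a correct, self-contained proof, and it is the natural adaptation of the Ky Fan/Sion argument to the convex-compactness setting. You invoke Definition \ref{def} at exactly the right place: the sets $F_a=\{b\in B : H(a,b)\ge c\}$ are closed in $B$ by upper semicontinuity of $H(a,\cdot)$ \emph{and} convex by its concavity, so the family $\{F_a\}_{a\in A}$ is admissible for Definition \ref{def}; emptiness of $\bigcap_{a\in A} F_a$ (which follows from $\sup_B\inf_A H<c$) then produces a finite subfamily with empty intersection. The rest is classical and carried out correctly: the lower shadow $K$ of $\Phi(B)$ is convex by concavity of $H(a_i,\cdot)$ together with convexity of $B$, it is disjoint from $C=(c,\dots,c)+\mathbb{R}_+^n$ by the choice of the finite subfamily, separation of two disjoint nonempty convex sets in $\mathbb{R}^n$ needs no further hypotheses, coordinatewise unboundedness of $C$ forces the separating vector $\lambda$ into the simplex, and convexity of $H(\cdot,b)$ together with convexity of $A$ gives $a^{\ast}=\sum_i\lambda_i a_i\in A$ with $\sup_B H(a^{\ast},\cdot)\le c<\inf_A\sup_B H$, the desired contradiction. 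Note that, consistent with the paper's remark that Kauppila's version (unlike {\v{Z}}itkovi\'c's Theorem 4.9) requires semicontinuity in only one variable, your proof uses upper semicontinuity only of $H(a,\cdot)$ on $B$ and nothing on $A$ beyond convexity.
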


See Theorem A.1 in Appendix A of Kauppila \cite{Helena}.

%
\begin{lemma}\label{lemma11}
Under assumptions of Theorem \ref{main-1}, the conjugate duality
results hold
%
\begin{eqnarray}
\label{conjugate}  \tilde{u}(x,z)&=&\mathop{{\inf}}_{(y,r)\in\mathcal{R}}\bigl\{\tilde
{v}(y,r)+xy-zr\bigr\},\qquad (x,z)\in\mathcal{H},
\nonumber
\\[-8pt]
\\[-8pt]
 \tilde{v}(y,r)&=&\mathop{{\sup}}_{(x,z)\in\mathcal{H}}\bigl\{\tilde {u}(x,z)-xy+zr\bigr
\},\qquad (y,r)\in\mathcal{R}.
\nonumber
\end{eqnarray}
\end{lemma}

\begin{pf}
For $n>0$, we define $\mathcal{S}_{n}$ as a subset in $\mathbb
{L}_{+}^{0}(\Omega\times[0,T],\mathcal{O},\overline{\mathbb{P}})$ by
\[
\mathcal{S}_{n}=\bigl\{\tilde{c}\in\mathbb{L}_{+}^{0}\dvtx 0
\leq\tilde {c}\leq n\tilde{w}\bigr\}.
\]
It is clear that sets $\mathcal{S}_{n}$ are closed, convex and bounded
in probability and hence convexly compact in $\mathbb{L}_{+}^{0}$.

It is easy to verify that the functional
\[
\tilde{c}\mapsto\mathbb{E} \biggl[\int_{0}^{T}
\bigl(\mathit {U}(t,\tilde {c}_{t})-\tilde{c}_{t}
\Gamma_{t} \bigr)\,dt \biggr]
\]
is upper-semicontinuous on $\mathcal{S}_{n}$ under convergence in
measure $\overline{\mathbb{P}}$, for all $\Gamma\in\widetilde{\mathcal
{Y}}(y,r)$ and $(y,r)\in\mathcal{R}$.

Lemma \ref{lemma8} implies that $\widetilde{\mathcal{Y}}(y,r)$ is a
closed convex subset of $\mathbb{L}_{+}^{0}$. We can use the above
Minimax Theorem \ref{Minimax} to get the following equality: for
fixed~$n$,
\begin{eqnarray*}
&& \mathop{{\sup}}_{\tilde{c}\in\mathcal{S}_{n}}\mathop{{\inf }}_{\Gamma\in \widetilde{\mathcal{Y}}(y,r)}\mathbb{E}
\biggl[\int_{0}^{T} \bigl(\mathit{U}(t,
\tilde{c}_{t})-\tilde{c}_{t}\Gamma_{t} \bigr)\,dt
\biggr]
\\
&&\qquad =\mathop{{\inf}}_{\Gamma\in\widetilde{\mathcal{Y}}(y,r)}\mathop {{\sup}}_{\tilde{c}\in\mathcal{S}_{n}}\mathbb{E}
\biggl[\int_{0}^{T} \bigl(\mathit{U}(t,
\tilde{c}_{t})-\tilde{c}_{t}\Gamma_{t} \bigr)\,dt
\biggr].
\end{eqnarray*}

By the bipolar relationship (\ref{eqnbipolar1}) and the definition,
we have
%
\begin{equation}
\bigcup_{(x,z)\in\mathcal{H}}\widetilde{\mathcal{A}}(x,z)=\bigcup
_{k>0}\widetilde{\mathfrak{A}}(k).
\end{equation}

To continue the proof, we define the auxiliary set
\[
\mathfrak{A}'(k)\triangleq \Bigl\{\tilde{c}\in\widetilde{\mathfrak
{A}}(k)\dvtx  \sup_{\Gamma\in\widetilde{\mathcal{Y}}(y,r)}\langle \tilde{c}, \Gamma\rangle=k \Bigr\},
\]
and clearly, it follows that
%
\begin{equation}
\label{conj-1} \bigcup_{k>0}\widetilde{
\mathfrak{A}}(k)=\bigcup_{(x,z)\in\mathcal
{H}}\widetilde{
\mathcal{A}}(x,z)=\bigcup_{k>0}\mathfrak{A}'(k).
\end{equation}
We first show that
%
\begin{eqnarray}
\label{conj-2}
&& \mathop{{\lim}}_{n\rightarrow\infty}\mathop{{\sup}}_{\tilde
{c}\in\mathcal {S}_{n}}
\mathop{{\inf }}_{\Gamma\in\widetilde
{\mathcal{Y}}(y,r)}\mathbb{E} \biggl[\int_{0}^{T}
\bigl(\mathit {U}(t,\tilde{c}_{t})-\tilde {c}_{t}
\Gamma_{t} \bigr)\,dt \biggr]
\nonumber
\\[-8pt]
\\[-8pt]
&&\qquad  =\mathop{{\sup}}_{k>0}\mathop{{\sup }}_{\tilde{c}\in\mathfrak
{A}'(k)}\mathop{{
\inf}}_{\Gamma\in\widetilde{\mathcal
{Y}}(y,r)}\mathbb{E} \biggl[\int_{0}^{T}
\bigl(\mathit{U}(t,\tilde{c}_{t})-\tilde{c}_{t}\Gamma
_{t} \bigr)\,dt \biggr].
\nonumber
\end{eqnarray}

The direction of inequality \textquotedblleft$\geq$''
holds by
\begin{eqnarray*}
&& \mathop{{\lim}}_{n\rightarrow\infty}\mathop{{\sup}}_{\tilde
{c}\in\mathcal {S}_{n}}\mathop{{\inf
}}_{\Gamma\in\widetilde
{\mathcal{Y}}(y,r)}\mathbb{E} \biggl[\int_{0}^{T}
\bigl(\mathit {U}(t,\tilde{c}_{t})-\tilde {c}_{t}
\Gamma_{t} \bigr)\,dt \biggr]
\\
&&\qquad \geq \mathop{{\lim }}_{n\rightarrow\infty }\mathop{{\sup }}_{\tilde{c}\in\mathfrak
{A}'(k)\cap\mathcal{S}_{n}}\mathop{{
\inf}}_{\Gamma\in\widetilde
{\mathcal{Y}}(y,r)}\mathbb{E} \biggl[\int_{0}^{T}
\bigl(\mathit{U}(t,\tilde{c}_{t})-\tilde{c}_{t}\Gamma
_{t} \bigr)\,dt \biggr]
\\
&&\qquad =  \mathop{{\sup}}_{\tilde{c}\in\mathfrak{A}'(k)}\mathop{{\inf }}_{\Gamma\in \widetilde{\mathcal{Y}}(y,r)}\mathbb{E}
\biggl[\int_{0}^{T} \bigl(\mathit{U}(t,
\tilde{c}_{t})-\tilde{c}_{t}\Gamma_{t} \bigr)\,dt
\biggr]\qquad \forall k>0.
\end{eqnarray*}
The other direction \textquotedblleft$\leq$'' is
obvious since for any $(x,z)\in\mathcal{H}$, we have $n\tilde{w}\in
\mathfrak{A}'(n\bar{p})$, and hence $\mathcal{S}_{n}\subset
\mathfrak
{A}'(n\bar{p})$.

To continue the proof, we need to prepare finiteness results as below.

From definitions in Lemma \ref{lemma8} and by Lemma \ref{lemma7},
it is easy to see that
%
\begin{eqnarray}\label{finite}
\mathop{{\sup}}_{\tilde{c}\in\widetilde{ \mathfrak{A}}(k)}\mathbb {E} \biggl[\int
_{0}^{T}\mathit{U}(t,\tilde{c}_{t})\,dt
\biggr] &=& \mathop{{\sup }}_{\tilde{c}\in \mathfrak{A}(k)}\mathbb{E} \biggl[\int
_{0}^{T}\mathit{U}(t,\tilde {c}_{t})\,dt
\biggr]
\nonumber\\[-8pt]\\[-8pt]
&=& \mathop{{\sup}}_{(x,z)\in k\mathfrak{H}(y,r)}\tilde {u}(x,z),\qquad k>0,\nonumber
\end{eqnarray}
and we claim that
%
\begin{equation}
\label{frak-u} \mathop{{\sup}}_{(x,z)\in k\mathfrak{H}(y,r)}\tilde{u}(x,z)<\infty,\qquad k>0.
\end{equation}

Since the set $\mathcal{R}$ is open, and the set $\mathfrak{H}(y,r)$ is
bounded, (\ref{frak-u}) follows from the concavity of $\tilde{u}$ and
$\tilde{u}(x,z)<\infty$ for all $(x,z)\in\mathcal{H}$.

Now, by (\ref{conj-1}), (\ref{frak-u}) and the definition of domain
$\mathcal{H}$, we have further equalities:
\begin{eqnarray*}
&& \mathop{{\sup}}_{k>0}\mathop{{\sup }}_{\tilde{c}\in\mathfrak
{A}'(k)}\mathop{{
\inf}}_{\Gamma\in\widetilde{\mathcal
{Y}}(y,r)}\mathbb{E} \biggl[\int_{0}^{T}
\bigl(\mathit{U}(t,\tilde{c}_{t})-\tilde{c}_{t}\Gamma
_{t} \bigr)\,dt \biggr]
\\
&&\qquad =  \mathop{{\sup}}_{k>0} \biggl\{\mathop{{\sup}}_{\tilde{c}\in
\mathfrak {A}'(k)}
\mathbb{E} \biggl[\int_{0}^{T}\mathit{U}(t,\tilde
{c}_{t})\,dt \biggr]-k \biggr\}
\\
&&\qquad =\mathop{{\sup}}_{k>0} \biggl\{
\mathop{{\sup }}_{\tilde {c}\in\widetilde{\mathfrak{A}}(k)}\mathbb{E} \biggl[\int_{0}^{T}
\mathit{U}(t,\tilde{c}_{t})\,dt \biggr]-k \biggr\}
\\
&&\qquad =  \mathop{{\sup}}_{k>0} \bigl\{\mathop{{\sup}}_{(x,z)\in
k\mathfrak {H}(y,r)}
\tilde{u}(x,z)-k \bigr\}=\mathop{{\sup }}_{(x,z)\in\mathcal {H}}\bigl\{
\tilde{u}(x,z)-xy+zr\bigr\}.
\end{eqnarray*}

On the other hand,
\begin{eqnarray*}
&& \mathop{{\inf}}_{\Gamma\in\widetilde{\mathcal{Y}}(y,r)}\mathop {{\sup}}_{\tilde {c}\in\mathcal{S}_{n}}\mathbb{E} \biggl[
\int_{0}^{T} \bigl(\mathit {U}(t,
\tilde{c}_{t})-\tilde{c}_{t}\Gamma_{t} \bigr)\,dt
\biggr]
\\
&&\qquad =\mathop {{\inf}}_{\Gamma\in\widetilde{\mathcal{Y}}(y,r)}\mathbb{E} \biggl[\int_{0}^{T}
\mathit{V}^{n}(t,\Gamma_{t},\omega)\,dt \biggr]\triangleq
\tilde {v}^{n}(y,r),
\end{eqnarray*}
where we define $\mathit{V}^{n}(t,y,\omega)$ according to the
definition of set $\mathcal{S}_{n}$ as
\[
\mathit{V}^{n}(t,y,\omega)=\mathop{{\sup}}_{0<x\leq n\tilde
{w}} \bigl[
\mathit{U}(t,x)-xy \bigr].
\]

Consequently, it is sufficient to show that
\[
\mathop{{\lim}}_{n\rightarrow\infty}\tilde{v}^{n}(y,r)=\mathop {{
\lim}}_{n\rightarrow\infty}\mathop{{\inf}}_{\Gamma\in
\widetilde{\mathcal {Y}}(y,r)}\mathbb{E} \biggl[\int
_{0}^{T}\mathit {V}^{n}(t,\Gamma
_{t},\omega)\,dt \biggr]=\tilde{v}(y,r),\qquad (y,r)\in\mathcal{R}.
\]

Evidently, $\tilde{v}^{n}(y,r)\leq\tilde{v}(y,r)$, for $n\geq1$. Let
$(\Gamma^{n})_{n\geq1}$ be a sequence in $\widetilde{\mathcal
{Y}}(y,r)$ such that
\[
\mathop{{\lim}}_{n\rightarrow\infty}\mathbb{E} \biggl[\int_{0}^{T}
\mathit {V}^{n}\bigl(t,\Gamma_{t}^{n},\omega
\bigr)\,dt \biggr]=\mathop{{\lim }}_{n\rightarrow\infty }\tilde{v}^{n}(y,r).
\]

There exists a sequence $h^{n}\in\operatorname{conv}(\Gamma
^{n},\Gamma
^{n+1},\ldots)$, $n\geq1$, converging almost surely to a random
variable $\Gamma$. Then $\Gamma\in\widetilde{\mathcal{Y}}(y,r)$ is
verified because the set $\widetilde{\mathcal{Y}}(y,r)$ is closed under
convergence in finite measure $\overline{\mathbb{P}}$.

We claim that the sequence of processes $(\mathit{V}^{n}(\cdot,
h_{\cdot
}^{n},\omega)^{-}), n\geq1$ is uniformly integrable. In fact, we can rewrite
%
\begin{eqnarray}\label{change1}
\bigl(\mathit{V}^{n}\bigl(t,h_{t}^{n},
\omega\bigr) \bigr)^{-} &=& \bigl(\mathit {V}^{n}
\bigl(t,h_{t}^{n},\omega\bigr) \bigr)^{-}
\mathbf{1}_{\{h_{t}^{n}\leq
\mathit
{U}'(t,\tilde{w}_{t})\}}
\nonumber\\[-8pt]\\[-8pt]
&&{} + \bigl(\mathit{V}^{n}\bigl(t,h_{t}^{n},
\omega \bigr) \bigr)^{-}\mathbf{1}_{\{h_{t}^{n}>\mathit{U}'(t,\tilde{w}_{t})\}},\nonumber
\end{eqnarray}
since $\mathit{V}^{n}(t,y,\omega)=\mathit{V}(t,y)$ for $y\geq
\mathit
{U}'(t,\tilde{w}_{t})\geq\mathit{U}'(t,n\tilde{w}_{t})$ by definition.
The proof of Lemma \ref{lemma9-10} gives the uniform integrability of
the sequence of processes $ (\mathit{V}^{n}(\cdot,h_{\cdot
}^{n},\omega) )^{-}\mathbf{1}_{\{h_{\cdot}^{n}>\mathit
{U}'(\cdot,\tilde{w}_{\cdot})\}}, n\geq1$.

On the other hand, by the monotonicity of $(V^{n})^{-}$, for all $n>1$,
%
\begin{eqnarray}
\bigl(\mathit{V}^{n}\bigl(t,h_{t}^{n},\omega
\bigr) \bigr)^{-}\mathbf{1}_{\{
h_{t}^{n}\leq \mathit{U}'(t,\tilde{w}_{t})\}}&\leq& \bigl(\mathit
{V}^{1}\bigl(t,h_{t}^{n},\omega\bigr)
\bigr)^{-}\mathbf{1}_{\{h_{t}^{n}\leq
\mathit
{U}'(t,\tilde{w}_{t})\}}
\nonumber\\[-8pt]\\[-8pt]
&\leq& \bigl(\mathit{V}\bigl(t,
\mathit{U}'(t,\tilde {w}_{t})\bigr) \bigr)^{-}.\nonumber
\end{eqnarray}
By Lemma \ref{lemmanewadd}, the right-hand side is integrable in the
product space, and hence the sequence $ (\mathit{V}^{n}(\cdot,h_{\cdot}^{n},\omega) )^{-}\mathbf{1}_{\{h_{\cdot}^{n}\leq
\mathit
{U}'(\cdot,\tilde{w}_{\cdot})\}}, n\geq1$ is also uniformly
integrable. Thus our claim holds true. Moreover, we have the following
inequalities:
\begin{eqnarray*}
\mathop{{\lim}}_{n\rightarrow\infty}\mathbb{E} \biggl[\int_{0}^{T}
\mathit {V}^{n}\bigl(t,\Gamma_{t}^{n},\omega
\bigr)\,dt \biggr]&\geq& \mathop{{\lim\inf }}_{n\rightarrow \infty}\mathbb{E} \biggl[\int
_{0}^{T}\mathit {V}^{n}
\bigl(t,h_{t}^{n},\omega\bigr)\,dt \biggr]
\\
&\geq& \mathbb{E} \biggl[\int_{0}^{T}\mathit{V}(t,
\Gamma_{t})\,dt \biggr]\geq \tilde{v}(y,r),
\end{eqnarray*}
which prove
%
\begin{equation}
\label{tildevtildef} \tilde{v}(y,r)=\sup_{(x,z)\in\mathcal{H}}\bigl\{
\tilde{u}(x,z)-xy+zr\bigr\}.
\end{equation}

Equality (\ref{conjugate}) is a direct consequence of equality
(\ref{tildevtildef}) and the properties of convex conjugation; see
Corollary $12.2.2$ and Theorem $12.2$ in Rockafellar \cite{Rock}.
\end{pf}

\begin{pf*}{Proof of Theorem \ref{main-1}}
It is now sufficient to show that the conjugate value function $\tilde
{v}$ is $(-\infty,\infty]$-valued on $\mathcal{R}$.

The Legendre--Fenchel transform gives that
\[
\mathit{U}(t,x)\leq\mathit{V}(t,y)+xy.
\]
By integration, it is easy to see for any $\tilde{c}\in\widetilde
{\mathcal{A}}(x,z)$ and $\Gamma\in\widetilde{\mathcal{Y}}(y,r)$,
\[
\mathbb{E} \biggl[\int_{0}^{T}\mathit{U}(t,
\tilde{c}_{t})\,dt \biggr]\leq \mathbb {E} \biggl[\int
_{0}^{T}\mathit{V}(t,\Gamma_{t})\,dt
\biggr]+\mathbb {E} \biggl[\int_{0}^{T}
\tilde{c}_{t}\Gamma_{t}\,dt \biggr].
\]
Proposition \ref{Prop1} deduces that
\[
\tilde{u}(x,z)\leq\tilde{v}(y,r)+xy-zr.
\]
Hence for all $(y,r)\in\mathcal{R}$, we have $\tilde{v}(y,r)>-\infty$
by Lemma \ref{lemma6}.

On the other hand, thanks to conjugate duality (\ref{conjugate}) and
Bipolar relationship~(\ref{eqnbipolar1}), we can follow proofs of
Lemmas \ref{lemma8}~and~\ref{lemma11} and obtain that for each
fixed $(y,r)\in\mathcal{R}$,
\[
\mathop{{\sup}}_{(x,z)\in k\mathfrak{H}(y,r)}\tilde{u}(x,z)=\mathop {{\inf}}_{s>0}
\bigl\{\tilde{v}(sy,sr)+ks\bigr\}.
\]
The finiteness result (\ref{frak-u}) for all $k>0$ in the proof of
Lemma \ref{lemma11} guarantees the existence of a constant $s(y,r)>0$
such that $\tilde{v}(sy,sr)<\infty$.
\end{pf*}

\subsection{The proof of Theorem \texorpdfstring{\protect\ref{main-2}}{4.2}}
Let us move on to the proof of Theorem \ref{main-2} and some further
lemmas and auxiliary results are needed.

%
\begin{lemma}\label{lemma12}
Under assumptions of Theorem \ref{main-2}, we have $\tilde{v}(y,r)$
is $(-\infty,\infty)$-valued on $\mathcal{R}$.
\end{lemma}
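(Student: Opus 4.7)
My plan is to combine three ingredients already available: (a) Theorem~\ref{main-1}(i), which asserts that for each $(y,r)\in\mathcal{R}$ there exists some $s=s(y,r)>0$ with $\tilde{v}(sy,sr)<\infty$ and that $\tilde{v}>-\infty$ throughout $\mathcal{R}$; (b) the scaling identity $\widetilde{\mathcal{Y}}(\alpha y,\alpha r)=\alpha\,\widetilde{\mathcal{Y}}(y,r)$ for $\alpha>0$ (an immediate consequence of Definition~(\ref{set:Dual})); and (c) the growth control on $\mathit{V}$ provided by Lemma~\ref{Tech}(iii), since the additional hypothesis $AE_{\infty}[\mathit{U}]<1$ guarantees the existence of $\gamma\in(0,1)$ and $y_{0}>0$ such that $\mathit{V}(t,\mu y)\leq \mu^{-\gamma/(1-\gamma)}\mathit{V}(t,y)$ holds uniformly in $t\in[0,T]$ for $0<\mu<1$ and $0<y\leq y_{0}$.

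Fix $(y,r)\in\mathcal{R}$, and let $s=s(y,r)>0$ and $\Gamma^{\ast}\in\widetilde{\mathcal{Y}}(sy,sr)$ be such that $\mathbb{E}[\int_{0}^{T}\mathit{V}(t,\Gamma^{\ast}_{t})dt]<\infty$. By the scaling identity, $\Gamma^{\ast}/s\in\widetilde{\mathcal{Y}}(y,r)$, so it suffices to show $\mathbb{E}[\int_{0}^{T}\mathit{V}(t,\Gamma^{\ast}_{t}/s)\,dt]<\infty$. Since Lemma~\ref{lemma9-10} already supplies a uniform bound on $\mathbb{E}[\int_{0}^{T}\mathit{V}^{-}(t,\Gamma_{t})dt]$ for every $\Gamma\in\widetilde{\mathcal{Y}}(y,r)$, the only remaining task is to control the positive part $\mathit{V}^{+}(\cdot,\Gamma^{\ast}_{\cdot}/s)$.

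If $s\leq 1$ then $\Gamma^{\ast}/s\succeq\Gamma^{\ast}$ and monotonicity of $\mathit{V}$ gives $\mathit{V}^{+}(t,\Gamma^{\ast}_{t}/s)\leq\mathit{V}^{+}(t,\Gamma^{\ast}_{t})$, which is integrable, and we are done. The delicate case is $s>1$: writing $\mu=1/s\in(0,1)$ I plan to split
\begin{equation*}
\mathbb{E}\Big[\int_{0}^{T}\mathit{V}^{+}(t,\mu\Gamma^{\ast}_{t})dt\Big]=\mathbb{E}\Big[\int_{0}^{T}\mathit{V}^{+}(t,\mu\Gamma^{\ast}_{t})\mathbf{1}_{\{\Gamma^{\ast}_{t}\leq y_{0}\}}dt\Big]+\mathbb{E}\Big[\int_{0}^{T}\mathit{V}^{+}(t,\mu\Gamma^{\ast}_{t})\mathbf{1}_{\{\Gamma^{\ast}_{t}>y_{0}\}}dt\Big].
\end{equation*}
On $\{\Gamma^{\ast}_{t}\leq y_{0}\}$, Lemma~\ref{Tech}(iii) applies directly and bounds the integrand by $\mu^{-\gamma/(1-\gamma)}\mathit{V}^{+}(t,\Gamma^{\ast}_{t})$, whose integral is finite by the choice of $\Gamma^{\ast}$. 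On $\{\Gamma^{\ast}_{t}>y_{0}\}$ we have $\mu\Gamma^{\ast}_{t}>\mu y_{0}$, so monotonicity of $\mathit{V}$ together with Lemma~\ref{Tech}(iii) applied at $y=y_{0}$ yields $\mathit{V}^{+}(t,\mu\Gamma^{\ast}_{t})\leq\mu^{-\gamma/(1-\gamma)}\mathit{V}^{+}(t,y_{0})$, and continuity of $\mathit{U}(\cdot,\,\cdot)$ in $t$ on the compact interval $[0,T]$ makes $\sup_{t\in[0,T]}\mathit{V}(t,y_{0})<\infty$, so the second term is bounded by a finite constant times $T$.

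The main obstacle, and the reason the previous steps needed Lemma~\ref{Tech} in the uniform form, is precisely the passage from $(sy,sr)$ to $(y,r)$ when $s>1$: without the Reasonable Asymptotic Elasticity hypothesis $AE_{\infty}[\mathit{U}]<1$ one has no quantitative control on how $\mathit{V}^{+}$ inflates as its argument is scaled down, and the splitting above would fail. A secondary technical point I will check is that the constant $y_{0}$ in Lemma~\ref{Tech}(iii) can indeed be chosen uniformly in $t\in[0,T]$ (which is guaranteed by the formulation of that lemma together with assumption~(\ref{down})), so that $\sup_{t}\mathit{V}^{+}(t,y_{0})$ and the product-measure bound on the second integral are both well defined.
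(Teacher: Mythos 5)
Your proof is correct and follows essentially the same route as the paper, which (very tersely) instructs the reader to mimic Lemma~\ref{lemma6} with $AE_\infty[\mathit{U}]<1$ in place of $AE_0[\mathit{U}]<\infty$: one starts from Theorem~\ref{main-1}'s guarantee of an $s>0$ with $\tilde{v}(sy,sr)<\infty$, uses the scaling $\widetilde{\mathcal{Y}}(sy,sr)=s\,\widetilde{\mathcal{Y}}(y,r)$, invokes Lemma~\ref{Tech}(iii) to control $\mathit{V}^{+}$ under a multiplicative shrink of the argument, and appeals to Lemma~\ref{lemma9} to control $\mathit{V}^{-}$ over all of $\widetilde{\mathcal{Y}}(y,r)$, exactly as you do. One small remark on the step you flagged: the finiteness of $\sup_{t\in[0,T]}\mathit{V}^{+}(t,y_{0})$ is \emph{not} a consequence of continuity of $\mathit{U}(\cdot,x)$ in $t$ alone (recall $\mathit{V}(t,y_{0})=\sup_{x>0}\{\mathit{U}(t,x)-xy_{0}\}$ is a supremum over $x$, so pointwise continuity in $t$ at each fixed $x$ does not by itself give a uniform bound); it additionally uses the uniform polynomial growth $\mathit{U}(t,x)\le(x/x_{0})^{\gamma}\,\mathit{U}(t,x_{0})$ for $x\ge x_{0}$ from Lemma~\ref{Tech}(i) together with $\sup_{t}\mathit{U}(t,x_{0})<\infty$ (the latter from continuity on $[0,T]$ at the single point $x_{0}$), both of which are available since $AE_\infty[\mathit{U}]<1$ is in force here.
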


\begin{pf}
Similar to the proof of Lemma \ref{lemma6}, under the additional
condition (\ref{assAEU}), we can show that $\tilde{v}(y,r)<\infty$
if $\tilde{v}(sy,sr)<\infty$ for a constant $s=s(y,r)>0$. However, it
has been shown that Theorem \ref{main-1} admits the existence of
$s=s(y,r)>0$.
\end{pf}

Notice that we cannot mimic proofs of Lemmas \ref{lemma9}, \ref{lemma9-10} and \ref{lemma10} to obtain the existence of the optimal
solution to the problem (\ref{tilde-u}). In fact, our arguments for
the dual problem depend on the existence of a bounded process $\tilde
{w}\in\widetilde{\mathfrak{A}}(\frac{\bar{p}}{\lambda})$, which is
missing in the dual space. To this end, we resort to another auxiliary
optimization problem and take advantage of the Bipolar results built in
Lemma \ref{lemma8}.

\begin{lemma}\label{lemma13}
Define the auxiliary optimization problem to the auxiliary dual utility
minimization problem (\ref{tilde-v}) as
%
\begin{equation}
\label{hat-v} \hat{v}(k)=\inf_{\Gamma\in\widetilde{\mathfrak{Y}}(k)}\mathbb {E} \biggl[\int
_{0}^{T}\mathit{V}(t,\Gamma_{t})\,dt
\biggr],
\end{equation}
where $\widetilde{\mathfrak{Y}}(k)$ is defined in Lemma \ref{lemma8}
as the bipolar set of $\widetilde{\mathcal{A}}(x,z)$ on the product
space for any $(x,z)\in\mathcal{H}$.

Then, for all $k>0$, under hypothesis of Theorem \ref{main-2}, the
value function \mbox{$\hat{v}(k)<\infty$} for all $k>0$, and
the optimal solution $\widehat{\Gamma}(k)$ exists and is unique and $\widehat{\Gamma}_{t}(k)>0$ for all $t\in[0,T]$. Moreover, for each $k>0$, and
any $\Gamma\in\widetilde{\mathfrak{Y}}(k)$, we have
\[
\mathbb{E} \biggl[\int_{0}^{T}\bigl(
\Gamma_{t}-\widehat{\Gamma}_{t}(k)\bigr)\mathit {I}\bigl(t,
\widehat{\Gamma}_{t}(k)\bigr)\,dt \biggr]\leq0.
\]
\end{lemma}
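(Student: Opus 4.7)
The plan is to mimic the analysis of the dual problem $\tilde{v}(y,r)$ developed in Lemmas $\ref{lemma9}$, $\ref{lemma9-10}$, and $\ref{lemma10}$, adapted to the enlarged polar set $\widetilde{\mathfrak{Y}}(k)$, and then to extract strict positivity of $\hat{\Gamma}(k)$ and the variational inequality by standard convex perturbation arguments. The key bridging observation is that, with $(x,z)\in\mathcal{H}$ the pair fixed in the definition $(\ref{frakY})$ of $\widetilde{\mathfrak{Y}}(k)$, the Bipolar identity $\widetilde{\mathfrak{Y}}(1)=\widetilde{\mathcal{A}}(x,z)^{\circ}$ recorded in $(\ref{eqnbipolar2})$, combined with the strictly positive element $\rho=(\lambda/\bar{p})\tilde{w}\in\widetilde{\mathcal{A}}(x,z)$ built in the proof of Proposition $\ref{Prop1}$, yields the uniform bound $\langle \tilde{w},\Gamma\rangle\leq k\bar{p}/\lambda$ for every $\Gamma\in\widetilde{\mathfrak{Y}}(k)$. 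This is precisely the ingredient needed to transplant the $V^{-}$-estimates of Lemmas $\ref{lemma9}$ and $\ref{lemma9-10}$ from $\widetilde{\mathcal{Y}}(y,r)$ to the larger set $\widetilde{\mathfrak{Y}}(k)$.

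For finiteness of $\hat{v}(k)$, I would pick any $(y_{0},r_{0})\in k\mathfrak{R}(x,z)$, so that $\widetilde{\mathcal{Y}}(y_{0},r_{0})\subseteq\widetilde{\mathfrak{Y}}(k)$ and hence $\hat{v}(k)\leq\tilde{v}(y_{0},r_{0})<\infty$ by Lemma $\ref{lemma12}$; the companion lower bound $\hat{v}(k)>-\infty$ follows from the Fenchel inequality $V(t,y)\geq U(t,\bar{x}\tilde{w}_{t})-\bar{x}\tilde{w}_{t}y$ together with the uniform estimate on $\langle\tilde{w},\Gamma\rangle$ and Assumption $(\ref{ass4.1})$. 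For existence, take a minimizing sequence $(\Gamma^{n})\subset\widetilde{\mathfrak{Y}}(k)$ and extract forward convex combinations $f^{n}\in\mathrm{conv}(\Gamma^{n},\Gamma^{n+1},\ldots)$ converging $\bar{\mathbb{P}}$-a.s. to some $\hat{\Gamma}(k)$. Convex compactness and closedness in measure of $\widetilde{\mathfrak{Y}}(k)$ (Lemma $\ref{lemma8}$) keep the limit in the set, and the proofs of Lemmas $\ref{lemma9}$ and $\ref{lemma9-10}$ go through verbatim with $\widetilde{\mathfrak{Y}}(k)$ in place of $\widetilde{\mathcal{Y}}(y,r)$, yielding uniform integrability of $(V^{-}(\cdot,f^{n}))_{n\geq 1}$. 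Fatou applied to $V^{+}$ then delivers optimality of $\hat{\Gamma}(k)$, and strict convexity of $V(t,\cdot)$ gives uniqueness.

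For strict positivity, assume $A=\{\hat{\Gamma}(k)=0\}$ has $\bar{\mathbb{P}}(A)>0$ and pick a strictly positive $\Gamma_{0}\in\widetilde{\mathcal{Y}}(y_{0},r_{0})\subseteq\widetilde{\mathfrak{Y}}(k)$ from Proposition $\ref{Prop1}(ii)$. Setting $\Gamma^{\varepsilon}=(1-\varepsilon)\hat{\Gamma}(k)+\varepsilon\Gamma_{0}\in\widetilde{\mathfrak{Y}}(k)$, the fact that $V'(t,0)=-\infty$ and $(\Gamma_{0})_{t}>0$ on $A$ forces $\mathbb{E}[\int_{0}^{T}V(t,\Gamma_{t}^{\varepsilon})dt]<\mathbb{E}[\int_{0}^{T}V(t,\hat{\Gamma}_{t}(k))dt]$ for sufficiently small $\varepsilon>0$, contradicting optimality. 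For the variational inequality, fix $\Gamma\in\widetilde{\mathfrak{Y}}(k)$, set $\Gamma^{\varepsilon}=(1-\varepsilon)\hat{\Gamma}(k)+\varepsilon\Gamma$, and use convexity of $V(t,\cdot)$ to see that the difference quotient $\varepsilon^{-1}[V(t,\Gamma_{t}^{\varepsilon})-V(t,\hat{\Gamma}_{t}(k))]$ is monotone non-decreasing in $\varepsilon\in(0,1]$ and converges $\bar{\mathbb{P}}$-a.s. as $\varepsilon\downarrow 0$ to $V'(t,\hat{\Gamma}_{t}(k))(\Gamma_{t}-\hat{\Gamma}_{t}(k))=-I(t,\hat{\Gamma}_{t}(k))(\Gamma_{t}-\hat{\Gamma}_{t}(k))$. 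Passing to the limit in the optimality inequality $\mathbb{E}[\int V(t,\Gamma_{t}^{\varepsilon})dt]\geq\mathbb{E}[\int V(t,\hat{\Gamma}_{t}(k))dt]$ by monotone convergence then produces the claim.

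The main technical hurdle is exactly this last limit interchange: $I(t,\hat{\Gamma}_{t}(k))(\Gamma_{t}-\hat{\Gamma}_{t}(k))$ need not be integrable a priori, and $\widetilde{\mathfrak{Y}}(k)\not\subseteq\mathbb{L}^{1}$ rules out the de la Vall\'ee-Poussin shortcut. The route is to exploit the one-sided monotonicity of the difference quotient, possibly after truncating $\Gamma$ by $n\tilde{w}$ and invoking the uniform control $\langle\tilde{w},\Gamma\rangle\leq k\bar{p}/\lambda$ as a surrogate for the missing moment control, before letting $n\to\infty$ in the truncated inequality.
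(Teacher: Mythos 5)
Your skeleton matches the paper's: same finiteness bound via $\widetilde{\mathcal{Y}}(y_0,r_0)\subseteq\widetilde{\mathfrak{Y}}(k)$, same Komlos-type existence argument transplanted to $\widetilde{\mathfrak{Y}}(k)$ by rerunning Lemmas \ref{lemma9} and \ref{lemma9-10} with the uniform control $\langle\tilde w,\Gamma\rangle\le k\bar p/\lambda$, and the same perturbation $\Gamma^\varepsilon=(1-\varepsilon)\hat\Gamma(k)+\varepsilon\Gamma$. The bridging observation about $\langle\tilde w,\Gamma\rangle$ is exactly the ingredient the paper uses. Where your proposal departs from the paper's proof is in the last two steps, and in both places there is a genuine gap.

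First, you prove strict positivity \emph{before} the variational inequality, arguing informally that $V'(t,0)=-\infty$ forces $\mathbb{E}[\int V(t,\Gamma_t^\varepsilon)dt]<\mathbb{E}[\int V(t,\hat\Gamma_t(k))dt]$ for small $\varepsilon$. As stated this is not a proof: the infinite marginal gain on $\{\hat\Gamma(k)=0\}$ could in principle be offset by an unbounded loss elsewhere, and ruling this out is precisely the integrability question you are trying to postpone. Worse, your variational-inequality argument then invokes differentiability of $V(t,\cdot)$ at $\hat\Gamma_t(k)$, which presupposes $\hat\Gamma_t(k)>0$; so the two steps as you order them are circular. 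The paper avoids this by first establishing the $\varepsilon$-level inequality $0\ge\mathbb{E}[\int(\Gamma_t-\hat\Gamma_t(k))I(t,\Gamma_t^\varepsilon)dt]$ from the gradient inequality of $V$ \emph{at the point $\Gamma^\varepsilon$} (not at $\hat\Gamma$), then splitting it over $\{\hat\Gamma(k)>0\}$ and $\{\hat\Gamma(k)=0\}$ and letting $\varepsilon\downarrow 0$: on the zero set $(\Gamma_t-\hat\Gamma_t)I(t,\Gamma_t^\varepsilon)=\Gamma_t I(t,\varepsilon\Gamma_t)\uparrow\infty$, while the first term stays bounded below thanks to uniform integrability of the negative parts. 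That yields positivity as a corollary, not a prerequisite.

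Second, you correctly identify that the limit interchange is the real hurdle, but the fix you propose — truncating $\Gamma$ by $n\tilde w$ and leaning on $\langle\tilde w,\Gamma\rangle\le k\bar p/\lambda$ — does not address it. Truncating $\Gamma$ from above controls the range where $I(t,\cdot)$ is small and harmless; the obstruction sits on the opposite side, where $\hat\Gamma_t(k)$ is small and $I(t,(1-\varepsilon)\hat\Gamma_t(k))$ is large. The quantity that must be shown integrable (uniformly in $\varepsilon$) is $\hat\Gamma_t(k)\,I(t,(1-\varepsilon)\hat\Gamma_t(k))$, and the tool is the Reasonable Asymptotic Elasticity machinery, not a moment bound on $\Gamma$: the paper splits into the three regimes $\hat\Gamma(k)\le y_1$, $\hat\Gamma(k)\ge y_2/(1-\varepsilon_0)$ and the compact middle, and dominates the first two by constant multiples of $|V(t,\hat\Gamma_t(k))|$ using the inequalities of Lemma \ref{Tech} and Corollary \ref{Tech2}, which are integrable because $\hat v(k)<\infty$ and $V^-(\cdot,\hat\Gamma(k))\in\mathbb{L}^1$. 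Your one-sided monotone-convergence route can be made to work, but only by proving this same RAE-based bound on $\hat\Gamma\,I((1-\varepsilon)\hat\Gamma)$ to obtain an integrable majorant $D_{\varepsilon_0}^+$; without it the monotone convergence theorem for decreasing sequences cannot be invoked.
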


\begin{pf}
According to the definition in Lemma \ref{lemma8} and by Lemma \ref{lemma12}, it is easy to see
\begin{eqnarray*}
\hat{v}(k)&=& \mathop{{\inf}}_{\Gamma\in\widetilde{\mathfrak
{Y}}(k)}\mathbb {E} \biggl[\int
_{0}^{T}\mathit{V}(t,\Gamma_{t})\,dt
\biggr]\leq \mathop {{\inf}}_{\Gamma\in\mathfrak{Y}(k)}\mathbb{E} \biggl[\int
_{0}^{T}\mathit {V}(t,\Gamma_{t})\,dt
\biggr]
\\
&= & \mathop{{\inf}}_{(y,r)\in k\mathfrak{R}(x,z)}\tilde {v}(y,r)<\infty,\qquad k>0.
\end{eqnarray*}

Taking into account the Bipolar relationship (\ref{eqnbipolar2}), we
have that $\widetilde{\mathfrak{Y}}(k)$ is convexly compact in
$\mathbb
{L}_{+}^{0}$, and the existence and uniqueness of optimal solution
$\widehat{\Gamma}(k)$ will follow the similar proof of Theorem \ref{main-1}.

For\vspace*{2pt} $k>0$, $\epsilon\in(0,1)$, we define $\Gamma_{t}^{\epsilon
}=(1-\epsilon)\widehat{\Gamma}_{t}(k)+\epsilon\Gamma_{t}$, and for all
$t\in
[0,T]$, the optimality of $\widehat{\Gamma}(k)$ implies
%
\begin{eqnarray}
\label{wu} 0&\leq& \frac{1}{\epsilon}\mathbb{E} \biggl[\int
_{0}^{T} \bigl(\mathit {V}\bigl(t,
\Gamma_{t}^{\epsilon}\bigr)-\mathit{V}\bigl(t,\widehat{\Gamma}_{t}(k)\bigr) \bigr)\,dt \biggr]
\nonumber
\\
&\leq& \frac{1}{\epsilon}\mathbb{E} \biggl[\int_{0}^{T}
\bigl(\widehat{\Gamma }_{t}(k)-\Gamma_{t}^{\epsilon}
\bigr)\mathit{I}\bigl(t,\Gamma _{t}^{\epsilon
}\bigr)\,dt \biggr]
\\
&=& \mathbb{E} \biggl[\int_{0}^{T} \bigl(\widehat{\Gamma}_{t}(k)-\Gamma _{t} \bigr)\mathit{I}\bigl(t,
\Gamma_{t}^{\epsilon}\bigr)\,dt \biggr].\nonumber
\end{eqnarray}

We claim that the family $ \{ ((\Gamma_{t}-\widehat{\Gamma}_{t}(k))\mathit{I}(t,\Gamma^{\epsilon}_{t}) )^{-}, \epsilon\in
(0,1) \}$ is uniformly integrable with respect to $\overline{\mathbb
{P}}$. Observe that
\[
\bigl(\bigl(\Gamma_{t}-\widehat{\Gamma}_{t}(k)\bigr)\mathit{I}
\bigl(t,\Gamma ^{\epsilon
}_{t}\bigr) \bigr)^{-}\leq\widehat{\Gamma}_{t}(k)\mathit{I}\bigl(t,\Gamma _{t}^{\epsilon
}
\bigr)\leq\widehat{\Gamma}_{t}(k)\mathit{I}\bigl(t,(1-\epsilon)\widehat{\Gamma}_{t}(k)\bigr)\qquad \forall t\in[0,T].
\]

For fixed $\epsilon_{0}<1$ and $\epsilon<\epsilon_{0}$, we deduce that
for each $t\in[0,T]$,
\begin{eqnarray*}
 \bigl|\widehat{\Gamma}_{t}(k)\mathit{I}\bigl(t,(1-\epsilon)\widehat{\Gamma}_{t}(k)\bigr) \bigr|&\leq& \bigl|\widehat{\Gamma}_{t}(k)\mathit{I}
\bigl(t,(1-\epsilon)\widehat{\Gamma}_{t}(k)\bigr) \bigr|\mathbf{1}_{\{\widehat{\Gamma}_{t}(k)\leq y_{1}\}}
\\
&&{} +  \bigl|\widehat{\Gamma}_{t}(k)\mathit{I}\bigl(t,(1-\epsilon)\widehat{\Gamma}_{t}(k)\bigr) \bigr|\mathbf{1}_{\{\widehat{\Gamma}_{t}(k)\geq
{y_{2}}/({1-\epsilon_{0}})\}}
\\
&&{} + \bigl|\widehat{\Gamma}_{t}(k)\mathit{I}\bigl(t,(1-\epsilon)\widehat{\Gamma}_{t}(k)\bigr) \bigr|\mathbf{1}_{\{y_{1}<\widehat{\Gamma}_{t}(k)<
{y_{2}}/({1-\epsilon_{0}})\}}.
\end{eqnarray*}

By Lemma \ref{Tech}, reasonable asymptotic elasticity conditions
$\AEE_{0}[\mathit{U}]<\infty$ and $\AEE_{\infty}[\mathit{U}]<1$ imply that
for fixed $\mu>0$, there exist constants $C_{1}>0$, $C_{2}>0$,
$y_{1}>0$ and $y_{2}>0$ such that
%
\begin{eqnarray}
\label{VVV} -\mathit{V}'(t,\mu y) & <&C_{1}
\frac{\mathit{V}(t,y)}{y}\qquad\mbox{for } 0<y\leq y_{1},
\nonumber
\\[-8pt]
\\[-8pt]
-\mathit{V}'(t,y) & <&C_{2}\frac{-\mathit{V}(t,y)}{y}
\qquad\mbox{for } y_{2}\leq y.
\nonumber
\end{eqnarray}
Hence, the first term is dominated by
\[
\bigl|\widehat{\Gamma}_{t}(k)\mathit{I}\bigl(t,(1-\epsilon)\widehat{\Gamma}_{t}(k)\bigr) \bigr|\mathbf{1}_{\{\widehat{\Gamma}_{t}(k)\leq y_{1}\}}\leq \frac
{1}{1-\epsilon_{0}}C_{1}\bigl|
\mathit{V}\bigl(t,\widehat{\Gamma}_{t}(k)\bigr)\bigr|,
\]
and the second term is dominated by
\begin{eqnarray*}
&& \bigl|\widehat{\Gamma}_{t}(k)\mathit{I}\bigl(t,(1-\epsilon)\widehat{\Gamma}_{t}(k)\bigr) \bigr|\mathbf{1}_{\{\widehat{\Gamma}_{t}(k)
\geq {y_{2}}/({1-\epsilon_{0}})\}}
\\
&&\qquad \leq \frac{-1}{1-\epsilon
_{0}}C_{2}\mathit {V}\bigl(t,(1-\epsilon)\widehat{\Gamma}_{t}(k)\bigr)
\mathbf{1}_{\{\widehat{\Gamma}_{t}(k)\geq {y_2}/{(1-\epsilon_0)}\}}
\\
&&\qquad \leq \frac{1}{1-\epsilon_{0}}C_{2}\bigl|\mathit{V}\bigl(t,\widehat{\Gamma}_{t}(k)\bigr)\bigr|.
\end{eqnarray*}

These two terms are both in $\mathbb{L}^{1}$ by the finiteness of
$\hat
{v}(k)$. On the other hand, the third remaining term $ |\widehat{\Gamma}_{t}(k)\mathit{I}(t,(1-\epsilon)\widehat{\Gamma}_{t}(k)) |\mathbf
{1}_{\{
y_{1}<\widehat{\Gamma}_{t}(k)< {y_{2}}/{(1-\epsilon_{0})}\}}$ is
dominated by $k\widehat{\Gamma}_{t}(k)\mathbf{1}_{\{y_{1}<\widehat{\Gamma}_{t}(k)< {y_{2}}/{(1-\epsilon_{0})}\}}$ for a constant $k>0$, and it
is obviously integrable as well.

We can let $\epsilon\rightarrow0$ and apply dominated convergence
theorem and Fatou's lemma to obtain the stated inequality.

To show that the optimal solution $\widehat{\Gamma}_{t}(k)>0$ for all
$t\in
[0,T]$, we can choose an element $\Gamma\in\widetilde{\mathfrak{Y}}(k)$
and $\Gamma_t>0$ for all $t\in[0,T]$. Inequality (\ref{wu}) can be
rewritten as
\begin{eqnarray*}
\label{wuhan} 0&\geq&\mathbb{E} \biggl[\int_{0}^{T}
\bigl(\Gamma_{t}-\widehat{\Gamma}_{t}(k) \bigr)\mathit{I}
\bigl(t,\Gamma_{t}^{\epsilon}\bigr)\mathbf{1}_{\{\widehat{\Gamma}_{t}>0\}
}\,dt
\biggr]
\\
&&{} +\mathbb{E} \biggl[\int_{0}^{T} \bigl(
\Gamma_{t}-\widehat{\Gamma}_{t}(k) \bigr)\mathit{I}\bigl(t,
\Gamma_{t}^{\epsilon}\bigr)\mathbf{1}_{\{\widehat{\Gamma}_{t}=0\}}\,dt \biggr].
\end{eqnarray*}
Now suppose $\overline{\mathbb{P}}\{\widehat{\Gamma}_{t}(k)=0\}>0$. By the
uniform integrability of $ \{ ((\Gamma_{t}-\widehat{\Gamma}_{t}(k))\* \mathit{I}(t,\Gamma^{\epsilon}_{t}) )^{-}, \epsilon\in
(0,1) \}$, the second term of (\ref{wuhan}) goes to $\infty$ as
$\epsilon$ converges to $0$ since $\mathit{I}(t,0)=\infty$, and
$\Gamma
_{t}>0$ for all $t\in[0,T]$. Then we obtain the contradiction and hence
the conclusion holds.
\end{pf}

%
\begin{lemma}\label{lemma14}
Under the assumptions of Theorem \ref{main-2}, the auxiliary dual
value function $\hat{v}(k)$ is continuously differentiable on
$(0,\infty
)$, and
\[
-k\hat{v}'(k)=\mathbb{E} \biggl[\int_{0}^{T}
\widehat{\Gamma}_{t}(k)\mathit {I}\bigl(t,\widehat{\Gamma}_{t}(k)
\bigr)\,dt \biggr].
\]
\end{lemma}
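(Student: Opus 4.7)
My plan is to establish the differentiability of $\hat{v}$ at each fixed $k>0$ by comparing $\hat{v}(k+h)$ to the value obtained by plugging the scaled dual element $\tfrac{k+h}{k}\hat{\Gamma}(k)$ into the definition, then sending $h\to 0$ from each side. The enabling observation is that $\widetilde{\mathfrak{Y}}(k)$ scales linearly in $k$: the Bipolar relation $(\ref{eqnbipolar2})$ together with the conic form of $\widetilde{\mathcal{Y}}(y,r)$ implies $\widetilde{\mathfrak{Y}}(k)=k\,\widetilde{\mathfrak{Y}}(1)$, so for $|h|<k$ the process $\Gamma^{h}\triangleq\tfrac{k+h}{k}\hat{\Gamma}(k)$ belongs to $\widetilde{\mathfrak{Y}}(k+h)$. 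Using that any convex combination of elements of $\widetilde{\mathfrak{Y}}(k_{1})$ and $\widetilde{\mathfrak{Y}}(k_{2})$ lies in $\widetilde{\mathfrak{Y}}(\lambda k_{1}+(1-\lambda)k_{2})$, one also deduces that $\hat{v}$ is convex on $(0,\infty)$; in particular the one-sided derivatives exist and satisfy $\hat{v}'_{-}(k)\le\hat{v}'_{+}(k)$, so it will suffice to pin both to the same constant.

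Testing $\Gamma^{h}$ against the definition of $\hat{v}(k+h)$ and invoking the pointwise convexity bound $\mathit{V}(t,b)-\mathit{V}(t,a)\le\mathit{V}'(t,b)(b-a)$ with $b=\Gamma^{h}_{t}$ and $a=\hat{\Gamma}_{t}(k)$, together with $\mathit{V}'(t,y)=-\mathit{I}(t,y)$, gives
\[
\hat{v}(k+h)-\hat{v}(k)\ \le\ -\frac{h}{k}\,\mathbb{E}\Big[\int_{0}^{T}\hat{\Gamma}_{t}(k)\,\mathit{I}(t,\Gamma^{h}_{t})\,dt\Big].
\]
For $h\downarrow 0$, dividing by $h>0$ preserves the inequality, and the integrand $\hat{\Gamma}_{t}(k)\mathit{I}(t,\Gamma^{h}_{t})$ increases monotonically to $\hat{\Gamma}_{t}(k)\mathit{I}(t,\hat{\Gamma}_{t}(k))$, so Monotone Convergence yields
\[
\hat{v}'_{+}(k)\ \le\ -\frac{1}{k}\,\mathbb{E}\Big[\int_{0}^{T}\hat{\Gamma}_{t}(k)\,\mathit{I}(t,\hat{\Gamma}_{t}(k))\,dt\Big].
\]
For $h\uparrow 0$, dividing by $h<0$ reverses the sign and produces
\[
\frac{\hat{v}(k+h)-\hat{v}(k)}{h}\ \ge\ -\frac{1}{k}\,\mathbb{E}\Big[\int_{0}^{T}\hat{\Gamma}_{t}(k)\,\mathit{I}(t,\Gamma^{h}_{t})\,dt\Big],
\]
so if the right-hand side converges to $-\tfrac{1}{k}\mathbb{E}\big[\int_{0}^{T}\hat{\Gamma}_{t}(k)\mathit{I}(t,\hat{\Gamma}_{t}(k))dt\big]$ as $h\uparrow 0$, I obtain the matching lower bound on $\hat{v}'_{-}(k)$. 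Combined with $\hat{v}'_{-}(k)\le\hat{v}'_{+}(k)$ this forces equality everywhere, establishes differentiability, and identifies $\hat{v}'(k)$ as the claimed quantity.

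The main technical obstacle is the passage to the limit as $h\uparrow 0$: there the integrand $\hat{\Gamma}_{t}(k)\mathit{I}(t,\Gamma^{h}_{t})$ \emph{decreases} to its pointwise limit, so Monotone Convergence cannot be used in the favourable direction, and I must produce an integrable majorant in order to invoke the Dominated Convergence Theorem. Restricting to $|h|\le h_{0}<k/2$ ensures $\Gamma^{h}_{t}\ge\tfrac{1}{2}\hat{\Gamma}_{t}(k)$, hence $\mathit{I}(t,\Gamma^{h}_{t})\le\mathit{I}(t,\tfrac{1}{2}\hat{\Gamma}_{t}(k))$, and the required integrability of $\hat{\Gamma}_{t}(k)\mathit{I}(t,\tfrac{1}{2}\hat{\Gamma}_{t}(k))$ is exactly of the type handled inside the proof of Lemma~\ref{lemma13}: splitting the product space according to whether $\hat{\Gamma}_{t}(k)$ is small, moderate, or large and using the reasonable-asymptotic-elasticity inequalities $(\ref{VVV})$, the small- and large-regime contributions are dominated by a multiple of $\mathit{V}(t,\hat{\Gamma}_{t}(k))$ (integrable because $\hat{v}(k)<\infty$), while the intermediate regime is bounded trivially. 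Once differentiability is established at every $k\in(0,\infty)$, continuity of $\hat{v}'$ follows automatically from the convexity of $\hat{v}$, completing the proof.
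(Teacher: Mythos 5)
Your proof is correct and follows essentially the same route as the paper: you exploit convexity of $\hat{v}$, test the scaled dual optimizer $\frac{k+h}{k}\hat{\Gamma}(k)\in\widetilde{\mathfrak{Y}}(k+h)$ to get an upper envelope touching $\hat v$ at $k$, bound the one-sided difference quotients via the tangent inequality for $V$ with $V'=-I$, and close with Monotone Convergence on the right side and a dominated convergence argument on the left, where the required integrable majorant $\hat{\Gamma}_t(k)I(t,\tfrac12\hat{\Gamma}_t(k))$ is handled by the same asymptotic-elasticity case split used in Lemma~\ref{lemma13}. The paper's only presentational difference is that it names the auxiliary convex function $h(s)=\mathbb{E}\big[\int_0^T V(t,\tfrac{s}{k}\hat{\Gamma}_t(k))\,dt\big]$ and compares its one-sided derivatives with those of $\hat v$, whereas you work directly with differences of $\hat v$; the underlying estimates are identical.
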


\begin{pf}
In order to show $\hat{v}(k)$ is continuously differentiable, by the
convex property, it is enough to justify that its derivative exists on
$(0,\infty)$. Now fix $k>0$, and define the function
\[
h(s)\triangleq\mathbb{E} \biggl[\int_{0}^{T}
\mathit{V}\biggl(t,\frac
{s}{k}\widehat{\Gamma}_{t}(k)\biggr)\,dt
\biggr].
\]

This function is convex, and by optimality of $\widehat{\Gamma}(k)$ of
problem (\ref{hat-v}), we have $h(s)\geq\hat{v}(s)$ for all $s>0$
and $h(k)=\hat{v}(k)$. Again, convexity implies that
\[
\Delta^{-}h(k)\leq\Delta^{-}\hat{v}(k)\leq
\Delta^{+}\hat {v}(k)\leq \Delta^{+}h(k),
\]
where $\Delta^{+}$ and $\Delta^{-}$ denote right and left derivatives,
respectively. Now
\begin{eqnarray*}
\Delta^{+}h(k)&= & \lim_{\epsilon\rightarrow0}\frac{h(k+\epsilon
)-h(k)}{\epsilon}
\\
&= & \lim_{\epsilon\rightarrow0}\frac{1}{\epsilon}\mathbb{E} \biggl[\int
_{0}^{T} \biggl(\mathit{V}\biggl(t,
\frac{k+\epsilon}{k}\widehat{\Gamma}_{t}(k)\biggr)-\mathit{V}\bigl(t,\widehat{\Gamma}_{t}(k)\bigr) \biggr)\,dt \biggr]
\\
&\leq& \mathop{{\lim\inf}}_{\epsilon\rightarrow0} \biggl(-\frac
{1}{k\epsilon} \biggr)
\mathbb{E} \biggl[\int_{0}^{T}\epsilon\widehat{\Gamma}_{t}(k)\mathit{I}\biggl(t,\frac{k+\epsilon}{k}\widehat{\Gamma}_{t}(k)\biggr)\,dt \biggr]
\\
&= & -\frac{1}{k}\mathbb{E} \biggl[\int_{0}^{T}
\widehat{\Gamma}_{t}(k)\mathit {I}\bigl(t,\widehat{\Gamma}_{t}(k)
\bigr)\,dt \biggr]
\end{eqnarray*}
by monotone convergence theorem. Similarly, we get
\[
\Delta^{-}h(k)\geq\mathop{{\lim\sup}}_{\epsilon\rightarrow
0}\mathbb {E}
\biggl[-\int_{0}^{T}\widehat{\Gamma}_{t}(k)
\mathit{I}\biggl(t,\frac
{k-\epsilon
}{k}\widehat{\Gamma}_{t}(k)\biggr)\,dt
\biggr].
\]

We can follow the same reasoning as in Lemma \ref{lemma13} to show
the family $\{(\widehat{\Gamma}_{t}(k)\mathit{I}(t,\frac{k-\epsilon
}{k}\widehat{\Gamma}_{t}(k))),\epsilon\in(0,1)\}$ is uniformly integrable.
Dominated convergence theorem and Fatou's lemma deduce
\[
\Delta^{-}h(k)\geq-\frac{1}{k}\mathbb{E} \biggl[\int
_{0}^{T}\widehat{\Gamma}_{t}(k)\mathit{I}
\bigl(t,\widehat{\Gamma}_{t}(k)\bigr)\,dt \biggr],
\]
which completes the proof.
\end{pf}

%
\begin{lemma}\label{newaddlemma}
The auxiliary dual value function $\hat{v}(\cdot)$ has the asymptotic property
%
\begin{equation}
-\hat{v}'(0)=\infty,\qquad -\hat{v}'(\infty)=0.
\end{equation}
\end{lemma}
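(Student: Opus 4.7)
The plan is to reduce the two asymptotic identities to corresponding properties of a one-dimensional primal companion of $\hat{v}$ and then invoke Legendre--Fenchel duality. Fix once and for all a base point $(x_{0},z_{0})\in\mathcal{H}$ used in Lemma $\ref{lemma8}$ to define $\widetilde{\mathfrak{Y}}(1)$, and set $\hat{u}(l)\triangleq\tilde{u}(lx_{0},lz_{0})$ for $l>0$, the restriction of $\tilde{u}$ to the ray through the origin. Since $\widetilde{\mathfrak{Y}}(k)=k\widetilde{\mathfrak{Y}}(1)$ and $\widetilde{\mathcal{A}}(lx_{0},lz_{0})=l\widetilde{\mathcal{A}}(x_{0},z_{0})$ by the cone structure of $\mathcal{H}$ and $\mathcal{R}$, the Bipolar relation $(\ref{eqnbipolar1})$ makes $\bigl(l\widetilde{\mathcal{A}}(x_{0},z_{0}),\,k\widetilde{\mathfrak{Y}}(1)\bigr)$ a one-parameter bipolar pair of closed, convex, solid, convexly compact subsets of $\mathbb{L}_{+}^{0}$, each containing a strictly positive element by Proposition $\ref{Prop1}$. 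Repeating along this ray the minimax and exhaustion arguments of Lemmas $\ref{lemma7}$ and $\ref{lemma11}$ yields the one-dimensional conjugacy $\hat{v}(k)=\sup_{l>0}\{\hat{u}(l)-lk\}$ and $\hat{u}(l)=\inf_{k>0}\{\hat{v}(k)+lk\}$.

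By Legendre--Fenchel inversion, the nonnegative, nonincreasing function $-\hat{v}'(\cdot)$ is then the continuous inverse of the nonnegative, nondecreasing $\hat{u}'(\cdot)$ on $(0,\infty)$, so the lemma reduces to $\hat{u}'(0+)=\infty$ and $\hat{u}'(\infty)=0$. For the former, pick a strictly positive $\tilde{c}_{0}\in\widetilde{\mathcal{A}}(x_{0},z_{0})$ from Proposition $\ref{Prop1}$; since $l\tilde{c}_{0}\in\widetilde{\mathcal{A}}(lx_{0},lz_{0})$, concavity of $U$ yields
\begin{equation}
\frac{\hat{u}(l)-\hat{u}(l/2)}{l/2}\ \geq\ \mathbb{E}\Big[\int_{0}^{T}\tilde{c}_{0,t}\,U'(t,l\tilde{c}_{0,t})\,dt\Big],\nonumber
\end{equation}
and Fatou's lemma together with the Inada condition $U'(t,0+)=\infty$ sends the right-hand side, hence $\hat{u}'(0+)$, to infinity as $l\downarrow 0$. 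For the latter, derive the primal counterpart of Lemma $\ref{lemma14}$, namely $l\hat{u}'(l)=\mathbb{E}[\int_{0}^{T}\hat{\tilde{c}}_{t}^{\ast}(l)\,U'(t,\hat{\tilde{c}}_{t}^{\ast}(l))\,dt]$, by the same convexity-along-the-ray argument; the elasticity bound $xU'(t,x)\leq\gamma U(t,x)$ for $x\geq x_{0}'$ from Lemma $\ref{Tech}$(ii) with $\gamma<1$ then produces $\hat{u}'(l)\leq\gamma\,\hat{u}(l)/l+O(1/l)$. Letting $l\to\infty$ and using the recession identity $\hat{u}(l)/l\to\hat{u}'(\infty)$ (standard for finite concave functions on $(0,\infty)$) gives $\hat{u}'(\infty)\leq\gamma\,\hat{u}'(\infty)$, hence $\hat{u}'(\infty)=0$ since $\gamma<1$ and $\hat{u}'\geq 0$.

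The main obstacle is the derivation of the primal counterpart of Lemma $\ref{lemma14}$ in the absence of a priori integrability of $\hat{\tilde{c}}^{\ast}(l)\,U'(\hat{\tilde{c}}^{\ast}(l))$; this is precisely where the combined Reasonable Asymptotic Elasticity hypotheses $AE_{0}[U]<\infty$ and $AE_{\infty}[U]<1$ enter, and the step is handled by the same split-and-dominate technique used for $(\Gamma_{t}-\hat{\Gamma}_{t}(k))\mathit{I}(t,\Gamma_{t}^{\epsilon})$ in the proof of Lemma $\ref{lemma13}$, applied now to perturbations $\tilde{c}^{\epsilon}=(1-\epsilon)\hat{\tilde{c}}^{\ast}(l)+\epsilon\tilde{c}_{0}$ of the primal optimizer along the strictly positive direction $\tilde{c}_{0}$ supplied by Proposition $\ref{Prop1}$.
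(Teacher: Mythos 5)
Your approach is genuinely different from the paper's: the paper works directly on the dual side, expressing $-\hat{v}(k)/k$ in terms of $\hat{\Gamma}(k)$ and using (respectively) monotone convergence with $\hat{\Gamma}_{t}(1)\mathit{I}(t,k\hat{\Gamma}_{t}(1))$ and the budget bound $\big{\langle}\epsilon\tilde{w},\hat{\Gamma}(k)\big{\rangle}\leq\epsilon\bar{p}k/\lambda$ together with finiteness of $\mathbb{E}[\int_0^T-\mathit{U}(t,\epsilon\tilde{w}_t)dt]$. You instead pivot to a primal companion $\hat{u}(l)$ along a ray and invert the one-dimensional conjugacy. The pivot is legitimate (assuming the conjugacy is established), and the benefit is that each asymptotic identity becomes a primal statement. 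However, there are two concrete gaps.

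First, your displayed inequality
$\frac{\hat{u}(l)-\hat{u}(l/2)}{l/2}\geq\mathbb{E}\big[\int_0^T\tilde{c}_{0,t}\,\mathit{U}'(t,l\tilde{c}_{0,t})\,dt\big]$
does not follow from $l\tilde{c}_0\in\widetilde{\mathcal{A}}(lx_0,lz_0)$ and concavity. What concavity gives is this bound for the lower envelope $g(l)\triangleq\mathbb{E}[\int_0^T\mathit{U}(t,l\tilde{c}_{0,t})dt]$, not for $\hat{u}$: you have $\hat{u}\geq g$ at both endpoints, so you cannot lower bound the \emph{difference} $\hat{u}(l)-\hat{u}(l/2)$ by $g(l)-g(l/2)$. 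The fix is to perturb the optimizer $\hat{\tilde{c}}^{\ast}(l/2)$ by $(l/2)\tilde{c}_0$, which lies in $\widetilde{\mathcal{A}}(lx_0,lz_0)$ by linearity, giving $\frac{\hat{u}(l)-\hat{u}(l/2)}{l/2}\geq\mathbb{E}\big[\int_0^T\tilde{c}_{0,t}\mathit{U}'\big(t,\hat{\tilde{c}}^{\ast}_t(l/2)+\tfrac{l}{2}\tilde{c}_{0,t}\big)dt\big]$; one then argues $\hat{\tilde{c}}^{\ast}(l/2)\to 0$ in $\bar{\mathbb{P}}$-measure from the budget constraint and applies Fatou along a.s.-converging subsequences. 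Second, the elasticity argument for $\hat{u}'(\infty)=0$ is incomplete: the bound $x\mathit{U}'(t,x)\leq\gamma\mathit{U}(t,x)$ only holds for $x\geq x_0'$, and on the complementary region $\{\hat{\tilde{c}}^{\ast}_t<x_0'\}$ the integrand $\hat{\tilde{c}}^{\ast}_t\mathit{U}'(t,\hat{\tilde{c}}^{\ast}_t)$ has no a priori bound (e.g.\ when $\mathit{U}(t,0+)=-\infty$); your proof does not explain how that region is controlled. This whole step is unnecessary, though: once the one-dimensional conjugacy $\hat{u}(l)\leq\hat{v}(k)+lk$ is available and $\hat{v}(k)<\infty$ for every $k>0$ (Lemma $\ref{lemma13}$), you get $\hat{u}'(\infty)=\lim_l\hat{u}(l)/l\leq k$ for all $k$, hence $0$, with no appeal to elasticity. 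Finally, be aware that the one-dimensional conjugacy itself is not proved in the paper (the paper's Lemma $\ref{lemma11}$ is two-dimensional); while I believe the bipolar pair $(\widetilde{\mathcal{A}}(x_0,z_0),\widetilde{\mathfrak{Y}}(1))$ plus strictly positive elements do support it via Theorem $\ref{Minimax}$, the constant $1$ is not in $\widetilde{\mathfrak{Y}}(1)$ so the vanilla Kramkov--Schachermayer abstract theorem does not apply out of the box; the minimax and exhaustion steps you wave at need to be written out, since that is precisely where the paper's difficulty lies.
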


\begin{pf}
We first show $-\hat{v}'(0)=\infty$, and to this end, we claim that
%
\begin{equation}
\label{gai1} \hat{v}(0+)\geq\int_{0}^{T}
\mathit{V}(t,0+)\,dt.
\end{equation}
First, for any $k>0$, it follows by definition that
\begin{eqnarray*}
\hat{v}(k) &=& \mathbb{E} \biggl[\int_{0}^{T}
\mathit{V}\bigl(t,\widehat{\Gamma}_{t}(k)\bigr)\,dt \biggr]
\\
&=& \mathbb{E}
\biggl[\int_{0}^{T}\mathit{V}^{+}\bigl(t,
\widehat{\Gamma}_{t}(k)\bigr)\,dt \biggr]-\mathbb{E} \biggl[\int
_{0}^{T}\mathit {V}^{-}\bigl(t,\widehat{\Gamma}_{t}(k)\bigr)\,dt \biggr].
\end{eqnarray*}
Recall that $\widetilde{\mathfrak{Y}}(k)=k\widetilde{\mathfrak{Y}}(1)$,
and thus $\widehat{\Gamma}_t(k)=k\widehat{\Gamma}_t(1)$. Now by Fatou's lemma,
first, we have
%
\begin{equation}
\label{assis22} \lim_{k\rightarrow0}\mathbb{E} \biggl[\int
_{0}^{T}\mathit {V}^{+}\bigl(t,\widehat{\Gamma}_{t}(k)\bigr)\,dt \biggr]\geq\mathbb{E} \biggl[\int
_{0}^{T}\mathit {V}^{+}(t,0+)\,dt \biggr].
\end{equation}
On the other hand, similar to the proof of Lemma \ref{lemma9}, we
can show that
\[
\mathbb{E} \biggl[\int_{0}^{T}
\mathit{V}^{-}\bigl(t,\widehat{\Gamma}_{t}(1)\bigr)\,dt \biggr]<
\infty,
\]
and therefore, by the monotonicity of function $\mathit{V}^{-}(t,\cdot
)$ and the dominated convergence theorem, it follows that
\[
\lim_{k\rightarrow0}\mathbb{E} \biggl[\int_{0}^{T}
\mathit {V}^{-}\bigl(t,\widehat{\Gamma}_{t}(k)\bigr)\,dt \biggr]=
\mathbb{E} \biggl[\int_{0}^{T}\mathit
{V}^{-}(t,0+)\,dt \biggr],
\]
which together with (\ref{assis22}) imply that (\ref{gai1}) holds true.

Therefore, if $ \int_{0}^{T}\mathit{V}(t,0+)\,dt =\infty$, we have
$\hat
{v}(0+)=\infty$, and by convexity, it follows that $\hat
{v}'(0+)=-\infty$.

In the case $ \int_{0}^{T}\mathit{V}(t,0+)\,dt <\infty$, it is easy to
see that
\[
-\hat{v}(0+)\geq\lim_{k\rightarrow0}\frac{\hat{v}(0)-\hat
{v}(k)}{k}\geq\lim
_{k\rightarrow0}\frac{\int_{0}^{T}\mathit
{V}(t,0+)\,dt-\mathbb{E} [\int_{0}^{T}\mathit{V}(t,\widehat{\Gamma}_{t}(k))\,dt ]}{k},
\]
and hence we have
\begin{eqnarray*}
-\hat{v}(0+)&\geq& \lim_{k\rightarrow0}\frac{\mathbb{E} [\int_{0}^{T}\mathit{V}(t,0+)\,dt ]-\mathbb{E} [\int_{0}^{T}\mathit
{V}(t,\widehat{\Gamma}_{t}(k))\,dt ]}{k}
\\
&\geq& \lim_{k\rightarrow0}\mathbb{E} \biggl[\int_{0}^{T}
\widehat{\Gamma}_{t}(1)\mathit{I}\bigl(t,k\widehat{\Gamma}_{t}(1)
\bigr)\,dt \biggr]=\infty
\end{eqnarray*}
by the monotone convergence theorem.

As for $-\hat{v}'(\infty)=0$, since the function $-\hat{v}$ is concave
and increasing, there is a finite positive limit
\[
-\hat{v}'(\infty)\triangleq\lim_{k\rightarrow\infty}-\hat
{v}'(y).
\]
By the definition of Legendre--Fenchel transform, for any $y>0$,
\[
-\mathit{V}(t,y)\leq-\mathit{U}(t,x)+xy\qquad\mbox{for all } x>0,
\]
and then for any $\epsilon>0$, we always have
\begin{eqnarray*}
0&\leq&-\hat{v}'(\infty)
\\
&=&  \lim_{k\rightarrow\infty}
\frac{-\hat
{v}(k)}{k}=\lim_{k\rightarrow\infty}\frac{\mathbb{E} [\int_{0}^{T}-\mathit{V}(t,\widehat{\Gamma}_{t}(k))\,dt ]}{k}
\\
&\leq& \lim_{k\rightarrow\infty}\frac{\mathbb{E} [\int_{0}^{T}-\mathit{U}(t,\epsilon\tilde{w}_{t})\,dt ]}{k}+\lim_{k\rightarrow\infty}
\frac{ {\langle}\epsilon\tilde{w},\widehat{\Gamma}(k) {\rangle}}{k}.
\end{eqnarray*}
Now, recall that for each fixed $(x,z)\in\mathcal{H}$, there exists a
constant $\lambda(x,z)>0$ such that $\tilde{w}_{t}\in\widetilde
{\mathcal
{A}}(\frac{\bar{p}}{\lambda}x,\frac{\bar{p}}{\lambda}z)$, and by the
definition of $\widetilde{\mathfrak{Y}}(k)$, we can see the second term
above satisfies
\[
\lim_{k\rightarrow\infty}\frac{ {\langle}\epsilon\tilde
{w},\widehat{\Gamma}(k) {\rangle}}{k}\leq\lim_{k\rightarrow\infty}
\frac
{\epsilon
({\bar{p}}/{\lambda})k}{k}=\epsilon\frac{\bar{p}}{\lambda
}.
\]

As for the first term, we claim that $\mathbb{E} [\int_{0}^{T}-\mathit{U}(t,\epsilon\tilde{w}_{t})\,dt ]<\infty$ for each
fixed $\epsilon$ small enough. Without loss of generality, it is enough
to consider that $\epsilon<\bar{x}$, and we can apply Lemma \ref{Tech} again. Since there exists a constant $x_{0}$ such that for all
$t\in[0,T]$,
\[
\mathit{U}\biggl(t,\frac{\epsilon}{\bar{x}} x\biggr)>\biggl(\frac{\epsilon}{\bar
{x}}
\biggr)^{-{\gamma}/(1-\gamma)}\mathit{U}(t,x)\qquad\mbox{for } 0<x\leq x_{0},
\]
we will have
\begin{eqnarray*}
&&\mathbb{E} \biggl[\int_{0}^{T}-\mathit{U}(t,
\epsilon\tilde {w}_{t})\,dt \biggr]
\\
&&\qquad =\mathbb{E} \biggl[\int_{0}^{T}-\mathit{U}(t,
\epsilon\tilde {w}_{t})\mathbf {1}_{\{\bar{x}\tilde{w}_{t}> x_{0}\}}\,dt \biggr]+\mathbb{E}
\biggl[\int_{0}^{T}-\mathit{U}(t,\epsilon
\tilde{w}_{t})\mathbf{1}_{\{\bar
{x}\tilde
{w}_{t}\leq x_{0}\}}\,dt \biggr]
\\
&&\qquad \leq\mathbb{E} \biggl[\int_{0}^{T}-\mathit{U}(t,
\epsilon\tilde {w}_{t})\mathbf{1}_{\{\bar{x}\tilde{w}_{t}> x_{0}\}}\,dt \biggr]+\biggl(
\frac
{\epsilon}{\bar{x}}\biggr)^{-{\gamma}/(1-\gamma)}\mathbb{E} \biggl[\int
_{0}^{T}-\mathit{U}(t,\bar{x}\tilde{w}_{t})\,dt
\biggr]
\\
&&\qquad <\infty
\end{eqnarray*}
by the fact that $\tilde{w}_t\leq1$ for $t\in[0,T]$ and condition
(\ref{ass41}).

Hence, in conclusion,
\[
0\leq-\hat{v}'(\infty)= \lim_{k\rightarrow\infty}
\frac{-\hat
{v}(k)}{k}\leq\epsilon\frac{\bar{p}}{\lambda},
\]
and consequently, we have $-\hat{v}'(\infty)=0$ by letting $\epsilon$
go to $0$.
\end{pf}

%
\begin{lemma}\label{lemma15}
Under assumptions of Theorem \ref{main-2}, for any $(x,z)\in
\mathcal
{H}$, suppose $k$ satisfies $1=-\hat{v}'(k)$ where $\hat{v}(k)$ is the
value\vspace*{1pt} function of the auxiliary dual optimization\vspace*{2pt} problem (\ref{hat-v}). Then $\tilde{c}_{t}^{\ast}(x,z)\triangleq\mathit
{I}(t,\widehat{\Gamma}_{t}(k))$ is the unique (in the sense of $=$ under $\overline
{\mathbb
{P}}$ in $\mathbb{L}_{+}^{0}$) optimal\vspace*{1pt} solution to problem (\ref{tilde-u}). Moreover we have $\tilde{c}_{t}^{\ast}(x,z)>0, \mathbb
{P}$-a.s. for all $t\in[0,T]$.
\end{lemma}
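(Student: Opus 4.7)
The strategy is to produce $\tilde{c}^{\ast}$ as the primal partner of $\hat{\Gamma}(k)$ via the inverse marginal utility relation, and then to verify the budget constraint and optimality through the Fenchel inequality. I would organise the argument in the following four steps.

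First, I would justify that the equation $1=-\hat{v}'(k)$ admits a (unique) solution. Lemma \ref{lemma14} asserts that $\hat{v}$ is continuously differentiable on $(0,\infty)$, while Lemma \ref{newaddlemma} gives $-\hat{v}'(0+)=\infty$ and $-\hat{v}'(\infty)=0$; since $\hat{v}$ is convex, $-\hat{v}'$ is decreasing, so by the intermediate value theorem there is a unique $k=k(x,z)\in(0,\infty)$ with $-\hat{v}'(k)=1$. For this $k$, Lemma \ref{lemma13} provides a minimiser $\hat{\Gamma}(k)\in\widetilde{\mathfrak{Y}}(k)$ with $\hat{\Gamma}_{t}(k)>0$ for all $t\in[0,T]$, and Lemma \ref{lemma14} yields the identity
$$\big{\langle}\hat{\Gamma}(k),\mathit{I}(\cdot,\hat{\Gamma}_{\cdot}(k))\big{\rangle}=-k\hat{v}'(k)=k.$$

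Second, I would verify that $\tilde{c}^{\ast}_{t}\triangleq \mathit{I}(t,\hat{\Gamma}_{t}(k))$ belongs to $\widetilde{\mathcal{A}}(x,z)$. The strict positivity and a.s.\ finiteness of $\hat{\Gamma}(k)$, together with $\mathit{I}(t,\cdot)>0$ on $(0,\infty)$, already give $\tilde{c}^{\ast}_{t}>0$ a.s.\ for every $t\in[0,T]$. The variational inequality from Lemma \ref{lemma13}, evaluated at any $\Gamma\in\widetilde{\mathfrak{Y}}(k)$, reads
$$\big{\langle}\Gamma,\tilde{c}^{\ast}\big{\rangle}\;\leq\;\big{\langle}\hat{\Gamma}(k),\tilde{c}^{\ast}\big{\rangle}\;=\;k.$$
Using the scaling identity $\widetilde{\mathfrak{Y}}(k)=k\,\widetilde{\mathfrak{Y}}(1)$ (an immediate consequence of $\widetilde{\mathcal{Y}}(\lambda y,\lambda r)=\lambda\widetilde{\mathcal{Y}}(y,r)$ and closure under convergence in measure), the inequality above is equivalent to $\big{\langle}\Gamma,\tilde{c}^{\ast}\big{\rangle}\leq 1$ for every $\Gamma\in\widetilde{\mathfrak{Y}}(1)$, and the bipolar identity $(\ref{eqnbipolar2})$ then forces $\tilde{c}^{\ast}\in\widetilde{\mathfrak{Y}}(1)^{\circ}=\widetilde{\mathcal{A}}(x,z)$.

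Third, I would derive optimality and uniqueness. For any $\tilde{c}\in\widetilde{\mathcal{A}}(x,z)$ the Fenchel inequality $\mathit{U}(t,\tilde{c}_{t})\leq\mathit{V}(t,\hat{\Gamma}_{t}(k))+\tilde{c}_{t}\hat{\Gamma}_{t}(k)$ gives, after taking $\mathbb{E}\int_{0}^{T}\cdot\,dt$ and using $\hat{\Gamma}(k)\in\widetilde{\mathfrak{Y}}(k)$,
$$\mathbb{E}\Big[\int_{0}^{T}\mathit{U}(t,\tilde{c}_{t})dt\Big]\;\leq\;\hat{v}(k)+\big{\langle}\tilde{c},\hat{\Gamma}(k)\big{\rangle}\;\leq\;\hat{v}(k)+k.$$
For the candidate $\tilde{c}^{\ast}$, the Fenchel \emph{equality} $\mathit{U}(t,\tilde{c}^{\ast}_{t})=\mathit{V}(t,\hat{\Gamma}_{t}(k))+\tilde{c}^{\ast}_{t}\hat{\Gamma}_{t}(k)$ holds pointwise, and combined with Step 1 the bound $\hat{v}(k)+k$ is attained; thus $\tilde{c}^{\ast}$ is optimal and $\tilde{u}(x,z)=\hat{v}(k)+k$. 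Uniqueness follows from the strict concavity of $\mathit{U}(t,\cdot)$ and the convexity of $\widetilde{\mathcal{A}}(x,z)$: two distinct optimisers would yield a strictly larger value at their midpoint, a contradiction.

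The step requiring the most care is Step 2, where one must translate the variational inequality of Lemma \ref{lemma13}—phrased over the bipolar closure $\widetilde{\mathfrak{Y}}(k)$ rather than over individual $\widetilde{\mathcal{Y}}(y,r)$—into genuine membership in $\widetilde{\mathcal{A}}(x,z)$; this passes through the scaling $\widetilde{\mathfrak{Y}}(k)=k\widetilde{\mathfrak{Y}}(1)$ and the bipolar theorem on $\mathbb{L}_{+}^{0}$. A secondary point to monitor is that the Fenchel bound in Step 3 is not an $\infty-\infty$ indeterminate: the finiteness of $\hat{v}(k)$ from Lemma \ref{lemma13}, the bound $\big{\langle}\tilde{c},\hat{\Gamma}(k)\big{\rangle}\leq k$, and the finiteness of $\tilde{u}$ on $\mathcal{H}$ (Lemma \ref{lemma6}) ensure that every integral on both sides of the Fenchel inequality is well-defined.
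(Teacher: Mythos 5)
Your proposal is correct and follows essentially the same route as the paper's proof: it uses Lemma \ref{lemma14} for the identity $\big{\langle}\tilde{c}^{\ast},\hat{\Gamma}(k)\big{\rangle}=-k\hat{v}'(k)=k$, Lemma \ref{lemma13} plus the bipolar relation $(\ref{eqnbipolar2})$ to place $\tilde{c}^{\ast}$ in $\widetilde{\mathcal{A}}(x,z)$, the Fenchel inequality for optimality, strict concavity of $\mathit{U}$ for uniqueness, and finiteness of $\hat{\Gamma}(k)$ for strict positivity. Your extra remarks (existence of $k$ via Lemma \ref{newaddlemma}, the explicit scaling $\widetilde{\mathfrak{Y}}(k)=k\widetilde{\mathfrak{Y}}(1)$, and the well-definedness of the Fenchel bound) only make explicit what the paper leaves implicit.
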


\begin{pf}
Lemma \ref{lemma14} asserts that
\[
{\bigl\langle} \tilde{c}^{\ast}(x,z),\widehat{\Gamma}(k) {\bigr\rangle }=-k
\hat {v}'(k)=k.
\]

And for any $\Gamma\in\widetilde{\mathfrak{Y}}(k)$, by Lemma \ref{lemma13}, we have
\[
{\bigl\langle} \tilde{c}^{\ast}(x,z), \Gamma(k) {\bigr\rangle}\leq {
\bigl\langle} \tilde{c}^{\ast}(x,z), \widehat{\Gamma}(k) {\bigr\rangle }=k.
\]

Hence, we first get $\tilde{c}_{t}^{\ast}(x,z)\in\widetilde
{\mathcal
{A}}(x,z)$ by the Bipolar relationship (\ref{eqnbipolar2}).

Now, for any $\tilde{c}\in\widetilde{\mathcal{A}}(x,z)$, we have
\begin{eqnarray*}
{\bigl\langle} \tilde{c}, \widehat{\Gamma}(k) {\bigr\rangle}&\leq& k,
\\
\mathit{U}(t,\tilde{c}_{t})&\leq& \mathit{V}\bigl(t,\widehat{\Gamma}_{t}(k)\bigr)+\tilde{c}_{t}\widehat{\Gamma}_{t}(k)\qquad
\forall t\in[0,T].
\end{eqnarray*}

It follows that
%
\begin{eqnarray}
&& \mathbb{E} \biggl[\int_{0}^{T}\mathit{U}(t,
\tilde{c}_{t})\,dt \biggr]\nonumber
\\
&&\qquad \leq  \hat {v}(k)+k
= \mathbb{E} \biggl[\int
_{0}^{T} \bigl(\mathit{V}\bigl(t,\widehat{\Gamma}_{t}(k)\bigr)+\widehat{\Gamma}_{t}(k)\mathit{I}\bigl(t,\widehat{\Gamma}_{t}(k)\bigr) \bigr)\,dt \biggr]
\\
&&\qquad =  \mathbb{E} \biggl[\int_{0}^{T}\mathit{U}
\bigl(t,\mathit{I}\bigl(\widehat{\Gamma}_{t}(k)\bigr)\bigr)\,dt \biggr]=
\mathbb{E} \biggl[\int_{0}^{T}\mathit{U}\bigl(t,
\tilde {c}_{t}^{\ast}\bigr)\,dt \biggr],
\nonumber
\end{eqnarray}
which infers the optimality of $\tilde{c}^{\ast}$. The uniqueness of
the optimal solution follows from the strict concavity of the function
$\mathit{U}$.

Under assumptions of Theorem \ref{main-2}, for any pair $(x,z)\in
\mathcal{H}$, because $\widetilde{\mathfrak{Y}}(k)$ is convexly compact
and $\widehat{\Gamma}_{t}(k)$ is bounded in probability, we have the
optimal solution $\tilde{c}_{t}^{\ast}(x,z)>0, \mathbb{P}$-a.s. for
all\vspace*{1pt} $t\in[0,T]$ since $\widehat{\Gamma}_{t}(k)$ is bounded in probability
if and only if $\widehat{\Gamma}_{t}(k)$ is finite $\overline{\mathbb{P}}$-a.s.
and by definition, we know $\mathit{I}(t,x)>0$ for $x<\infty$.
\end{pf}

Let $(x,z)\in\operatorname{cl}\mathcal{H}$, the proof of Lemma \ref{lemma3} shows
that there exists $\tilde{c}\in\widetilde{\mathcal{A}}(x,z)$ such that
$\overline{\mathbb{P}}[\tilde{c}>0]>0$. Similar to the proof of Lemma $12$
of Hugonnier and Kramkov \cite{kram04}, we will have:

\begin{lemma}\label{lemma16}
Assume that conditions of Proposition \ref{Prop1} hold, and let
$(y^{n},r^{n})\in\mathcal{R}$ and $\Gamma^{n}\in\widetilde
{\mathcal
{Y}}(y^{n},r^{n})$, $n\geq1$, converge to $(y,r)\in\mathbb{R}^{2}$ and
$\Gamma\in\mathbb{L}_{+}^{0}$, respectively. If $\Gamma$ is a strictly
positive random variable, we have $(y,r)\in\mathcal{R}$ and $\Gamma
\in
\widetilde{\mathcal{Y}}(y,r)$.
\end{lemma}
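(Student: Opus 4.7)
The plan is to prove the claim in two steps, closely following standard closedness arguments for dual cones in utility maximization. First, by extracting an almost surely convergent subsequence and applying Fatou's lemma on the product space, I pass the budget inequality defining $\widetilde{\mathcal{Y}}$ to the limit; second, I exploit the strict positivity of $\Gamma$ together with Proposition $\ref{Prop1}(i)$ to rule out the boundary of $\mathrm{cl}(\mathcal{R})$.

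For the first step, since convergence in $\mathbb{L}_{+}^{0}$ is in measure $\bar{\mathbb{P}}$, I may assume without loss of generality that $\Gamma^{n}\to\Gamma$ $\bar{\mathbb{P}}$-a.s. For any fixed $(x,z)\in\mathcal{H}$ and $\tilde{c}\in\widetilde{\mathcal{A}}(x,z)$, the hypothesis $\Gamma^{n}\in\widetilde{\mathcal{Y}}(y^{n},r^{n})$ yields $\langle\tilde{c},\Gamma^{n}\rangle\leq xy^{n}-zr^{n}$, and Fatou's lemma gives
\begin{equation*}
\langle\tilde{c},\Gamma\rangle\leq\liminf_{n}\langle\tilde{c},\Gamma^{n}\rangle\leq\lim_{n}(xy^{n}-zr^{n})=xy-zr,
\end{equation*}
which is exactly the defining condition of $\widetilde{\mathcal{Y}}(y,r)$ once $(y,r)\in\mathcal{R}$ is established.

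For the second step, continuity of the defining linear constraints immediately places $(y,r)$ in $\mathrm{cl}(\mathcal{R})$, so by Lemma $\ref{lemL}$ it remains to exclude (i) $y=0$ and (ii) $r\in\{\underline{p}y,\bar{p}y\}$ with $y>0$. Case (i) forces $r=0$; testing the limit inequality against a strictly positive $\tilde{c}^{\star}\succ 0\in\widetilde{\mathcal{A}}(1,0)$, which exists by Proposition $\ref{Prop1}(i)$, together with $\Gamma\succ 0$ yields $0<\langle\tilde{c}^{\star},\Gamma\rangle\leq 0$, a contradiction. For case (ii) at $r=\bar{p}y$ (the other extremal ray being symmetric), I test against elements of the form $\rho=\tfrac{\lambda}{\bar{p}}\tilde{w}\in\widetilde{\mathcal{A}}(\bar{p}z+\lambda,z)$ with $z>0$ and $\lambda>0$, which arise from the construction in the proof of Proposition $\ref{Prop1}(i)$. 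Combining the resulting bound $\langle\tilde{w},\Gamma\rangle\leq\bar{p}y$ with the strict inequality $\langle\tilde{w},\Gamma'\rangle<\bar{p}$ for every $\Gamma'\in\widetilde{\mathcal{M}}$, guaranteed by Assumption $(\ref{ass5})$ via Lemma $\ref{lemma3}$, and using $\Gamma\succ 0$, I extract a contradiction with the strict positivity of $\Gamma$.

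The main obstacle is this boundary case $r=\bar{p}y$: Assumption $(\ref{ass5})$ only asserts that the supremum $\bar{p}$ is never attained inside $\widetilde{\mathcal{M}}$, yet the enlarged set $\widetilde{\mathcal{Y}}(y,r)$ may a priori contain limit elements at which this supremum is effectively achieved. Translating the strict positivity of $\Gamma$ on the product space into strict separation from the extremal ray requires combining the bipolar identity $(\ref{eqnbipolar2})$ with the convex compactness of $\widetilde{\mathcal{Y}}(y,r)$ from Lemma $\ref{lemma8}$, so that the forbidden concentration on supremum-achieving configurations for $\tilde{w}$ is genuinely ruled out for any $\bar{\mathbb{P}}$-a.s.\ strictly positive limit.
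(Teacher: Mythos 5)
Your first step is sound and matches the paper: pass to an a.s.\ convergent subsequence and use Fatou's lemma to obtain $\langle\tilde{c},\Gamma\rangle\leq xy-zr$ for every $(x,z)\in\mathcal{H}$ and $\tilde{c}\in\widetilde{\mathcal{A}}(x,z)$, which places $(y,r)$ in $\mathrm{cl}\,\mathcal{R}$ and, once $(y,r)\in\mathcal{R}$ is secured, in $\widetilde{\mathcal{Y}}(y,r)$ by Proposition~\ref{Prop1}(ii). The vertex case $(y,r)=(0,0)$ is also correctly disposed of by testing against a strictly positive element of $\widetilde{\mathcal{A}}(1,0)$.

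The extremal-ray case $r=\bar{p}y$ (and symmetrically $r=\underline{p}y$) is where your argument has a genuine gap. Your test elements $\rho=\tfrac{\lambda}{\bar{p}}\tilde{w}\in\widetilde{\mathcal{A}}(\bar{p}z+\lambda,z)$ all live over points in the \emph{open} set $\mathcal{H}$, and passing the limit inequality through them yields only $\langle\tilde{w},\Gamma\rangle\leq\bar{p}y$. That bound is \emph{not} a contradiction: nothing in the hypotheses, not even $\Gamma\succ 0$, forces $\langle\tilde{w},\Gamma\rangle<\bar{p}y$. The strict inequalities $\langle\tilde{w},\Gamma'\rangle<\bar{p}$ from Assumption~(\ref{ass5}) hold only on $\widetilde{\mathcal{M}}$, and $\Gamma$ is a limit in the much larger solid hull, so there is no reason that the strict gap from $\bar p$ survives. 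Your closing paragraph acknowledges exactly this issue but resolves it only by invoking ``the bipolar identity and convex compactness'' without an actual argument; convex compactness of $\widetilde{\mathcal{Y}}(y,r)$ does not rule out elements on which $\langle\tilde{w},\cdot\rangle$ comes arbitrarily close to the supremum, and the bipolar identity does not separate $\Gamma$ from the extremal ray.

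The paper closes the same gap by a cleaner device: from the proof of Lemma~\ref{lemma3} (which crucially relies on the non-replicability Assumption~(\ref{ass5})), for \emph{every} $(x,z)\in\mathrm{cl}\,\mathcal{H}$, including boundary points of the form $(\bar{p}z,z)$ with $z>0$, there exists $\tilde{c}\in\widetilde{\mathcal{A}}(x,z)$ with $\bar{\mathbb{P}}[\tilde{c}\succ 0]>0$. Since $\Gamma\succ 0$, this gives $0<\langle\tilde{c},\Gamma\rangle$, and Fatou (combined with $\widetilde{\mathcal{A}}(x,z)\subset\widetilde{\mathcal{A}}(x+\lambda,z)$ for $\lambda>0$, which extends the budget inequality from $\mathcal{H}$ to $\mathrm{cl}\,\mathcal{H}$) gives $\langle\tilde{c},\Gamma\rangle\leq xy-zr$; hence $xy-zr>0$ for all $(x,z)\in\mathrm{cl}\,\mathcal{H}$, which is exactly membership in the relative interior $\mathcal{R}$, and in particular rules out $r=\bar{p}y$ and $r=\underline{p}y$. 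The missing idea in your proof is therefore the use of a nontrivial test element sitting over a \emph{boundary} point of $\mathrm{cl}\,\mathcal{H}$; your choice of $\rho$ sits over interior points and cannot see the boundary.
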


The next lemma is the last result we need to prepare to proceed to the
proof of Theorem \ref{main-2}:

\begin{lemma}\label{lemequc}
Under Assumption \ref{ass5}, we have
\[
\overline{\mathbb{P}}\bigl[\tilde{c}^{\ast}(x_1,z_1)
\neq\tilde{c}^{\ast
}(x_2,z_2)\bigr]>0,
\]
for two different points $(x_i,z_i)\in\mathcal{H}$, $i=1,2$.
\end{lemma}

\begin{pf}
Assume that there exist two distinct pairs $(x_1,z_1)$ and $(x_2,z_2)$
in $\mathcal{H}$ and
\[
\overline{\mathbb{P}}\bigl[\tilde{c}^{\ast}(x_1,z_1)
\neq\tilde{c}^{\ast
}(x_2,z_2)\bigr]=0.
\]
The definition of set $\widetilde{A}(x_1,z_1)$ implies that
\[
\bigl\langle\tilde{c}^{\ast}(x_2,z_2), \Gamma
\bigr\rangle\leq x_1-z_1 \langle\tilde{w}, \Gamma
\rangle\qquad \forall\Gamma \in \widetilde{\mathcal{M}}.
\]
However, we also know that $\tilde{c}^{\ast}(x_2, z_2)\in\widetilde
{A}(x_2, z_2)$, which deduces that
\[
x_2-z_2 \langle\tilde{w}, \Gamma \rangle\leq
x_1-z_1 \langle\tilde{w}, \Gamma \rangle\qquad \forall\Gamma
\in\widetilde {\mathcal{M}}.
\]
On the other hand, by symmetry and replacing $\tilde{c}^{\ast
}(x_2,z_2)$ by $\tilde{c}^{\ast}(x_1,z_1)$, we can conclude that
\[
x_1-z_1 \langle\tilde{w}, \Gamma \rangle\leq
x_2-z_2 \langle \tilde{w}, \Gamma \rangle\qquad \forall
\Gamma\in\widetilde {\mathcal {M}}.
\]
Therefore we must have
\[
x_1-z_1 \langle\tilde{w}, \Gamma \rangle=
x_2-z_2 \langle \tilde{w}, \Gamma \rangle\qquad \forall
\Gamma\in\widetilde {\mathcal {M}},
\]
which is a contradiction to Assumption \ref{ass5} since we can obtain
a constant $K=\frac{x_1-x_2}{z_1-z_2}$ and
\[
\mathbb{E}^{\mathbb{Q}}[\mathcal{E}]= \langle w, Y \rangle = \langle\tilde{w},
\Gamma \rangle=\frac{x_1-x_2}{z_1-z_2}\qquad \forall\mathbb{Q}\in\mathcal{M}.
\]\upqed
\end{pf}

\begin{pf*}{Proof of Theorem \ref{main-2}}
We first show that the dual value function $\tilde{v}(y,z)$ is
continuously differentiable on $\mathcal{R}$. Theorems $4.1.1$ and
$4.1.2$ in Hiriart-Urruty and Lemar\'{e}chal \cite{MR1865628} give the
equivalence between the above statement and the fact that the value
function $\tilde{u}(x,z)$ is strictly concave on $\mathcal{H}$, since
$\mathit{U}$ is a strictly concave function. Showing that the value
function is strictly concave is equivalent to showing that for any two
distinct points $(x_{i},z_{i})\in\mathcal{H}$, $i=1,2$, the optimal
consumption policies are different, that is,
\[
\overline{\mathbb{P}}\bigl[\tilde{c}^{\ast}(x_{1},z_{1})
\neq\tilde{c}^{\ast
}(x_{2},z_{2})\bigr]>0,
\]
which is the consequence of Lemma \ref{lemequc}.

To continue the remaining part, it amounts to show that assertion
(ii) holds. Recall that $\widehat{\Gamma}(k)$ is the optimal solution of
the auxiliary dual problem (\ref{hat-v}) such that
\[
\widehat{\Gamma}_{t}(k)=\mathit{U}'\bigl(t,
\tilde{c}_{t}^{\ast}(x,z)\bigr)\qquad \forall t\in[0,T], k= {\bigl
\langle} \tilde{c}^{\ast}(x,z),\widehat{\Gamma}(k) {\bigr\rangle}.
\]
As\vspace*{2pt} $\widetilde{\mathfrak{Y}}(k)$ is closed under convergence in measure
$\overline{\mathbb{P}}$, there exists a sequence $(y^{n},r^{n})\in
k\mathfrak
{R}(x,z)$ such that $\Gamma^{n}\in\widetilde{\mathcal
{Y}}(y^{n},r^{n})$, and $\Gamma^{n}$ converges to $\widehat{\Gamma}(k)$
$\overline{\mathcal{P}}$-a.s. by passing to a subsequence if necessary.
Since the set $k\mathfrak{R}(x,z)$ is bounded, there exists a further
subsequence $(y^{n},r^{n})$ converging\vspace*{2pt} to $(y,r)\in\mathbb{R}^{2}$. By
passing to this further subsequence, as we have shown $\overline{\mathbb
{P}}[\widehat{\Gamma}(k) >0]=1$, it follows that $(y,r)\in k\mathfrak
{R}(x,z)$ such that $\widehat{\Gamma}(k)\in\widetilde{\mathcal{Y}}(y,r)$
due to Lemma \ref{lemma16}. Moreover, for this pair $(y,r)\in
\mathcal
{R}$, by Fatou's lemma and Proposition \ref{Prop1}, we have the
equality that
%
\begin{equation}
\label{Th2-1} xy-zr=k= {\bigl\langle} \tilde{c}^{\ast}(x,z),\widehat{\Gamma}(k) {\bigr\rangle}.
\end{equation}
The corresponding optimizer $\Gamma_{t}^{\ast}(y,r)$ of (\ref{tilde-v}) then verifies
%
\begin{equation}
\label{Th2-2} \Gamma_{t}^{\ast}(y,r)=\widehat{\Gamma}_{t}(k)=\mathit{U}'\bigl(t,\tilde
{c}^{\ast
}(x,z)\bigr),\qquad t\in[0,T].
\end{equation}
To see this, on one hand, we have $\widehat{\Gamma}(k)\in\widetilde
{\mathcal
{Y}}(y,r)$, hence
\begin{eqnarray*}
\mathbb{E} \biggl[\int_{0}^{T}\mathit{V}\bigl(t,
\Gamma_{t}^{\ast
}(y,r)\bigr)\,dt \biggr]&=&\inf
_{\Gamma\in\widetilde{\mathcal{Y}}(y,r)}\mathbb{E} \biggl[\int_{0}^{T}
\mathit{V}\bigl(t,\Gamma_{t}(y,r)\bigr)\,dt \biggr]
\\
&\leq&\mathbb{E} \biggl[
\int_{0}^{T}\mathit{V}\bigl(t,\widehat{\Gamma}_{t}(k)\bigr)\,dt \biggr].
\end{eqnarray*}
On the other hand, we have
\begin{eqnarray*}
\mathbb{E} \biggl[\int_{0}^{T}\mathit{V}\bigl(t,
\widehat{\Gamma}_{t}(y,r)\bigr)\,dt \biggr]&= & \inf_{\Gamma\in\widetilde{\mathfrak{Y}}(k)}
\mathbb{E} \biggl[\int_{0}^{T}\mathit{V}\bigl(t,
\Gamma_{t}(y,r)\bigr)\,dt \biggr]
\\
&\leq& \inf_{\Gamma\in\widetilde{\mathcal{Y}}(y,r)}\mathbb{E} \biggl[\int_{0}^{T}
\mathit{V}\bigl(t,\Gamma_{t}(y,r)\bigr)\,dt \biggr]
\\
&=& \mathbb{E} \biggl[
\int_{0}^{T}\mathit{V}\bigl(t,\Gamma_{t}^{\ast}(y,r)
\bigr)\,dt \biggr].
\end{eqnarray*}

By the equality
\[
\mathit{U}\bigl(t,\tilde{c}_{t}^{\ast}(x,z)\bigr)=\mathit{V}
\bigl(t,\widehat{\Gamma}_{t}(k)\bigr)+\tilde{c}_{t}^{\ast}(x,z)
\widehat{\Gamma}_{t}(k),\qquad t\in [0,T],
\]
we can conclude $(y,r)\in\partial\tilde{u}(x,z)$ by Theorem $23.5$ of
Rockafellar \cite{Rock}, since
%
\begin{equation}
\label{Th2-3} \tilde{u}(x,z)=\tilde{v}(y,z)+xy-zr.
\end{equation}
In particular, it implies that
%
\begin{equation}
\label{Th2-4} \partial\tilde{u}(x,z)\cap\mathcal{R}\neq\varnothing.
\end{equation}

Similar to the proof of Theorem $2$ in Hugonnier and Kramkov \cite
{kram04}, it is easy to show that
\[
\partial\tilde{u}(x,z)\subset\mathcal{R}.
\]
For any $(y,r)\in\partial\tilde{u}(x,z)$, there exists a sequence
$(y^{n},r^{n})\in\partial\tilde{u}(x,z)\cap\mathcal{R}$ converging to
$(y,r)$ by (\ref{Th2-4}) and the fact that $\partial\tilde{u}(x,z)$
is closed and convex. Since $\mathit{U}'(\cdot,\tilde{c}_{\cdot
}^{\ast
}(x,z))$ is strictly positive and $\mathit{U}'(\cdot,\tilde
{c}_{\cdot
}^{\ast}(x,z))\in\widetilde{\mathcal{Y}}(y,r)$, Lemma \ref{lemma16}
infers that $(y,r)\in\mathcal{R}$.

Conversely, for any $(y,r)\in\partial\tilde{u}(x,z)$, we have
\begin{eqnarray*}
&& \mathbb{E} \biggl[\int_{0}^{T} \bigl|\mathit{V}
\bigl(t,\Gamma_{t}^{\ast
}(y,r)\bigr)+\tilde{c}_{t}^{\ast}(x,z)
\Gamma_{t}^{\ast}(y,r)-\mathit {U}\bigl(t,\tilde{c}_{t}^{\ast}(x,z)
\bigr) \bigr|\,dt \biggr]
\\
&&\qquad =  \mathbb{E} \biggl[ \biggl(\int_{0}^{T}
\mathit{V}\bigl(t,\Gamma_{t}^{\ast
}(y,r)\bigr)+
\tilde{c}_{t}^{\ast}(x,z)\Gamma_{t}^{\ast}(y,r)-
\mathit {U}\bigl(t,\tilde{c}_{t}^{\ast}(x,z)\bigr)\,dt \biggr)
\biggr]
\\
&&\qquad \leq \tilde{v}(y,r)+xy-zr-\tilde{u}(x,z)=0,
\end{eqnarray*}
which infers (\ref{Th2-1}) and (\ref{Th2-2}).
\end{pf*}

\section*{Acknowledgement}
I sincerely thank my advisor Mihai S\^{\i
}rbu for numerous helpful discussions on the topic of this paper as
well as his generous support and kind encouragement during my research.




\printaddresses
\end{document}